\newcommand{\ignore}[1]{\relax}
\newcommand{\set}[1]{\{ #1 \}}
\newcommand{\cL}{\mathcal{L}}
\newcommand{\Ostar}[1]{{O^\star}\!\left( {#1} \right)}
\newcommand{\Oo}[1]{#1} 
\newcommand{\Ostart}[1]{{O^\star}( {#1} )}
\newcommand{\Thstar}[1]{{\Theta^\star}\left( {#1} \right)}
\renewcommand{\sl}[1][]{ |L_3#1| }
\renewcommand{\ss}[1][]{ |S_3#1| }
\newcommand{\sr}[1][]{ |R_3#1| }
\newcommand{\sst}[1][]{ |S_2#1| }
\newcommand{\sv}[1][]{ |V#1| }
\renewcommand{\wr}{w_r}
\newcommand{\ws}{w_s}
\newcommand{\wst}{w_{s,2}}
\newcommand{\wb}{w_b}
\newcommand{\wc}{w_c}
\newcommand{\wdd}{w_d}
\newcommand{\one}{ \mathbb{1} }
\def\one(#1){ \mathbb{1} \! \left(#1\right) }
\newcommand{\mytab}{\hspace{0.5cm}}
\newcommand{\e}{\epsilon}
\newcommand{\redsec}[1]{\medskip\paragraph{\textbf{#1}}}
\newcommand{\nds}{\textsc{\#Dominating Set}\xspace}
\newcommand{\DS}{\textsc{DS}\xspace}
\newcommand{\NDS}{\textsc{\#DS}\xspace}
\newcommand{\mcsp}{\textsc{Max 2-CSP}\xspace}
\newcommand{\mrcsp}{\textsc{Max $(r,2)$-CSP}\xspace}
\newcommand{\mtcsp}{\textsc{Max $(2,2)$-CSP}\xspace}
\newcommand{\card}[1]{\left| #1 \right|}
\newcommand{\ceil}[1]{\left\lceil #1 \right\rceil}
\newcommand{\mc}{Measure and Conquer\xspace}
\newcommand{\smc}{Separate, Measure and Conquer\xspace}
\newcommand{\rt}[2]{{#1}^{#2}}
\newcommand{\Red}[1]{Reduction #1\xspace}	
\newcommand{\Reds}[1]{Reductions #1\xspace}
\renewcommand\footnotetextcopyrightpermission[1]{}  
\begin{document}

\title[Separate, Measure and Conquer]%
 {Separate, Measure and Conquer:
	 Faster Polynomial-Space Algorithms
	 for Max 2-CSP and Counting Dominating Sets}

\author{Serge Gaspers}
\orcid{0000-0002-6947-9238}
\affiliation{%
	\institution{UNSW Sydney}
	\city{Sydney}
	\state{NSW}
	\postcode{2052}
	\country{Australia}}
\affiliation{%
	\institution{Data61, CSIRO}
	\city{Sydney}
	\state{NSW}
	\postcode{2052}
	\country{Australia}}

\author{Gregory B. Sorkin}
\affiliation{%
 \institution{London School of Economics}
 \streetaddress{Houghton Street}
 \city{London}
 \postcode{WC2A 2AE}
 \country{UK}}

\tikzstyle{every node}=[minimum size=2mm]
\tikzstyle{vertex}=[minimum size=2mm,circle,fill=black,inner sep=0mm,draw]
\tikzstyle{mydashed}=[dash pattern=on 2pt off 3pt]

\citestyle{acmnumeric}
\setcitestyle{numbers,sort&compress}

\begin{abstract}
		We show a method resulting in the improvement of
		several polynomial-space, ex\-po\-nen\-tial-time algorithms.
		The method capitalizes on the existence of small balanced separators for
		sparse graphs, which can be exploited for branching to disconnect
		an instance into independent components.
		For this algorithm design paradigm, the challenge to date has been to obtain improvements in
		worst-case analyses of algorithms, compared with algorithms that are analyzed
		with advanced methods, notably \mc.
		Our contribution is the design of a general method to integrate the advantage from
		the separator-branching into \mc, for a more precise and improved running time analysis.
		
		We illustrate the method with improved algorithms for \mrcsp and \nds.
An instance of the problem \mrcsp, 
or simply \mcsp, 
is parameterized by the domain size $r$ (often~2),
the number of variables $n$ (vertices in the constraint graph $G$),
and the number of constraints $m$ (edges in $G$).
When $G$ is cubic,
and omitting sub-exponential terms here for clarity,
we give an algorithm running in time
$\rt r {(1/5) \, n} = \rt r {(2/15) \, m}$;
the previous best was
$\rt r {(1/4) \, n} = \rt r {(1/6) \, m}$.
By known results, this improvement for the cubic case results in
an algorithm running in time
$\rt r {(9/50) \, m}$
for general instances;
the previous best was
$\rt r {(19/100) \, m}$.
We show that the analysis of the earlier algorithm was tight:
our improvement is in the algorithm, not just the analysis.
The same running time improvements hold for \textsc{Max Cut}, an important
special case of \mcsp,
and for Polynomial and Ring CSP,
generalizations encompassing graph bisection, the Ising model,
and counting.

We also give faster algorithms for \nds, 
counting the dominating sets of every cardinality $0,\ldots,n$
for a graph $G$ of order $n$.
For cubic graphs, our algorithm runs in time
$\rt 3 {(1/5) \, n}$;
the previous best was
$\rt 2 {(1/2) \, n}$.
For general graphs, we give an unrelated algorithm running in time
$\rt {1.5183} n$;
the previous best was
$\rt {1.5673} n$.

The previous best algorithms for these problems all used local
transformations and were analyzed by the \mc method.
Our new algorithms capitalize on the existence of small balanced
separators for cubic graphs --- a non-local property ---
and the ability to tailor the local algorithms
always to ``pivot'' on a vertex in the separator.
The new algorithms perform much as the old ones until the separator is empty,
at which point they gain because
the remaining vertices are split into two independent
problem instances that can be solved recursively.
It is likely that such algorithms can be effective for other problems too,
and we present their design and analysis in a general framework.
\end{abstract}

 \begin{CCSXML}
	<ccs2012>
	<concept>
	<concept_id>10003752.10003809.10003635</concept_id>
	<concept_desc>Theory of computation~Graph algorithms analysis</concept_desc>
	<concept_significance>500</concept_significance>
	</concept>
	<concept>
	<concept_id>10003752.10003809.10010052</concept_id>
	<concept_desc>Theory of computation~Parameterized complexity and exact algorithms</concept_desc>
	<concept_significance>500</concept_significance>
	</concept>
	<concept>
	<concept_id>10003752.10003809.10011254.10011255</concept_id>
	<concept_desc>Theory of computation~Backtracking</concept_desc>
	<concept_significance>300</concept_significance>
	</concept>
	</ccs2012>
\end{CCSXML}

\ccsdesc[500]{Theory of computation~Graph algorithms analysis}
\ccsdesc[500]{Theory of computation~Parameterized complexity and exact algorithms}
\ccsdesc[300]{Theory of computation~Backtracking}

\keywords{Measure \& Conquer, graph separators, Max 2-CSP, Counting dominating sets}

\thanks{
	Authors' addresses: S.~Gaspers, School of Computer Science and Engineering,
	Building K17, UNSW Sydney, Sydney NSW 2052, Australia;
	G.~B.~Sorkin, Department of Mathematics, London School of Economics, Houghton Street, London WC2A 2AE, UK.
}

\maketitle

\section{Introduction}

\subsection{Background and intuition of the method}
\label{sec:intuition}

\mc \cite{FominGK09} has become the prevalent method to analyze branching algorithms over the last few years.
Several textbooks \cite{FominK10,DowneyF13} and PhD theses \cite{BinkeleRaible10,Cochefert14,Couturier12,Gaspers10,Grandoni04,Liedloff07,Rooij11,Stepanov08,Wahlstrom07} give an extensive treatment of the method and examples of algorithms analyzed with \mc.
Many of these algorithms first branch on vertices of high degree before handling graphs with bounded maximum degree.
Improvements in running times are often due to faster computations on small-degree graphs, and the \mc method then allows to propagate this speed-up to higher degrees, for an overall improved running time.

Some general methods to obtain improvements in the running time lead to exponential-space algorithms.
Among them are methods based on memoization (see, e.g., \cite{Robson86,Grandoni06}) and dynamic programming on tree decompositions or path decompositions (see, e.g., \cite{FominH06,FominGSS09}).
In this paper, we aim to obtain comparable improvements as when performing dynamic programming on tree decompositions or path decompositions, but without using exponential space.
Our method can be seen as exploiting the separation properties of path/tree decompositions in a branching process instead of dynamic programming.
The design of fast exponential-time algorithms with restricted space usage has attracted attention for many problems, such as Steiner Tree \cite{Nederlof13,FominGKLS13}, Knapsack \cite{LokshtanovN10}, Hamiltonian Path \cite{Karp82}, and Coloring \cite{BjorklundHK09}. Further results and open problems surrounding the trade-off between time and space complexities are discussed in a survey by Woeginger \cite{Woeginger04}.

The use of graph separators for divide-and-conquer algorithms dates back to the 1970s \cite{LiptonT77}.
For classes of instances with sublinear separators,
including those described by
planar graphs, this
often gives subexponential- or polynomial-time algorithms.
%
It is natural to design a branching strategy that strives to disconnect an instance into components, even when no sublinear separators are known.
While this has successfully been done experimentally (see, e.g., \cite{BiereS06,DechterM07,FreuderQ85,GottlobLS00,LiB04}), we are not aware of worst-case analyses of branching algorithms that are based on linear sized separators.\footnote{Fomin et al. \cite{FominGLS12} exploit small balanced structures of a solution that is to be found in a graph, whereas our separators are separators for the input graph.}%

Our algorithms exploit small separators,
specifically, balanced separators of size about $n/6$
for cubic graphs of order $n$.
The existence of such separators has been known since 2001 at least,
they have been used in pathwidth-based dynamic programming algorithms
using exponential time and exponential space,
and
it is natural to try to exploit them
for algorithms running in exponential time but polynomial space.

\medskip

\noindent
\textbf{Outline of the paper.}
(A more detailed outline follows as Section~\ref{sec:outline}.)
In the rest of the Introduction we introduce the separation results and discuss applying them to \mrcsp, 
or simply \mcsp.
In \mcsp, the input is a graph, together with a score function for each vertex and each edge whose values depend on the color of the vertex or the endpoints of the edge, respectively. The task is to color the vertices with colors from $[r]=\{1,\dots,r\}$ such that the sum of all scores is maximized. An important special case is \textsc{Max Cut} where $r=2$ and the score functions always return 0, except when it is a score function of an edge and the two endpoints of the edge do not have the same color, in which case the score function returns 1.
We will show how applying the separation results naively, using the separator alone, gives an algorithm worse than existing ones
for \mcsp%
; perhaps this is why such algorithms have not already been developed.
Then, we sketch how, under optimistic assumptions,
using small separators \emph{in conjunction with}
existing \mc analyses gives an improvement. 
In the body of the paper
we turn this into a rigorous analysis
of a faster algorithm for \mcsp.
We then describe a general framework,
``\smc'', to design and analyze
polynomial-space, exponential-time algorithms based on
small balanced separators.
We use it to obtain faster algorithms for \nds,
calling upon the method in significantly greater generality
than was needed for \mcsp.
We conclude with some thoughts on the method,
including the likely limitation that the (polynomial-space)
algorithms based on it
will be slower than exponential-space dynamic programming algorithms
based on path decompositions.

\medskip

We now introduce the main separation results we will use.
The \emph{bisection width} of a graph is the minimum number of edges between
vertices belonging to different parts,
over all partitions of the vertex sets into two parts whose size differs by at most one.
Monien and Preis~\cite{MonienP01,MonienP06} proved
the following upper bound on the bisection width of cubic graphs.

\begin{theorem}[\cite{MonienP06}]
For any $\epsilon > 0$ there is a value $n_\epsilon$ such that the bisection width of any cubic graph $G = (V ,E)$ with $|V |>n_\epsilon$ is at most $(\frac{1}{6} + \epsilon) |V |$.
\end{theorem}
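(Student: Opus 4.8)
The plan is to prove the bound constructively, by local search: start from an arbitrary balanced partition and repeatedly apply balance-preserving moves that strictly reduce the cut, until none remains, and then bound the cut of the resulting locally optimal bisection. Since the number of crossing edges is a non-negative integer bounded by $|E|=\tfrac32|V|$ and strictly decreases at each step, the process halts after at most $|E|$ moves, so it suffices to show that any bisection admitting no improving move has cut at most $(\tfrac16+\epsilon)|V|$. Write $(A,B)$ for the current balanced partition (using sizes $\lfloor n/2\rfloor$ and $\lceil n/2\rceil$ when $n$ is odd), let $C$ be the number of crossing edges, and for each vertex $v$ let $c(v)\in\{0,1,2,3\}$ count its neighbours on the opposite side, so $2C=\sum_v c(v)$.

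First I would confirm that single-vertex swaps are too weak to reach $1/6$. Moving a vertex $u$ across changes the cut by $3-2c(u)$, so exchanging non-adjacent $u\in A$ and $w\in B$ changes it by $6-2(c(u)+c(w))$; local optimality under such swaps gives only $c(u)+c(w)\le 3$ for all non-adjacent pairs, hence $C\le n/2$. The target constant therefore forces a more global move, and the natural device is a \emph{helpful set}: a set $S\subseteq A$ (symmetrically $\subseteq B$) whose migration to the other side decreases the cut, i.e. one with $e(S,B)>e(S,A\setminus S)$, more edges to the opposite side than to the rest of its own side. A balanced improving move is then a pair of equal-size helpful sets, one on each side, moved simultaneously so that $|A|=|B|$ is preserved while $C$ drops.

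The heart of the argument, and the step I expect to be the main obstacle, is the existence lemma: whenever $C>(\tfrac16+\epsilon)n$, at least one side contains a helpful set of bounded size. I would attack this by an averaging/extremal argument tailored to degree $3$: assign to each vertex a local gain potential measuring how much moving it, together with a few neighbours needed to absorb newly created crossing edges, changes the cut, and show that the aggregate potential on a side is forced to be positive once the crossing-edge density exceeds the threshold. The difficulty is that positive \emph{average} gain does not directly hand over a single set of positive \emph{net} gain, because moving a set creates new boundary edges that can cancel the benefit; extracting an actual helpful set, and bounding its size so that the induced moves stay controlled, is precisely the delicate combinatorial analysis of how crossing edges propagate that Monien--Preis carry out. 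The constant $1/6$ is pinned down by the worst-case local configurations available in a cubic graph, and it is known to be essentially tight, which is why no cruder counting suffices.

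Finally I would handle the balance constraint and the lower-order terms. Moving a helpful set from $A$ unbalances the partition, so it must be matched with an equal-size helpful set from $B$; guaranteeing helpful sets of compatible sizes on both sides, and repairing any residual imbalance with cheap single-vertex moves, is where the additive $\epsilon n$ slack and the threshold $n(\epsilon)$ are spent, since small graphs and boundary effects cannot be amortised and the clean $1/6$ emerges only asymptotically. Combining local optimality with the existence lemma then yields a bisection of cut at most $(\tfrac16+\epsilon)|V|$, completing the proof.
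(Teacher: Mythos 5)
This theorem is not proved in the paper at all: it is quoted verbatim from Monien and Preis \cite{MonienP06}, and the paper only uses its statement (via the separator lemma that follows it) to extract a balanced separator of size roughly $n/6$. So there is no in-paper proof to compare against; the only question is whether your sketch would stand on its own as a proof of the cited result.

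As written, it would not. You have correctly identified the actual mechanism of the Monien--Preis argument --- iterated improvement of a balanced bisection by migrating ``helpful sets'', i.e.\ sets $S\subseteq A$ with $e(S,B)>e(S,A\setminus S)$, paired across the two sides to preserve balance --- but the entire quantitative content of the theorem lives in the existence lemma you explicitly defer: that any bisection of a cubic graph with more than $(\tfrac16+\epsilon)n$ crossing edges admits a helpful set of bounded size, and moreover that helpful sets of compatible sizes can be found (or paired, or the residual imbalance repaired) so that the move is balance-preserving and still strictly improving. Your own single-swap calculation illustrates why this cannot be waved through: local optimality under single vertices or vertex pairs yields only constants on the order of $n/2$ or worse, nowhere near $n/6$, so the constant $1/6$ is manufactured precisely by the case analysis of how a candidate set's internal edges, edges to the rest of its own side, and newly created boundary edges interact in a degree-$3$ graph --- the part you describe as ``the delicate combinatorial analysis \dots that Monien--Preis carry out'' without supplying it. No averaging argument over bounded local configurations of the kind you gesture at is known to reach $1/6$, so a genuinely new combinatorial input is required at exactly the point your proposal leaves blank. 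As a summary of the known proof strategy your outline is accurate; as a proof, it has a hole where the theorem's constant comes from.
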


\noindent
As noted in \cite{FominH06}, the bound also holds for graphs of maximum degree at most 3 and a corresponding bisection can be computed in polynomial time.
Kostochka and Melnikov \cite{KostochkaM93} showed that almost all cubic graphs on $n$ vertices have bisection width at least $n/9.9$. To the best of our knowledge, this is the best known lower bound on the bisection width of cubic graphs.

Let $(L,S,R)$ be a partition of the vertex set of a graph $G$ such that
there is no edge in $G$ with one endpoint in $L$ and the other endpoint in
$R$. We say that $(L,S,R)$ is a \emph{separation} of $G$, and that $S$ is a
\emph{separator} of $G$, \emph{separating} $L$ and $R$ (thought of as Left and Right).
The previous theorem immediately gives a small \emph{balanced} separator, one partitioning the remaining vertices into equal-sized parts.
It is obtained by choosing a vertex cover of the edges in the bisection.

\begin{lemma} \label{lem:sep-cubic}
For any $\epsilon > 0$ there is a value $n_\epsilon$ such that every graph $G=(V,E)$ of maximum degree at most 3,
$\card V \geq n_\epsilon$,
has a separation $(L,S,R)$ with $\card S \leq (\frac{1}{6} + \epsilon) |V |$
and $|L|,|R| \le \ceil{ \frac{|V|-|S|}{2} }$. Moreover, such a separation can be computed in polynomial time.
\end{lemma}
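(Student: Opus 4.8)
The plan is to convert the small \emph{edge} cut guaranteed by the Monien--Preis bisection-width bound into a small \emph{vertex} separator by taking a vertex cover of that cut. First I would invoke the bound, using the remark above that it applies to graphs of maximum degree at most $3$ and that a corresponding bisection is computable in polynomial time, applying it with $\epsilon/2$ in place of $\epsilon$: this yields, for $\card V$ large enough, a partition $(V_1,V_2)$ with $\bigl|\,\card{V_1}-\card{V_2}\,\bigr|\le 1$ and a set $F$ of at most $(\tfrac16+\tfrac{\epsilon}{2})\card V$ edges crossing between $V_1$ and $V_2$.

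Next I would take $S$ to be a vertex cover of $F$; picking one endpoint of each edge of $F$ already gives $\card S\le\card F\le(\tfrac16+\tfrac{\epsilon}{2})\card V$. Putting $L=V_1\setminus S$ and $R=V_2\setminus S$, any $L$--$R$ edge would be a crossing edge missed by $S$, which is impossible, so $(L,S,R)$ is a separation. Computing the bisection and reading off a vertex cover are polynomial-time, so the polynomial-time claim will reduce to carrying out the balancing in polynomial time as well.

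The hard part is the balance requirement $\card L,\card R\le\ceil{(\card V-\card S)/2}$: since $\card L+\card R=\card V-\card S$, this is equivalent to $\bigl|\,\card L-\card R\,\bigr|\le 1$, i.e.\ to splitting $V\setminus S$ almost perfectly evenly. The clean route I would pursue is to choose the vertex cover \emph{balanced across the bisection}, namely with $\card{S\cap V_1}-\card{S\cap V_2}=\card{V_1}-\card{V_2}$; then $\card L=\card{V_1}-\card{S\cap V_1}$ and $\card R=\card{V_2}-\card{S\cap V_2}$ differ by at most $1$, and the separation is already balanced. The real work, and the step I expect to be the main obstacle, is showing that a balanced vertex cover of size at most $\card F$ always exists; I would argue this by an exchange argument on the bipartite graph of crossing edges, starting from the cover consisting of all $V_1$-endpoints and repeatedly swapping a covered edge's endpoint from $V_1$ to $V_2$ until the two sides are level, checking that each swap keeps the set a cover and does not push its size above $\card F$.

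As a fallback that works from \emph{any} vertex cover, I would instead reassign whole connected components of $G[V\setminus S]$ between $L$ and $R$---legal because distinct components share no edges---to equalize the two sides as far as possible, and then move the residual excess of the heavier side into $S$ to reach exact balance, using the slack of $\tfrac{\epsilon}{2}\card V$ I reserved between $\card F$ and the target bound $(\tfrac16+\epsilon)\card V$. The obstacle here is a single component larger than $\ceil{(\card V-\card S)/2}$, which can be neither reassigned nor trimmed within budget; this is exactly what the balanced-cover property above precludes, since each side, and hence each of its components, is then already of size at most $\ceil{(\card V-\card S)/2}$.
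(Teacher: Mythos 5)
Your overall route is exactly the paper's: the paper dispatches this lemma in a single sentence (``it is obtained by choosing a vertex cover of the edges in the bisection''), relying on the Monien--Preis bound as extended to maximum degree $3$, and you have correctly identified the one point that sentence glosses over, namely that the cover must be chosen so that $L$ and $R$ come out balanced. However, both of your proposed ways of securing the balance have problems. The exchange argument does not preserve coverage: if you remove $u\in V_1$ from the cover and add the other endpoint $v$ of one covered edge $uv$, then any \emph{other} crossing edge at $u$ (there can be up to two more, since degrees are at most $3$) becomes uncovered unless its $V_2$-endpoint already happens to lie in the cover; so ``swapping a covered edge's endpoint'' is not a legal move on vertex covers, and repairing it by instead adding all crossing-edge neighbors of $u$ changes the imbalance by as much as $4$ in a single step, which can overshoot the target window $\bigl||L|-|R|\bigr|\le 1$. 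The fallback is circular: a connected graph minus a separator of size about $|V|/6$ can perfectly well leave a single component containing essentially all of $V\setminus S$, and your stated resolution of that obstacle is precisely the balanced-cover property that the fallback was supposed to make unnecessary.

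The gap is easily closed, in the same spirit as your first route. Build $S$ greedily: start with $(L,R)=(V_1,V_2)$, so $\bigl||L|-|R|\bigr|\le 1$, and scan the crossing edges $F$ in any order; for each edge not yet covered, add to $S$ its endpoint lying on whichever of $L$, $R$ is currently at least as large, deleting it from that side. Each step adds one vertex per \emph{uncovered} edge, so $|S|\le|F|\le(\tfrac16+\epsilon)|V|$; each step changes $|L|-|R|$ by exactly one, always toward or across zero, so $|L|-|R|$ remains in $\{-1,0,1\}$ throughout, which is equivalent to the stated bound $|L|,|R|\le\ceil{(|V|-|S|)/2}$; and at termination every edge of $F$ is covered, so no edge joins $L$ to $R$. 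This is plainly polynomial time given the bisection, so the construction proves the lemma as stated.
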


\smallskip
\noindent
A direct application of branching on the vertices in such a separator
yields algorithms inferior to existing \mc\ ones.
We illustrate with solving a cubic \mcsp instance with $n$ vertices.
We separate the vertices as $(L,S,R)$ per Lemma~\ref{lem:sep-cubic}.
Sequentially, we generate $r^{\card S}$
smaller instances by taking each possible assignment to
the vertices of $S$.
For \mcsp, as for many problems,
assigning a value to a vertex $v \in V$
yields an instance on vertices $V \setminus \set{v}$,
so the procedure above produces instances on the vertex set $L \cup R$.
The restriction of $G$ to those vertices,
$G \left|_{L \cup R}\right.$,
is by construction a graph with no edge between $L$ and $R$,
and the solution to an instance on constraint graph
$G \left|_{L \cup R}\right.$
is simply the direct combination of the solutions to
the corresponding instances given by
$G \left|_{L} \right.$
and $G \left|_{R} \right.$.
The advantage given by reducing on vertices in the \emph{separator}
is that (for each branch)
we must solve two small instances rather than one large one.
Roughly speaking, in the worst case where $\card S = \tfrac16 n$
and $\card L = \card R = \tfrac5{12}n$,
the running time $t(n)$ satisfies the recurrence
\begin{align*}
t(n) &= r^{n/6} (2 \cdot t(\tfrac5{12} n)) ,
\end{align*}
whose solution satisfies $t(n) = \Ostart{r^{2n/7}}$.%
\footnote{The notation $f(n)=\Ostart{g(n)}$
indicates that for some polynomial $p(n)$,
for all sufficiently large $n$, $f(n) \leq p(n) g(n)$.}
This is inferior to the $\Ostart{r^{n/4}}$ upper bound from
\cite{linear,ScottS07}
for a simple polynomial-space algorithm
based on local simplification and branching rules.

The improvements in this paper
have their origin in a simple observation:
if an algorithm can always branch
on vertices in the separator, then
the usual measure of improvement is achieved at each step
(typically computed by the \mc method described
in Subsection~\ref{subsec:mc}),
and the splitting of the graph into two parts when the separator is emptied
is a bonus. We get the best of both.
The technical challenges are 
to accurately amortize this bonus over the previous branches
to prove a better running time,
and to control the balance of the separation as the algorithm proceeds
so that the bonus is significant.

Again, we illustrate our approach for cubic \mcsp.
As remarked earlier, we will be optimistic in this sketch,
doing the analysis rigorously in Section~\ref{sec:csp}.
The problem class will also be defined there, but for now one may think
of \textsc{Max Cut}, with domain size $r=2$.
Let us ``pivot'' on a vertex $v \in S$, i.e.,
sequentially assign it each possible value,
eliminate it and its incident edges
(see rule \Red{3} below),
and solve each case recursively.
It is possible that $v$ has one or more neighbors within $S$,
but this is a favorable case,
reducing the number of subsequent branches needed.
So, suppose that $v$ has neighbors only in $L$ and $R$.
If all its neighbors were in one part, the separator could be made smaller,
so let us skip over this case as well.
The cases of interest, then,
are when $v$ has two neighbors in $L$ and one in $R$, or vice-versa.
Suppose that these cases occur equally often;
this is the bit of optimism that will require more care
to get right.
In that case, after all $\card S$ branchings,
the sizes of $L$ and $R$ are each reduced by $\tfrac32 \card S$, since degree-2 vertices get contracted away.
This would lead to a running time bound $t(n)$ satisfying the recurrence
\begin{align*}
t(n) &= r^{n/6} \cdot 2 t(\tfrac5{12}n-\tfrac32 \cdot \tfrac16 n) ,
\end{align*}
leading to a solution with
$t(n)= \Ostart{r^{n/5}}$.
This conjectured bound would improve on the best previous algorithm's
time bound of
$\Ostart{r^{n/4}}$,
and Section~\ref{sec:csp} establishes that the bound is true,
modulo a subexponential factor in the running time
due to Lemma~\ref{lem:sep-cubic}
guaranteeing only $\card S \leq (\frac{1}{6} + \epsilon) n$ instead of $\card S \leq \frac{1}{6} n$.

Our algorithms exploit a \emph{global} graph structure,
the separator, while executing an algorithm based on \emph{local}
simplification and branching rules.
The use of global structure may also make it possible to circumvent
lower bounds for classes of branching algorithms that only consider local information when deciding whether to branch on a variable \cite{AchSo00,AlekhnovichHI05}.

\subsection{Results and organization of the paper} \label{sec:outline}
Section~\ref{sec:prelim} defines notation and gives the necessary background on separators and the \mc method that are necessary for Section~\ref{sec:csp}.
Section~\ref{sec:csp} gives a first analysis of a separator-based algorithm.
It solves cubic instances of \mcsp in time $\Oo{r^{(1/5+o(1)) \, n}}$, where
$n$ is the number of vertices.
This improves on
the previously fastest $\Ostar{r^{n/4}}$ time
polynomial-space algorithm
\cite{ScottS07}. The fastest known exponential-space algorithm for cubic instances
uses the pathwidth approach of Fomin and H{\o}ie \cite{FominH06} to solve \mcsp in time $\Oo{r^{(1/6 + o(1)) \, n}}$ \cite{ScottS07}.

Our algorithm also improves the fastest known running time for polynomial-space algorithms solving general instances
of \mcsp from
$\Ostar{r^{(19/100) \, m}}$ to $\Oo{r^{(9/50 + o(1)) \, m}}$, where $m$ is the number of edges of the constraint graph.
The same running time improvement holds for \textsc{Max Cut}, an important
special case of \mcsp,
and for Polynomial and Ring CSP,
generalizations encompassing graph bisection, the Ising model,
and counting problems. 
For comparison, the fastest known exponential-space algorithm has worst-case running time $\Oo{r^{(13/75+o(1)) \, m}}$ \cite{ScottS07}. For vertex-parameterized running times, Williams' \cite{Williams05} exponential-space algorithm runs in time $\Ostar(2^{(\omega/3) \, n})$, which is $O(1.7303^n)$ using the current best upper bound on the matrix-multiplication exponent $\omega$ \cite{Gall14a}.

Subsection~\ref{subsec:lb} gives tight lower bounds for the analysis of the previously fastest polynomial-space \mcsp algorithm.
These lower bounds highlight that our new algorithm is strictly faster. %
It should be noted that tight analyses are rare for competitive branching algorithms. An important feature of these lower bounds is that they make the algorithm use several branching rules whose \mc analyses are most constraining.

In independent work, Edwards \cite{Edwards16} matches our running times for
\mcsp on general and cubic graphs.
His work is also based on separators and is similar to ours from an algorithmic point of view.
The running time analysis differs from ours in that it is targeted solely at \mcsp, while we provide a general method, as we discuss next.

While \mcsp is a central problem in exponential-time algorithms,
the analysis of its branching algorithms is typically easier than for other problems,
largely because the branching creates isomorphic subinstances.
In Section~\ref{sec:smc}, we develop the \smc method in full generality,
and use this in Section~\ref{sec:cds} to design faster polynomial-space algorithms for counting dominating sets in graphs.
For graphs with maximum degree $3$, we obtain an algorithm with a time bound of
$\Oo{3^{(1/5+o(1)) \, n}} = O(1.2458^n)$,
improving on the previous best polynomial-space running time of
$\Ostart{2^{(1/2) \, n}} = O(1.4143^n)$ \cite{KneisMRR05}.
For general graphs, we obtain a different algorithm,
with time bound $O(1.5183^n)$,
improving on the previous best polynomial-space running time of $O(1.5673^n)$ \cite{Rooij10}.
For comparison, the fastest known exponential-space algorithms run in time $\Oo{3^{(1/6+o(1)) \, n}}$ for cubic graphs (by combining \cite{RooijBR09} and \cite{FominH06}) and in time $O(1.5002^n)$ for general graphs \cite{NederlofRD14}.

Limitations of the \smc method are discussed in Section \ref{sec:limitations}, and we provide an outlook in the conclusion section.

The hasty reader may choose to skip Subsections \ref{subsec:lb}, \ref{subsec:3cds}, and/or \ref{subsec:cds}, or Section~\ref{sec:cds} altogether.
A preliminary version of this article appeared in the proceeding of ICALP 2015 \cite{GaspersS15}.

\section{Preliminaries}
\label{sec:prelim}

\subsection{Graphs}

Let $G=(V,E)$ be a (simple, undirected) graph, $v\in V$ be a vertex and $S\subseteq V$ a vertex subset of $G$.
We also refer to the vertex set and edge set of $G$ by $V(G)$ and $E(G)$, respectively.
We denote the \emph{open} and \emph{closed neighborhoods} of $v$ by $N_G(v) = \set{u\in V : uv\in E}$ and $N_G[v] = N_G(v)\cup \set{v}$, respectively.
The \emph{closed neighborhood} of $S$ is $N_G[S] = \bigcup_{v\in S} N_G[v]$ and its \emph{open neighborhood} is $N_G(S) = N_G[S] \setminus S$.
The set $S$ and vertex $v$ are said to \emph{dominate} the vertices in $N_G[S]$ and $N_G[v]$, respectively.
The set $S$ is a \emph{dominating set} of $G$ if $N_G[S]=V$.
The degree of $v$ in $G$ is $d_G(v) = |N_G(v)|$.
We sometimes omit the subscripts when they are clear from the context.
The \emph{contraction} of an edge $uv \in E$ is an operation replacing
$u$ and $v$ by one new vertex $c_{uv}$ that is adjacent to $N_G(\set{u,v})$.
The graph $G-S$ is obtained from $G$ by removing the vertices in $S$ and all incident edges. When $S = \set{v}$ we also write $G - v$ instead of $G - \set{v}$.
The subgraph induced on $S$ is $G[S] = G - (V\setminus S)$.
The graph $G$ is \emph{cubic} or \emph{$3$-regular} if each vertex has degree $3$ and \emph{subcubic} if each vertex has degree at most $3$.

A \emph{tree decomposition} of $G$ is a pair $(\{X_i : i\in I\},T)$
where each so-called \emph{bag} $X_i \subseteq V$, $i\in I$, and $T$ is a tree with elements of $I$ as nodes such that:
\begin{enumerate}
  \item for each edge $uv\in E$, there is an $i\in I$ such that $\{u,v\} \subseteq X_i$, and
  \item for each vertex $v\in V$, $T[\set{i\in I: v\in X_i}]$ is a (connected) tree with at least one node.%
\end{enumerate}
The \emph{width} of a tree decomposition is $\max_{i \in I} |X_i|-1$.
The \emph{treewidth} \cite{RobertsonS86} of $G$
is the minimum width taken over all tree decompositions
of $G$.
\emph{Path decompositions} and \emph{pathwidth} are defined similarly, except that $T$ is restricted to be a path.

\subsection{Separators}
\label{subsec:sep}

%
%
%
%

Our algorithms will typically pivot (or branch) on a vertex of the separator $S$, separating $L$ and $R$, producing instances with smaller separators.
As reductions might render $L$ much smaller than $R$, we would sometimes like to rebalance $L$ and $R$ without increasing the size of the separator.

\begin{lemma}
Let $G=(V,E)$ be a subcubic graph and let $S$ be a separator of $G$ separating $L$ and $R$. If there is a vertex $s\in S$ with exactly one neighbor $l$ in $L$, then there is a separator $S'$ of $G$ separating $L'$ and $R'$ such that $|S'| = |S|$ and $|L'|=|L|-1$.
\end{lemma}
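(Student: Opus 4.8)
The plan is to rebalance by a single swap across the boundary of the separation: I would promote the vertex $l$ from $L$ into the separator and, in exchange, move $s$ out of the separator into $R$. Concretely, set $L' = L \setminus \set{l}$, $S' = (S \setminus \set{s}) \cup \set{l}$, and $R' = R \cup \set{s}$. Since $l \in L$ and $s \in S$ lie in disjoint parts they are distinct, so $(L',S',R')$ is again a partition of $V$, and the size bookkeeping is immediate: $S'$ trades $s$ for $l$, giving $|S'| = |S|$, while $L'$ loses $l$ and gains nothing, giving $|L'| = |L| - 1$. Both required equalities then hold by construction.

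The only real content is to verify that $(L', S', R')$ is a genuine separation, i.e., that $G$ has no edge with one endpoint in $L'$ and the other in $R'$. I would split on the single new vertex of $R'$, namely $s$. For the old part of $R'$ there is nothing to check, since $L' \subseteq L$ and $(L,S,R)$ already has no $L$--$R$ edge. For $s$ itself the hypothesis is used decisively: by assumption $s$ has a \emph{unique} neighbor $l$ in $L$, and $l$ has just been absorbed into $S'$, so $N_G(s) \cap L' = (N_G(s) \cap L) \setminus \set{l} = \emptyset$, and $s$ contributes no $L'$--$R'$ edge. Combining the two cases, no vertex of $R'$ has a neighbor in $L'$, which completes the check.

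The construction has no serious obstacle — it is a one-vertex rotation of the separation — and the only point that needs care is precisely the ``exactly one neighbor'' hypothesis: were $s$ to have a second neighbor in $L$, that neighbor would survive in $L'$ and create a forbidden edge to $s \in R'$, so the swap would fail to produce a separation. (The maximum-degree-$3$ assumption plays no role in this particular argument; it is inherited from the surrounding setting and could be dropped here.)
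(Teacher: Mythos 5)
Your construction $(L',S',R') = (L\setminus\{l\},\,(S\setminus\{s\})\cup\{l\},\,R\cup\{s\})$ is exactly the partition the paper uses, and your verification that no $L'$--$R'$ edge exists (using that $l$ was the unique $L$-neighbor of $s$) correctly fills in the routine check the paper leaves implicit. The proposal is correct and takes essentially the same approach as the paper.
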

\begin{proof}
We use the partition $(L',S',R') = (L\setminus \{l\},(S\setminus \{s\})\cup \{l\}, R\cup\{s\})$.
\end{proof}
We say in this case that we \emph{drag} $s$ into $R$.
This transformation, though it does not change the ``true'' problem instance,
does change its presentation and its measure;
it is treated like any other transformation in the \mc analysis,
to which we turn next.

\subsection{\mc}
\label{subsec:mc}

In this subsection, we recall the basics of the \mc method.
The method was introduced with that name by \cite{FominGK09}
but closely parallels
Eppstein's quasi-convex method \cite{Eppstein06} 
and Scott and Sorkin's linear-programming approach \cite{ScottS07}.
We use the method in a form developed in \cite{hybrid},
yielding convex mathematical programs.

To track the progress a branching algorithm makes when solving an instance, a \mc analysis assigns a potential function to instances, a so-called measure.

\begin{definition}\index{measure}
A \emph{measure} $\mu$ for a problem $P$ is a function from the set of all instances for $P$ to the set of
non negative reals.
\end{definition}

A \mc analysis upper bounds the running time of a branching algorithm by analyzing the size of its search trees,
which model the recursive calls the algorithm makes during an execution.
Two values are of interest, an upper bound on the depth of the search trees and an upper bound on their number of leaves.
In the following lemma, the measure $\eta$ is typically polynomially bounded and used to bound the depth, and $\mu$ is used to bound the number of leaves.
It is proved by a simple induction.

\begin{lemma}[\cite{Gaspers10,hybrid}] \label{lem:measureanalysis}
Let $A$ be an algorithm for a problem $P$,
$c \ge 0$ and $r>1$ be constants, and $\mu(\cdot), \eta(\cdot)$ be measures
for the instances of $P$,
such that for any input instance $I$,
$A$ reduces $I$ to instances $I_1,\ldots,I_k$, with $k\ge 1$,
solves these recursively, and combines their solutions to solve~$I$,
using time $O(\eta(I)^{c})$ for the reduction and combination steps
(but not the recursive solves), with
\begin{align}
(\forall i) \quad \eta(I_i) & \leq \eta(I)-1 \text{, and}  \label{eq:masize}
  \\
\sum_{i=1}^k r^{\mu(I_i)} & \leq r^{\mu(I)} . \label{eq:magain}
\end{align}
Then $A$ solves any instance $I$
in time $O(\eta(I)^{c+1}) r^{\mu(I)}$.
\end{lemma}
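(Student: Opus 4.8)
The plan is to bound the number of leaves of the recursion tree by $r^{\mu(I)}$ and its depth by $O(\eta(I))$, then multiply the leaf count by the polynomial work done along each root-to-leaf path. These two ingredients are controlled by the two hypotheses \eqref{eq:masize} and \eqref{eq:magain} respectively, so I would handle them separately before combining.

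First I would establish the leaf bound $L(I) \le r^{\mu(I)}$ by induction on the recursion, where $L(I)$ denotes the number of leaves of the search tree rooted at $I$. In the base case (a leaf, where $A$ solves $I$ directly without recursing) we have $L(I) = 1$, and since $\mu(I) \ge 0$ by the definition of a measure and $r > 1$, we get $r^{\mu(I)} \ge 1 = L(I)$. For the inductive step, $A$ reduces $I$ to $I_1,\ldots,I_k$, so $L(I) = \sum_{i=1}^k L(I_i)$; applying the induction hypothesis to each child and then hypothesis \eqref{eq:magain} gives
\begin{align*}
L(I) = \sum_{i=1}^k L(I_i) \le \sum_{i=1}^k r^{\mu(I_i)} \le r^{\mu(I)} .
\end{align*}
The induction is well-founded because, by \eqref{eq:masize}, $\eta$ strictly decreases at each recursive call (by at least $1$), so the recursion terminates.

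Next I would bound the depth. Since $\eta$ is a nonnegative measure and \eqref{eq:masize} forces $\eta(I_i) \le \eta(I) - 1$ along every edge of the recursion tree, the depth $d(I)$ of the tree rooted at $I$ satisfies $d(I) \le \eta(I)$; equivalently, no root-to-leaf path can have more than $\eta(I)$ edges before $\eta$ would drop below $0$. Consequently the total number of internal nodes and leaves is at most (depth $+ 1$) times the number of leaves, i.e.\ $O(\eta(I)) \cdot r^{\mu(I)}$ nodes in total. At each node the reduction and combination steps cost $O(\eta(\cdot)^c) = O(\eta(I)^c)$, using monotonicity of $\eta$ along the path (each child's $\eta$ is no larger than the parent's). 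Multiplying the per-node cost $O(\eta(I)^c)$ by the node count $O(\eta(I)) \cdot r^{\mu(I)}$ yields the claimed total running time $O(\eta(I)^{c+1}) \, r^{\mu(I)}$.

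The main obstacle I anticipate is purely bookkeeping rather than conceptual: making precise the accounting of the polynomial factor across the whole tree. One must be careful that the per-node polynomial cost is uniformly $O(\eta(I)^c)$ (bounding every subinstance's $\eta$ by the root's, which is legitimate since $\eta$ only decreases), and that charging the depth factor $O(\eta(I))$ exactly once converts the leaf bound into a bound on total work; the cleanest route is to chain the per-node cost along each of the $\le \eta(I)+1$ levels and sum over the $\le r^{\mu(I)}$ leaves. I would present this as a single induction on $\eta(I)$ establishing the running-time bound directly, folding the leaf-count induction into it, since that avoids separately tracking leaves and depth and makes the polynomial factor's growth transparent.
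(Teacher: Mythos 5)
Your proposal is correct and is exactly the standard argument: the paper itself states Lemma~\ref{lem:measureanalysis} without proof, citing \cite{Gaspers10,hybrid}, and the proof given there is the same decomposition you use --- condition \eqref{eq:magain} bounds the number of leaves of the search tree by $r^{\mu(I)}$ via induction, condition \eqref{eq:masize} bounds the depth by $\eta(I)$, and multiplying the node count by the uniform per-node cost $O(\eta(I)^c)$ gives the stated bound. Nothing is missing; your remark that $\mu(I)\ge 0$ (needed in the base case) follows from the paper's definition of a measure is the right observation to make explicit.
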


A reduction rule that creates at most one subinstance is a \emph{simplification} rule, other reduction rules are \emph{branching} rules.

\section{\mcsp}
\label{sec:csp}

Using the notation from \cite{ScottS07}, an instance $(G, \mathcal{S})$ of \mcsp (also called \mrcsp) is given by a \emph{constraint graph} $G = (V, E)$ and a set $\mathcal{S}$ of \emph{score functions}.
Writing $[r] = \{1, . . . , r\}$ for the set of available vertex colors, we have a \emph{dyadic} score function $s_e : [r]^2 \rightarrow \mathbb{R}$ for each edge $e \in E$, a \emph{monadic} score function $s_v : [r] \rightarrow \mathbb{R}$ for each vertex $v \in V$, and a single \emph{niladic} score ``function'' $s_{\emptyset} : [r]^0 \rightarrow \mathbb{R}$ which takes no arguments and is just a constant convenient for bookkeeping.
A \emph{candidate solution} is a function $\phi : V \rightarrow [r]$ assigning colors to the vertices ($\phi$ is an \emph{assignment} or
\emph{coloring}), and its score is
\begin{align}
 s(\phi) := s_{\emptyset} + \sum_{v\in V} s_v(\phi(v)) + \sum_{uv \in E} s_{uv} (\phi(u), \phi(v)).
  \label{score}
\end{align}
An \emph{optimal solution} $\phi$ is one which maximizes $s(\phi)$.

Let us recall the reductions from \cite{ScottS07}. \Red{0, 1, and 2} are
simplification rules (acting, respectively, on vertices of degree 0, 1, and 2),
creating one subinstance, and \Red{3} is a branching rule, creating $r$ subinstances. An optimal solution for $(G,\mathcal{S})$ can be found in polynomial time from optimal solutions of the subinstances.
\begin{description}
\item[\Red{0}] If $d(y)=0$, then set $s_{\emptyset} = s_{\emptyset} + \max_{C\in [r]} s_y(C)$ and delete $y$ from $G$.
\item[\Red{1}] If $N(y)=\{x\}$, then replace the instance with $(G',\mathcal{S}')$ where $G'=(V',E') = G-y$ and $\mathcal{S}'$ is the restriction of $\mathcal{S}$ to $V'$ and $E'$ except that for all colors $C\in [r]$ we set
\begin{align*}
s'_x(C) = s_x(C) + \max_{D\in [r]} \{ s_{xy}(C,D) + s_y(D) \}.
\end{align*}
\item[\Red{2}] If $N(y)=\{x,z\}$, then replace the instance with $(G',\mathcal{S}')$ where $G'=(V',E') = (V-y, (E\setminus \{x y, y z\})\cup \{x z\})$ and $\mathcal{S}'$ is the restriction of $\mathcal{S}$ to $V'$ and $E'$, except that for $C,D\in [r]$ we set
\begin{align*}
s'_{xz}(C,D) = s_{xz}(C,D) + \max_{F\in[r]} \{ s_{xy}(C,F) + s_{yz}(F,D) + s_y(F) \}
\end{align*}
if there was already an edge $x z$, discarding the first term $s_{x z}(C,D)$ if there was not.
\item[\Red{3}] Let $y$ be a vertex of degree at least 3. There is one subinstance $(G',s^C)$ for each color $C\in [r]$, where $G'=(V',E')=G-y$ and $s^C$ is the restriction of $s$ to $V'$ and $E'$, except that we set
\begin{align*}
(s^C)_{\emptyset} = s_{\emptyset} + s_y(C),
\end{align*}
and, for every neighbor $x$ of $y$ and every $D\in [r]$,
\begin{align*}
(s^C)_x (D) = s_x (D) + s_{x y}(D,C).
\end{align*}
\end{description}
We will now describe a new separator-based algorithm for cubic \mcsp, outperforming the algorithm given by Scott and Sorkin in \cite{ScottS07}.
Using it as a subroutine in the algorithm for general instances by Scott and Sorkin \cite{ScottS07} also gives
a faster running time for \mcsp on arbitrary graphs.

\subsection{Background}
For a cubic instance of \mcsp,
an instance whose constraint graph $G$ is 3-regular,
the fastest known polynomial-space algorithm
makes simple use of the reductions above.
The algorithm branches on a vertex $v$ of degree 3,
giving $r$ instances with a common constraint graph $G'$,
where $v$ has been deleted from the original constraint graph.
In $G'$, the three $G$-neighbors of $v$ each have degree 2.
Simplification rules are applied to rid $G'$ of degree-2 vertices,
and further vertices of degree 0, 1, or 2 that may result,
until the constraint graph becomes another cubic graph $G''$.
This
results in $r$ instances with the common constraint graph $G''$,
to which the same algorithm is applied recursively.
The running time of the algorithm is exponential in the number of branchings,
and since each branching destroys four degree-3 vertices
(the pivot vertex $v$ and its three neighbors),
the running time is bounded by $\Ostar{r^{n/4}}$;
details may be found in \cite{linear}.

For this algorithm, $r^{n/4}$ is also a lower bound,
achieved for example by an instance whose constraint graph consists
of disjoint copies of $K_4$, the complete graph on 4 vertices.
In fact, instances with disjoint constraint graphs can be
solved with far greater efficiency,
since the components can be solved separately,
a fact exploited by the fastest polynomial-space algorithms
for \mcsp \cite{ScottS07}.
However, Subsection~\ref{subsec:lb} shows that $r^{n/4}$ is also a lower bound
for algorithms exploiting connectedness, since a slightly unlucky choice
of pivot vertices leaves a particular worst-case constraint graph connected.
We conjecture that $r^{n/4}$ is a lower bound for any algorithm
using these reductions and choosing its pivot ``locally'':
that is, characterizing each vertex by the structure of the
constraint graph (or even the instance) within a fixed-radius ball around it,
and choosing a vertex with a best such character.

Here, we show how to break this $r^{n/4}$ barrier,
by selecting pivot vertices using global properties of the constraint graph.
In this section we describe an algorithm which pivots only
on vertices in a separator of $G$.
When the separator is exhausted, $G$ has been split into two components $L$
and $R$
which can be solved independently, and are treated recursively.
The efficiency gain of the algorithm comes from the component splitting:
if the time to solve an instance with $n$ vertices can be bounded by
$\Ostar{r^{cn}}$,
the time to solve an instance consisting of components $L$ and $R$
is $\Ostar{r^{c|L|}}+\Ostar{r^{c|R|}}$,
which (for $L$ and $R$ of comparable sizes) is
hugely less than the time bound $\Ostar{r^{c(|L|+|R|)}}$
for a single component of the same total order.
This efficiency gain comes at no cost:
until the separator is exhausted,
branching on vertices in the separator is just as efficient as branching
on any other vertex.

\subsection{Algorithm and Analysis}
We interleave the algorithm's description with its running time analysis.
To analyze the algorithm, we use the \mc method.
As in \cite{hybrid}, we use penalty terms in the measure to treat tricky cases
that otherwise require arguments outside the \mc framework.
(Those arguments are typically simple but mesh poorly with the
\mc framework, making correctness difficult to check.
Our penalty approach is modeled on Wahlstr{\"o}m's \cite{Wahlstrom04}.)
We also take from \cite{ScottS07} and \cite{hybrid}
the treatment of vertices of degrees 1 and 2 within the \mc
framework.%
Reductions on such vertices are \textit{de facto} never an algorithm's critical cases,
but can lead to a tangle of special cases unless treated uniformly.

Recall (see Lemma \ref{lem:measureanalysis}) that the \mc analysis
applies to an algorithm which transforms an
instance $I$ to one or more instances $I_1,\ldots,I_k$ in polynomial,
solves those instances recursively,
and obtains a solution to $I$ in polynomial time from
the solutions of $I_1, \ldots, I_k$.
By definition, the measure $\mu(I)$ of an instance $I$ should satisfy
that for any instance,
\begin{align}
\mu(I) & \geq 0 .  \label{mupos}
\end{align}
To satisfy constraint \eqref{eq:masize} of Lemma \ref{lem:measureanalysis} we should have that
for any transformation of $I$ into $I_1,\ldots,I_k$,
\begin{align}
r^{\mu(I_1)} + \cdots + r^{\mu(I_k)} & \leq r^{\mu(I)} .     \label{trans}
\end{align}
Given these hypotheses,
the algorithm solves any instance $I$ in time
$\Ostar{r^{\mu(I)}}$
if the number of recursive calls from the root to a leaf of the search tree is polynomially bounded.

Here, we present an instance of \mcsp in terms of a
separation $(L,S,R)$ of its constraint graph $G=(V,E)$.
We write $L_3$, $S_3$, and $R_3$ for the subsets of degree-3 vertices of $L$, $S$, and $R$, respectively,
and we will always assume that $\sl \leq \sr$,
if necessary swapping the roles of $L$ and $R$ to make it so.
We write $\sst$ for the number of degree-2 vertices in $S$.

We define the measure of an instance as
\begin{align}
 \mu(L,S,R) &=
   \ws \ss + \wst \sst + \wr \sr + \wb \one(\sr=\sl) \notag \\ & \mytab
     + \wc \one(\sr=\sl+1)
    + \wdd \log_{3/2} (\sr+\ss) ,
\end{align}
\noindent
where the values $\ws$, $\wst$, $\wr$, $\wb$, $\wc$, and $\wdd$
are constants to be determined
and the indicator function $\one(\text{event})$ takes the value 1 if the event is true and 0 otherwise.
In what follows, we describe the algorithm and derive which constraints inequality \eqref{trans} induces.
Notice that if we choose the values $\ws$, $\wst$, $\wr$, $\wb$, $\wc$, and $\wdd$ so that all the constraints are satisfied, then Lemma \ref{lem:measureanalysis} implies an algorithm running in time $\Ostar{r^{\mu(I)}}$.
We optimize $\mu(I)$ under the derived constraints.

For the constraint \eqref{mupos} that $\mu \geq 0$,
it suffices to constrain each of the constants to be nonnegative:
\begin{align}
 \ws, \wst, \wr, \wb, \wc, \wdd \ge 0 . \label{nonneg}
\end{align}

\noindent
Intuitively, the terms $\wb$ and $\wc$ are the only representations
of the size of $L$ in $\mu$,
and account for the greater time needed when the
left side is as large (or nearly as large) as the right.
(We thus anticipate that optimally setting the weights gives $\wb \geq \wc$,
as turns out to be true, due to constraint \eqref{degredL}.)
%
The logarithmic term 
offsets
increases in penalty terms that may result when a new separator is computed,
where the instance may go from imbalanced to balanced.


\ignore{%
One idea behind this is that if $L$ is smaller than $R$,
it suffices to bound everything in terms of $R$.
On the other hand, intuitively, a reduction like that of
Figure~\ref{figureTwoEdgesIntoLeft}
has benefit if the two sides are kept in perfect balance,
but not if the left side is significantly smaller than the right.
The purpose of $\wb$ and $\wc$ is to explicitly reward a partition's
Since $\wb$
We will take advantage of our intuition to presume that $\wb \geq \wc$,
but this could be avoided
}

Concretely,
 from \eqref{trans}, each reduction imposes a constraint on the
measure.
Whenever there is a vertex of degree 0, 1, or 2
a corresponding reduction is applied,
so it can be presumed that any other reduction is applied to a
cubic graph.
We treat the reductions in their order of priority:
when presenting one reduction, we assume that no previousy presented
reduction can be applied to the instance. Denote by $\mu$ the value of the measure before the reduction is applied and by $\mu'$ its value after the reduction.

\redsec{Degree 0.} \label{red0}
If the instance contains a vertex $v$ of degree $0$, then
perform \Red{0} on $v$. Removing $v$ from the instance has no effect on the measure and Condition \eqref{trans} is satisfied.

\redsec{Half-edge deletion.}
If there is a vertex of degree 1, apply \Red{1} on it, and if there is a vertex in $L\cup R$ of degree 2, apply \Red{2} on it.
We may view the deletion of an edge $uv$ as a deletion of two half-edges, one incident on $u$ and the other incident on $v$. Deleting the half-edge incident on $v$ also decreases the degree of $v$, but not the degree of $u$.
Similarly, when a reduction creates an edge between two vertices that already had an edge, our analysis considers that a temporary parallel edge emerges, which then collapses into a single edge.

We will require that a half-edge deletion does not increase the measure.
This will also ensure that the collapse of parallel edges, \Red{1},
and \Red{2} for vertices in $L\cup R$ does not increase the measure, which also validates
Condition \eqref{trans} for \Red{1},
and \Red{2} for vertices in $L\cup R$.
The only way that a half-edge deletion affects the measure is that the degree of a vertex $v$ decreases.

First, assume $d(v)=1$. The degree of $v$ is reduced to $0$. Since neither degree-0 nor degree-1 vertices affect the measure, $\mu'-\mu=0$, satisfying Condition \eqref{trans}.

Next, assume $d(v)=2$.
If $v\in L$, then $\mu'-\mu=0$.
If $v\in S$, then $\mu'-\mu \le -\wst$, which satisfies Condition \eqref{trans} since $\wst\ge 0$ by \eqref{nonneg}.
If $v\in R$, then $\mu'-\mu = 0$, which also satisfies Condition \eqref{trans}.

Finally, assume $d(v)=3$. We consider three cases.
\begin{itemize}
 \item $v\in L$. Since the imbalance increases by one vertex, $\mu'-\mu \le \max(0,-\wc,-\wb+\wc)$. Since $\wc\ge 0$ by \eqref{nonneg} it suffices to constrain
  \begin{align}
   -\wb+\wc \le 0 . \label{degredL}
  \end{align}
 \item $v\in S$. For this case it is sufficient that
  \begin{align}
   -\ws+\wst \le 0. \label{degredS}
  \end{align}
 \item $v\in R$. The case where $\sr=\sl$ is covered by
  \eqref{degredL} since we can swap $L$ and $R$. Otherwise, $\sr \ge \sl+1$, and then $\mu'-\mu \le -\wr+\max(0,\wc,\wb-\wc)$. Since $\wr\ge 0$ by \eqref{nonneg}, it suffices to constrain
  \begin{align}
   -\wr+\wc &\le 0 \text{ and} \label{degredR1}\\
   -\wr+\wb-\wc &\le 0 . \label{degredR2}
  \end{align}
\end{itemize}

\redsec{Separation.}
This reduction is the only one special to separation,
and its constraint looks quite different from those in previous works.
The reduction applies when $S=\emptyset$, which arises in two cases.
One is at the beginning of the algorithm, when the instance has not been
separated, and may be represented by the trivial separation
$(\emptyset,\emptyset,V)$.
The second is when reductions on separated instances have exhausted the
separator, so that $S$ is empty but $L$ and $R$ are nonempty,
and the instance is solved by
solving the instances on $L$ and $R$ independently,
via a new separation $(L',S',R')$ for $R$
and another such separation $(L'',S'',R'')$ for $L$.
The reduction is applied to a graph $G=(V,E)$ that is cubic and
can be assumed to be of at least some constant order, $\sv \geq k$,
since a smaller instance can be solved in constant time.
By Lemma \ref{lem:sep-cubic} we know that, for any constant $\e>0$,
there is a size $k=k(\e)$ such that any cubic graph $G$ of order at least $k$
has a separation $(L,S,R)$ with
$|S| \leq (\frac16+\e) \sv$ , $|L|, |R| \leq \frac5{12} \sv$.
To satisfy \eqref{trans}, we will make worst-case assumptions about balance,
namely that the instance goes from being imbalanced ($\sr\ge \sl+2$) to being balanced ($\sr[']=\sl[']$ and $\sr['']=\sl['']$).
(Note that $\wb \ge \wc \ge 0$ by \eqref{nonneg} and \eqref{degredL}.)
It thus suffices to constrain that
\begin{align*}
r^{ \ws \ss[']+ \wr \sr['] + \wb + \wdd \log_{3/2}(\sr[']+\ss['])}
+ &
r^{ \ws \ss['']+ \wr \sr[''] + \wb + \wdd \log_{3/2}(\sr['']+\ss[''])}
\\ &
\leq
r^{  \wr \sr + \wdd \log_{3/2}(\sr)} .
\end{align*}
From the separator properties, this in turn is implied by
\begin{align*}
2 \cdot r^{ \ws (1/6+\e) \sr + \wr(5/12\cdot \sr) + \wb
   + \wdd\log_{3/2}(8/12 \cdot \sr)}
& \leq
r^{  \wr \sr
   + \wdd\log_{3/2}(\sr)} ,
\end{align*}
where we have estimated
$\sl['], \sr['] \leq \tfrac5{12} \sr$
and $\ss['] \leq (\tfrac16+\e) \sr \le \tfrac3{12} \sr$ in the $\log$ term on the left hand side.
Since $r\ge 2$, it suffices to constrain that
\begin{align*}
1+ \ws (\tfrac16+\e) \sr + \wr(\tfrac5{12} \sr) + \wb
   + \wdd\, \log_{3/2}(\tfrac8{12} \sr)
& \leq
\wr \sr + \wdd\, \log_{3/2}(\sr).
\end{align*}
Since we took $\tfrac32 = \tfrac{12}8$ to be the logarithm's base, we set $\wdd=\wb+1$ so that
the term $\wdd\log_{3/2}(\tfrac8{12}\sr)$
is equal to $-(\wb+1) + (\wb+1) \log_{3/2}(\sr)$,
and it suffices to have
\begin{align}
 (\tfrac16+\e) \ws + \tfrac5{12} \wr
& \leq
 \wr .          \label{r1}
\end{align}

\redsec{Degree 2 in $S$.} \label{redII}
If the instance has a vertex $s\in S$ of degree 2, then perform \Red{2} on $s$. Let $u_1,u_2$ denote the neighbors of $s$.
The vertex $s$ is removed and the edge $u_1 u_2$ is added if it was not present already.
If $L$ or $R$ contain no neighbor of $s$, Condition \eqref{trans} is implied by the constraints of the half-edge deletions.
If $u_1\in L$ and $u_2\in R$ (or the symmetric case), then $S$ is not a separator any more.
The algorithm removes $u_2$ from $R$ and adds it to $S$.
If $d(u_2)=2$, we have that $\mu'-\mu \le 0$. Otherwise, $d(u_2)=3$.
If initially we had $\sl=\sr$, then $L$ and $R$ will be swapped after the reduction, and we constrain that
\begin{align}
 -\wst+\ws-\wb+\wc \le 0. \label{2S}
\end{align}
Otherwise, $\mu'-\mu \le -\wst+\ws-\wr+\max(0,\wc,\wb-\wc)$. Since $\wc\ge 0$ by \eqref{nonneg} it suffices to constrain

\begin{align}
 -\wst+\ws-\wr+\wc &\le 0 , \text{ and} \label{2S1}\\
 -\wst+\ws-\wr+\wb-\wc &\le 0 . \label{2S0}
\end{align}

\medskip
In the remaining cases, every vertex has degree 3.

\redsec{No neighbor in $L$.} \label{noL}
If the separation $(L,S,R)$ has a vertex $v \in S$ with no neighbor in $L$,
``drag'' $v$ into $R$, i.e., transform the instance by changing the
separation to $(L',S',R') := (L, S\setminus \set{v}, R \cup \set v)$.
It is easily checked that this is a valid separation,
with no edge incident on both $L$ and $R$, 
and with $\sl['] \leq \sr[']$ implied by $\sl \leq \sr$.
Indeed the new instance is no more balanced than the old,
so that the difference between the new and old measures is
$\mu'-\mu \leq -\ws + \wr$,
and to satisfy condition \eqref{trans} it suffices that
\begin{align}
 -\ws+\wr & \leq 0,          \label{r2}
\end{align}
since
by \eqref{nonneg} and \eqref{degredL}
an increase in imbalance does not increase the measure.


\redsec{No neighbor in $R$.}
This case is similar to the previous case, but a vertex $v \in S$ with no neighbor in $R$
is dragged into $L$.
\begin{itemize}
\item
If initially we had $\sr=\sl$, we reverse the roles of $L$ and $R$
and revert to the previous case. 
\item
If $\sr\ge \sl+1$ then the transformation increases $\sl$ by 1, decreasing the imbalance by one vertex. Therefore,
$\mu'-\mu \le -\ws+\max(0,\wc,\wb-\wc)$ and
we constrain that
\begin{align}
 -\ws+\wc &\le 0, \text{ and} \label{noR2}\\
 -\ws+\wb-\wc &\le 0 . \label{noR1}
\end{align}
\end{itemize}

With the above cases covered, we may assume that the pivot vertex $s \in S$ has degree 3 and
at least one neighbor in each of $L$ and $R$.

	\begin{figure}[tbp]
		\centering
		\begin{subfigure}[b]{0.33\textwidth}
			\centering
			\begin{tikzpicture}[scale=1]
			\draw (0,0) ellipse (0.5cm and 1cm);
			\node at (0,1.2) {$S$};
			\draw (-1.2,0) ellipse (0.6cm and 1cm);
			\node at (-1.2,1.2) {$L$};
			\draw (1.2,0) ellipse (0.6cm and 1cm);
			\node at (1.2,1.2) {$R$};
			
			\draw (0,0) node[vertex,label=above:$s$] (s) {};
			\draw (0,-0.6) node[vertex] (s2) {};
			\draw (-1,0) node[vertex] (l) {};
			\draw (1,0) node[vertex] (r) {};
			
			\draw (l)--(s)--(r) (s)--(s2);
			\end{tikzpicture}
			\caption{\label{fig:lsr}One neighbor in $L$, $S$, and $R$}
		\end{subfigure}
		\quad
		\begin{subfigure}[b]{0.3\textwidth}
			\centering
			\begin{tikzpicture}[scale=1]
			\draw (0,0) ellipse (0.5cm and 1cm);
			\node at (0,1.2) {$S$};
			\draw (-1.2,0) ellipse (0.6cm and 1cm);
			\node at (-1.2,1.2) {$L$};
			\draw (1.2,0) ellipse (0.6cm and 1cm);
			\node at (1.2,1.2) {$R$};
			
			\draw (0,0) node[vertex,label=above:$s$] (s) {};
			\draw (-1,0.3) node[vertex] (l) {};
			\draw (-1,-0.3) node[vertex] (l2) {};
			\draw (1,0) node[vertex] (r) {};
			
			\draw (l)--(s)--(r) (s)--(l2);
			\end{tikzpicture}
			\caption{\label{fig:llr}Two neighbors in $L$}
		\end{subfigure}
		\quad
		\begin{subfigure}[b]{0.3\textwidth}
			\centering
			\begin{tikzpicture}[scale=1]
			\draw (0,0) ellipse (0.5cm and 1cm);
			\node at (0,1.2) {$S$};
			\draw (-1.2,0) ellipse (0.6cm and 1cm);
			\node at (-1.2,1.2) {$L$};
			\draw (1.2,0) ellipse (0.6cm and 1cm);
			\node at (1.2,1.2) {$R$};
			
			\draw (0,0) node[vertex,label=above:$s$] (s) {};
			\draw (-1,0) node[vertex] (l) {};
			\draw (1,0.3) node[vertex] (r) {};
			\draw (1,-0.3) node[vertex] (r2) {};
			
			\draw (l)--(s)--(r) (s)--(r2);
			\end{tikzpicture}
			\caption{\label{fig:lrr}Two neighbors in $R$}
		\end{subfigure}
		\caption{\label{fig:2csp-cases} Configurations for branching on a separator vertex.}
	\end{figure}
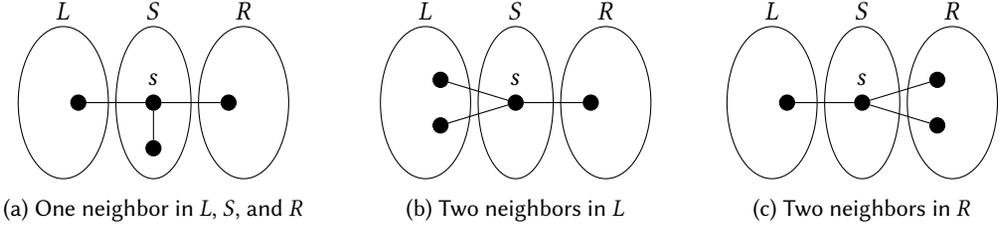

\redsec{One neighbor in each of $L$, $S$, and $R$.}
If there is a vertex $s \in S$ with one neighbor in each of $L$, $S$, and $R$ (Figure \ref{fig:lsr}),
perform \Red{3} on $s$, deleting it from the constraint graph
and thereby reducing the degree of each neighbor to $2$.
Since both $L$ and $R$ lose a degree-$3$ vertex, there is no
change in balance and the constraint is
\begin{align}
 1 -2\ws + \wst -\wr \leq 0 .   \label{r5}
\end{align}
The form and the initial 1 come from the reduction's generating $r$ instances
with common measure $\mu'$, so the constraint is
$r \cdot r^{\mu'} \leq r^{\mu}$, or equivalently $1+\mu' -\mu \leq 0$.
The value of $\mu'-\mu$ comes from
$S$ losing two degree-3 vertices but gaining a degree-2 vertex, and $R$ losing a degree-3 vertex.

\redsec{Two neighbors in $L$.}
If $s\in S$ has two neighbors in $L$ and one neighbor in $R$  (Figure \ref{fig:llr}), applying \Red{3} removes $s$, reduces the degree of
a degree-3 vertex in $R$, and increases the imbalance by
one vertex.
The algorithm performs \Red{3} if $\sr \le \sl+1$, where $\mu'-\mu \le -\ws-\wr+\max(-\wb+\wc, -\wc)$. Thus, we constrain%

  \begin{align}
   1-\ws-\wr-\wb+\wc &\le 0 \text{ and} \label{red2L0}\\
   1-\ws-\wr-\wc &\le 0 . \label{red2L1}
  \end{align}

If, instead, $\sr \ge \sl+2$, then the algorithm drags $s$ into $L$ and its neighbor $r\in R$ into $S$, replacing $(L,S,R)$ by $(L\cup\set{s}, (S\setminus \set{s})\cup \set{r}, R\setminus \set{r})$. We need to ensure that $-\wr+\max(\wb,\wc) \le 0$, which, since $\wc \le \wb$ by \eqref{degredL}, is satisfied if we constrain that
  \begin{align}
   -\wr+\wb \le 0 . \label{red2L2}
  \end{align}

\redsec{Two neighbors in $R$.}
If $s\in S$ has two neighbors in $R$ and one neighbor in $L$ (Figure \ref{fig:lrr}), the algorithm performs \Red{3}, which removes $s$, reduces the degree of
two degree-3 vertices in $R$, and decreases the imbalance by one.
For the analysis of the case where $\sr=\sl$, we refer to  \eqref{red2L0} since $L$ and $R$ are swapped after the reduction. For the other cases, we constrain

\begin{align}
 1-\ws-2\wr-\wc+\wb &\le 0 \text{ and} \label{red2R1}\\
 1-\ws-2\wr+\wc &\le 0 . \label{red2R2}
\end{align}

\medskip
\noindent
This concludes the description of the algorithm and describes all the constraints on the measure.
To minimize the running time proven by the analysis
(see after \eqref{mupos}, \eqref{trans}),
we minimize the initial measure, which is $\mu(\emptyset,\emptyset,V) =
\wr |V| + \wdd \log_{3/2} (|V|)$.
Since the logarithmic term affects the running time only by a polynomial,
we therefore minimize $\wr$, subject to our constraints \eqref{nonneg}--\eqref{red2R2}, which are all linear.
We obtain the following optimal, feasible weights using linear programming:
\begin{align*}
 \wr  &= 0.2 + o(1) & \ws  &= 0.7 & \wb &= 0.2\\
      &      & \wst &= 0.6 & \wc &= 0.1
\end{align*}

All constraints are satisfied and $\mu = (1/5+o(1))n$.
The tight constraints in our analysis are \eqref{r1}--\eqref{2S0} and \eqref{r5}--\eqref{red2R2}.

It only remains to verify that the depth of the search trees of the
algorithm is upper bounded by a polynomial. Since not every reduction
removes a vertex (some only modify the separation $(L,S,R)$),
it is crucial to guarantee some kind of progress for each reduction.
We will argue that each reduction decreases another polynomially-bounded measure $\eta(L,S,R,E) := 3 \ss + 2 \sr + \sl + 2 |E|$,
by at least one, and
the depth
of the search trees is therefore polynomial.
For those reductions that remove one or more vertices, it is easily seen that $\eta(L,S,R,E)$ decreases by at least one.
In the Separation case, the instance $(L,\emptyset,R)$ leads to two instances $(\emptyset,\emptyset,L)$ and $(\emptyset,\emptyset,R)$ in a first step, and for each of these two instances, the measure decreases by at least one, since $|R_3|\ge |L_3|$. In a second step, an instance $(\emptyset,\emptyset,R)$ is replaced by $(L',S',R')$ where the measure decreases by at least one since $|L'|>|S'|$.
For those reductions where only the partition $(L,S,R)$ is modified, it suffices to note that either a vertex is moved from $S$ to $L\cup R$ (in the cases No neighbor in $L$ and No neighbor in $R$) or that a degree-3 vertex is moved from $R$ to $S$ and a degree-3 vertex is moved from $S$ to $L$ (Two neighbors in $L$ with $|R_3|\ge |L_3|+2$).

\subsection{MAX 2-CSP result, consequences, and extensions} 

The previous section established that our algorithm solves cubic instances
of \mcsp in time $r^{n/5+o(n)}$.
The algorithm uses only polynomial space,
since it is recursive with polynomial recursion depth
and uses polynomial space in each recursive call.

The polynomial-space \mcsp algorithm described in \cite{ScottS07},
for solving an arbitrary instance with $n$ vertices and $m$
edges in time $\Ostar{r^{19m/100}}$, 
relied upon the ability to solve cubic cases in time $\Ostar{r^{m/6}}$.
A theorem in the same work,
\cite[Theorem 22]{ScottS07},
quantifies how speedups in solving the cubic case translate into
speedups for general instances and for instances with constraint graphs of maximum degree~4,
and shows that if the cubic solver is polynomial-space
then so is the general algorithm.
For instance, for maximum degree 4 graphs, Theorem 22 of \cite{ScottS07} states that an algorithm for cubic graphs with running time $O^*(r^{\alpha m})$ leads to an algorithm for maximum degree 4 graphs with running time $O^*(r^{\beta_4(\alpha) m})$, where
\begin{align*}
	\beta_4(\alpha) = \begin{cases}
		1/8 + (3/8) \alpha & 1/9 \le \alpha \le 1/5\\
		1/6 & 0 \le \alpha \le 1/9 .
	\end{cases}
\end{align*}
Thus, our cubic algorithm's running time of $O^*(r^{(2/15+\epsilon) m})$, for any $\epsilon>0$, leads to an algorithm with running time $O^*(r^{(7/40 +\epsilon') m})$ for arbitrarily small $\epsilon'>0$ for maximum degree 4 graphs.

\begin{theorem} \label{CSPthm}
On input of a \mcsp instance on a constraint graph $G$
with $n$ vertices and $m$ edges,
the described algorithm solves $G$ in time
 $r^{n/5+o(n)} = r^{2m/15+o(m)}$ if $G$ is cubic,
 time $r^{7m/40+o(m)}$ if $G$ has maximum degree~4, and
 time $r^{9m/50+o(m)}$ in general,
 {in all cases }using {only }polynomial space.
\end{theorem}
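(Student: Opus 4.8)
The plan is to read Theorem~\ref{CSPthm} as the assembly of two essentially independent pieces: the cubic bound, which is delivered directly by the \mc analysis just carried out in this subsection, and the maximum-degree-4 and general bounds, which are obtained by feeding the improved cubic subroutine into the Scott and Sorkin branching framework via \cite[Theorem 22]{ScottS07}. I would organize the argument in three movements.

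First, the cubic case. I would exhibit the feasible weight vector
\[
\wr = 0.2+\varepsilon,\quad \ws = 0.7,\quad \wst = 0.6,\quad \wb = 0.2,\quad \wc = 0.1,\quad \wdd = \wb+1,
\]
and verify in one finite pass that it satisfies \eqref{nonneg} together with every transformation constraint derived above --- \eqref{degredL}--\eqref{degredR2}, \eqref{r1}, \eqref{2S1}--\eqref{2S0}, \eqref{r2}, \eqref{noR2}--\eqref{noR1}, \eqref{r5}, \eqref{red2L0}--\eqref{red2L2}, and \eqref{red2R1}--\eqref{red2R2}. As these are exactly the instances of \eqref{trans} for all reductions, and \eqref{mupos} holds via \eqref{nonneg}, Lemma~\ref{lem:measureanalysis} applies once a polynomially bounded depth measure is supplied; the measure $\eta(L,S,R,E) = 3\ss + 2\sr + \sl + 2|E|$ decreases by at least one at each reduction and is polynomially bounded, furnishing the $\eta$ of that lemma. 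Hence any cubic instance is solved in time $\Ostar{r^{\mu}} = r^{(0.2+\varepsilon)n}$; letting $\varepsilon\to 0$ (as permitted by Lemma~\ref{lem:sep-cubic}, the sole source of the subexponential slack) gives $r^{n/5+o(n)}$, and substituting $m = \tfrac32 n$ for a cubic graph yields the equivalent $r^{2m/15+o(m)}$.

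Second, the propagation. I would invoke \cite[Theorem 22]{ScottS07} as a modular statement: the Scott and Sorkin algorithm for arbitrary instances branches on vertices of degree at least 4 to reduce to subcubic constraint graphs, and its per-edge running-time exponent is the optimum of a convex program in which the cubic running time enters as the degree-3 boundary value. Replacing their cubic bound $r^{m/6}$ by our $r^{2m/15+o(m)}$ and re-solving this program is what yields the claimed exponents, $7m/40+o(m)$ at maximum degree 4 and $9m/50+o(m)$ in general. The $o(\cdot)$ slack is inherited unchanged, being subexponential while the program's other coefficients are constants, and polynomial space is preserved throughout: the separator of Lemma~\ref{lem:sep-cubic} is computed in polynomial time and space, our branching is a depth-first recursion of polynomial depth, and the Scott and Sorkin outer algorithm is itself polynomial-space, so the composition is as well.

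The hard part will be verifying that our separator-based cubic routine legitimately meets the hypotheses of \cite[Theorem 22]{ScottS07}, so that the handoff composes. Two points need care. One is that the outer algorithm reduces to \emph{subcubic} rather than exactly cubic instances, so I must confirm that $\mu(L,S,R)$ and its reductions already absorb every vertex of degree at most 3 --- they do, since degrees $0,1,2$ are handled by Reduction~0, half-edge deletion, and Reduction~II in $S$, while the separation reduction regenerates a balanced $(L,S,R)$ whenever $S=\emptyset$ --- so that no residual low-degree case escapes the stated bound. The other is the numerical re-optimization: I would have to re-run the Scott and Sorkin convex program with only the degree-3 boundary value changed and check that the optimum lands exactly at $7/40$ and $9/50$, rather than merely near them. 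This is routine but is precisely where an arithmetic slip would hide; everything else reduces to bookkeeping over the already-established per-reduction constraints.
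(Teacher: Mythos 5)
Your proposal follows the paper's own route exactly: the cubic bound is read off from the feasible weight vector satisfying all the listed constraints via Lemma~\ref{lem:measureanalysis} (with the same auxiliary measure $\eta$), and the degree-4 and general bounds are obtained by substituting the improved cubic subroutine into \cite[Theorem 22]{ScottS07}. The paper states this even more tersely than you do, so your additional care about the subcubic handoff and the re-optimized exponents is a faithful elaboration rather than a different argument.
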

\noindent
{%
This improves respectively on the previous best polynomial-space running times of
$\Ostar{r^{m/6}}$, $\Ostar{r^{3m/16}}$, and
$\Ostar{r^{19m/100}}$, all from \cite{ScottS07}.
It also improves on the fastest known polynomial-space running times for \textsc{Max Cut} on cubic, maximum degree~4, and general graphs.
The \textsc{Max Cut} problem, a special case of \mtcsp,
is to find a bipartition of the vertices of a given graph maximizing the number of edges crossing the bipartition.
Its previously fastest polynomial-space algorithms were the one by Scott and Sorkin \cite{ScottS07} for cubic, maximum degree~4, and general graphs.

Theorem \ref{CSPthm} extends instantly to
generalizations of \mcsp introduced by Scott and Sorkin in \cite{PCSP}:
Ring CSP (RCSP), where the scores take values in an arbitrary ring,
and Polynomial CSP (PCSP), where the scores are multivariate formal polynomials.
In analogy with the definition of a \mcsp instance in and around \eqref{score},
an instance $I=(G,\mathcal{S})$ of RCSP over a ring $R$ and domain $[r]$ is
composed of a graph $G=(V,E)$ and a set $\mathcal{S}$ of score functions:
a \emph{dyadic} score function $s_e : [r]^2 \rightarrow R$ for each edge $e \in E$, a \emph{monadic} score function $s_v : [r] \rightarrow R$ for each vertex $v \in V$, and a \emph{niladic} score function $s_{\emptyset} : [r]^0 \rightarrow R$.
The score of an assignment $\phi : V \rightarrow [r]$ is the ring element
\begin{align*}
s(\phi) := s_{\emptyset} \cdot \prod_{v\in V} s_v(\phi(v)) \cdot \prod_{uv \in E} s_{uv} (\phi(u), \phi(v)),
\end{align*}
and the RCSP problem is to compute the \emph{partition function}
\begin{align*}
Z_I = \sum_{\phi : V \rightarrow [r]} s(\phi).
\end{align*}
Comparing with \mcsp, in RCSP the scores are ring-valued (rather than real-valued),
the scores are multiplied (rather than added),
and the solution is the sum of all assignment scores (rather than the maximum).
(In fact, for our purposes the ``ring'' in RCSP can be relaxed to a semiring,
like a ring but lacking negation,
and \mcsp may be viewed as the case of the semi-ring $R$ over the reals,
where the product operation for $R$ is the real sum,
the addition operation for $R$ is the maximum of the two reals,
and the zero for $R$ is $-\infty$.)

Scott and Sorkin \cite{PCSP} showed that any branching algorithm for \mcsp based solely on
\Reds{0, 1, 2, and 3} can be extended to RCSP. 
They replace \Reds{0, 1, 2, and 3} by reductions appropriate for the more general
setting and prove that each one can be executed using $O(r^3)$ ring operations,
while changing the graph in the same way as the original reductions.
The number of ring operations is thus of the same order as the number of instances in the \mcsp branching algorithm.
Where Theorem~4 of \cite{PCSP} applied this observation to ``Algorithm~B'' of \cite{ScottS07},
applying the observation to our algorithm yields the following theorem.

\begin{theorem}
Let $R$ be a ring.
On input of an RCSP instance $I$ over $R$
with domain $[r]$, $n$ variables, $m$ constraints, and constraint graph $G$,
the ring extension of the algorithm described here
calculates the partition function $Z_I$
of $I$ in polynomial space
and with $r^{n/5+o(n)}$ ring operations if $G$ is cubic,
$r^{7m/40+o(m)}$ ring operations if $G$ has maximum degree~4, and
$r^{9m/50+o(m)}$ ring operations in general.
\end{theorem}

\cite{PCSP} defines PCSP as
the special case of RCSP where the ring $R$ is a polynomial ring over the reals.
The polynomials may be multivariate, and indeed may have negative and fractional powers.
The obvious use of this is for generating functions.
For example, to count the number of cuts $a_{ij}$ of a graph with $i$ vertices in side ``0'' 
and $j$ cut edges,
take the score of each vertex $v$ to be $x^0$ if $\phi(v)=0$ and $x^1$ if $\phi(v)=1$,
and the score of each edge $(u,v)$ to be $y^0$ if $\phi(u)=\phi(v)$ and $y^1$ if $\phi(u) \neq \phi(v)$;
then the partition function $Z=\sum a_{ij} x^i y^j$ gives all the desired counts $a_{ij}$.
For further details and other applications see \cite{PCSP}.
As a special case of RCSP, PCSP can of course be solved by the same algorithm
in the same number of ring operations,
but here our interest is in the total running time,
so the times to multiply and add the polynomials must be taken into account.
The details of this are beyond the scope of this article, but
\cite{PCSP} defines ``polynomially bounded'' PCSP instances,
where the ring operations can be performed efficiently,
and a more general class of ``prunable'' PCSP instances, where
the leading term of the partition function can be computed efficiently
(even though the full polynomial may not be efficiently computable,
and may for example be of size exponential in the input size).
In analogy with \cite[Theorem~12]{PCSP},
this yields the following theorem.

\begin{theorem}
The PCSP extension of our algorithm solves any polynomially bounded PCSP instance
(or finds the pruned partition function of any prunable PCSP instance)
with domain $[r]$, $n$ variables, $m$ constraints, and constraint graph $G$,
in polynomial space and in
time $r^{n/5+o(n)}$ if $G$ is cubic,
time $r^{7m/40+o(m)}$ if $G$ has maximum degree~4, and
time $r^{9m/50+o(m)}$ in general.
\end{theorem}

Consequences include the ability to efficiently solve graph bisection,
count Max $r$-SAT solutions,
find judicious partitions,
and compute the Ising partition function;
the formulation of these and other problems as PCSPs is given in \cite{PCSP}.
}%
{

Edwards \cite{Edwards16} derived, using a result from \cite{EdwardsM15}, a general way to design
fast algorithms in terms of average degree based on algorithms whose running
times depend on the number of edges of the graph.
In particular, the running time for general graphs from Theorem~\ref{CSPthm} (which is matched by Edwards \cite{Edwards16})
leads to a polynomial-space algorithm solving
\mcsp instances with average degree $d\ge 2$ in time $\Ostar{r^{n\cdot \left( 1-\frac{3.4}{d+1}+O(1/d^3) \right)}}$ and polynomial space.

\begin{theorem}[\cite{Edwards16}]
	\mcsp instances with average degree $d\ge 2$ can be solved in time $\Ostar{r^{n\cdot \left( 1-\frac{3.4}{d+1}+O(1/d^3) \right)}}$ and polynomial space.
\end{theorem}

Finally, we{ would like to} highlight that any improvement on Lemma \ref{lem:sep-cubic} will automatically improve our running times.
}

{
\subsection{Lower Bounds for the Scott-Sorkin Algorithm}
\label{subsec:lb}

{%
Scott and Sorkin \cite{ScottS07} analyzed the running time of their \mcsp algorithm with respect to $m$, the number of edges of the input instance, and showed the following upper bounds:
\begin{itemize}
 \item $\Ostar{r^{m/6}}$ for subcubic instance, 
 \item $\Ostar{r^{3m/16}}$ for instances with maximum degree 4, and
 \item $\Ostar{r^{19m/100}}$ for instances with no degree restrictions.
\end{itemize}
In this subsection we prove matching lower bounds.
Define $f(n) = \Thstar{g(n)}$ if $f(n) = \Ostar{g(n)}$ and $f(n) = \Omega(g(n))$.
{%
By Lemmas \ref{lem:lbthree}--\ref{lem:lbfive}, proved hereafter, and the analysis from \cite{ScottS07},
we obtain the following theorem.
}

\begin{theorem}
 The worst-case running time of the Scott-Sorkin algorithm for \mcsp is
 \begin{itemize}
  \item $\Thstar{r^{m/6}}$ for subcubic instances, 
  \item $\Thstar{r^{3m/16}}$ for instances with maximum degree 4, and
  \item $\Thstar{r^{19m/100}}$ for instances with no degree restrictions,
 \end{itemize}
 where $m$ is the number of edges in the input instance,
 even when the input instance is connected.
\end{theorem}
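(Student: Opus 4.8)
The three claimed upper bounds are exactly those established in \cite{ScottS07}, so the only new content is the matching lower bounds: that there exist connected instances on which the Scott-Sorkin algorithm's search tree has $\Omega(r^{m/6})$, $\Omega(r^{3m/16})$, and $\Omega(r^{19m/100})$ leaves for the three respective degree classes. The plan is to supply three families of such instances in Lemmas~\ref{lem:lbthree}--\ref{lem:lbfive}, one per regime, and then read off the theorem by pairing each lower bound with the corresponding upper bound. The subtlety that must be addressed throughout, and which the hypothesis ``connected'' signals, is that the Scott-Sorkin algorithm gains efficiency precisely by solving disjoint components independently; hence the naive extremal instances (a disjoint union of $K_4$'s for the cubic case, giving $r^{n/4}=r^{m/6}$ for the simple algorithm) are \emph{not} worst cases for it, and we must instead exhibit instances that the algorithm is forced to keep connected.

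For each regime I would start from the configuration that makes the \mc recurrence of \cite{ScottS07} tight---the branching step whose inequality holds with equality at the optimal weights---and realize it as a small gadget equipped with a few low-degree ``ports'' through which gadgets can be linked. For the cubic case the extremal branching is the degree-3 pivot that, after the ensuing Reduction~II contractions, removes six edges per level; I would tile copies of the tight gadget into a single connected subcubic graph (e.g.\ a long ring or path of gadgets joined at their ports) so that the total edge count is $m=6k+O(1)$ for $k$ branchings, giving $r^k = r^{m/6}$. The key verification is that, for a pivot order consistent with the algorithm's selection rule, each branching realizes the tight step, the simplification rules (Reductions~0--II) never collapse more of the instance than the recurrence accounts for, and---crucially---the residual graph stays connected after each branch, so the component-splitting shortcut is never available. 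This is the ``slightly unlucky choice of pivot vertices'' alluded to earlier: among the branchings the algorithm may perform, one valid execution keeps the graph connected, which suffices to lower-bound the worst-case running time.

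The degree-4 and general cases follow the same template but with heavier bookkeeping. For maximum degree~4 the tight gadget is built around the extremal degree-4 branching giving $r^{3m/16}$, and for the general case---where the bound $r^{19m/100}$ arises from balancing several branching rules at once---the instance must weave together, in a single connected graph, copies of every configuration that is simultaneously tight at the optimal weights, so that the algorithm is driven through exactly the most constraining branchings of its own analysis. I expect the main obstacle to be this last point: guaranteeing connectivity without letting the linking edges either (i) introduce low-degree vertices that a simplification rule could exploit to shortcut a branch, or (ii) create a cut whose removal during the recursion splits off a component and activates the independent-solve speedup. Designing the ports and the pivot order so that the tight branching is preserved tile-by-tile while the graph provably remains connected throughout the recursion is the delicate part; once the three families are in hand, combining them with the \cite{ScottS07} upper bounds yields the stated $\Thstar{r^{m/6}}$, $\Thstar{r^{3m/16}}$, and $\Thstar{r^{19m/100}}$ running times, and since each family is connected by construction the final clause of the theorem holds.
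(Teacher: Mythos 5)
Your proposal follows essentially the same route as the paper: the upper bounds are quoted from Scott--Sorkin, and the matching lower bounds are obtained by tiling the tight branching configurations into connected families --- the paper's $G_3(n)$ is exactly your ``ring of gadgets'' (a cycle threading $n/4$ near-$K_4$ blocks, each branching costing one Reduction III and six edges), with $G_4$ and $G_5$ layered on top so that degree-5, 4, and 3 branchings occur in the LP-dual proportion $8:6:5$, which is precisely the ``weaving together of all simultaneously tight configurations'' you describe. The only content your sketch leaves open is the explicit gadget and port design (and the verification that the stated pivot order is consistent with the algorithm's preference rules while keeping the residual graph connected), which the paper supplies concretely in Lemmas~\ref{lem:lbthree}--\ref{lem:lbfive}.
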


{
The lower bounds are established by Lemmas \ref{lem:lbthree}--\ref{lem:lbfive}, starting with instances with maximum degree 3.
These lemmas define infinite families of instances with maximum degrees $3$, $4$, and $5$, and prove that the algorithm may perform $\Omega(r^{m/6})$, $\Omega(r^{3m/16})$, and $\Omega(r^{19m/100})$ steps, respectively.
From the dual solution in the LP analysis in \cite{ScottS07},
it follows that the $r^{19m/100}$ time bound for general instances rests on
the algorithm branching on vertices of degrees $5$, $4$, and $3$ in the
proportion $8$ to $6$ to $5$,
and that no additional \Red{0, 1, and 2} occur beyond the ones directly needed by \Red{3} (an application of \Red{3} on a vertex $v$ necessarily leads to an application of \Red{0, 1, or 2} on each neighbor of $v$).
The LP analysis does not indicate how to go about constructing instances for
which this occurs, nor prove their existence.
The running time lower bound of
Lemma~\ref{lem:lbfive} is proved by explicitly constructing such instances.

\begin{figure}[tbp]
	\begin{tikzpicture}[scale=1.1]
	\draw (0,0) node[vertex,label=above:$a_{n-3}$] (al) {};
	\draw (0.5,0.5) node[vertex,label=above:$a_{n-2}$] (bl) {};
	\draw (1,0) node[vertex,label=above:$a_{n-1}$] (cl) {};
	\draw (0.5,-0.5) node[vertex,label=below:$a_n$] (dl) {};
	\draw (2,0) node[vertex,label=above:$a_1$] (a1) {};
	\draw (2.5,0.5) node[vertex,label=above:$a_2$] (b1) {};
	\draw (3,0) node[vertex,label=above:$a_3$] (c1) {};
	\draw (2.5,-0.5) node[vertex,label=below:$a_4$] (d1) {};
	\draw (-2,0) node[vertex,label=above:$a_{n-7}$] (asl) {};
	\draw (-1.5,0.5) node[vertex,label=above:$a_{n-6}$] (bsl) {};
	\draw (-1,0) node[vertex,label=above:$a_{n-5}$] (csl) {};
	\draw (-1.5,-0.5) node[vertex,label=below:$a_{n-4}$] (dsl) {};
	
	\node at (-4,0) {$G_3(n)$};
	
	\draw (al)--(bl)--(cl)--(dl)--(al) (bl)--(dl) (cl)--(a1);
	\draw (a1)--(b1)--(c1)--(d1)--(a1) (b1)--(d1);
	\draw 
	(c1)--(3.5,0) .. controls +(0.5,0) and +(0.5,0) .. (3.5,1);
	\draw 
	(-2.5,1) .. controls +(-0.5,0) and +(-0.5,0) .. (-2.5,0)--(asl);
	\draw (asl)--(bsl)--(csl)--(dsl)--(asl) (bsl)--(dsl) (csl)--(al);
	
	\begin{scope}[yshift=-3cm]
	\draw (0,0) node[vertex,label=above:$a_{n-3}$] (al) {};
	\draw (0.5,0.5) node[vertex,label=above:$a_{n-2}$] (bl) {};
	\draw (1,0) node[vertex,label=above:$a_{n-1}$] (cl) {};
	\draw (2,0) node[vertex,label=above:$a_1$] (a1) {};
	\draw (2.5,0.5) node[vertex,label=above:$a_2$] (b1) {};
	\draw (3,0) node[vertex,label=above:$a_3$] (c1) {};
	\draw (2.5,-0.5) node[vertex,label=below:$a_4$] (d1) {};
	\draw (-2,0) node[vertex,label=above:$a_{n-7}$] (asl) {};
	\draw (-1.5,0.5) node[vertex,label=above:$a_{n-6}$] (bsl) {};
	\draw (-1,0) node[vertex,label=above:$a_{n-5}$] (csl) {};
	\draw (-1.5,-0.5) node[vertex,label=below:$a_{n-4}$] (dsl) {};
	
	\draw (al)--(bl)--(cl) (cl)--(a1);
	\draw (a1)--(b1)--(c1)--(d1)--(a1) (b1)--(d1);
	\draw 
	(c1)--(3.5,0) .. controls +(0.5,0) and +(0.5,0) .. (3.5,1);
	\draw 
	(-2.5,1) .. controls +(-0.5,0) and +(-0.5,0) .. (-2.5,0)--(asl);
	\draw (asl)--(bsl)--(csl)--(dsl)--(asl) (bsl)--(dsl) (csl)--(al);
	\end{scope}
	\begin{scope}[yshift=-5.5cm]
	\draw (2,0) node[vertex,label=above:$a_1$] (a1) {};
	\draw (2.5,0.5) node[vertex,label=above:$a_2$] (b1) {};
	\draw (3,0) node[vertex,label=above:$a_3$] (c1) {};
	\draw (2.5,-0.5) node[vertex,label=below:$a_4$] (d1) {};
	\draw (-2,0) node[vertex,label=above:$a_{n-7}$] (asl) {};
	\draw (-1.5,0.5) node[vertex,label=above:$a_{n-6}$] (bsl) {};
	\draw (-1,0) node[vertex,label=above:$a_{n-5}$] (csl) {};
	\draw (-1.5,-0.5) node[vertex,label=below:$a_{n-4}$] (dsl) {};
	
	\draw (csl)--(a1);
	\draw (a1)--(b1)--(c1)--(d1)--(a1) (b1)--(d1);
	\draw 
	(c1)--(3.5,0) .. controls +(0.5,0) and +(0.5,0) .. (3.5,1);
	\draw 
	(-2.5,1) .. controls +(-0.5,0) and +(-0.5,0) .. (-2.5,0)--(asl);
	\draw (asl)--(bsl)--(csl)--(dsl)--(asl) (bsl)--(dsl);
	\end{scope}
	
	\draw[->] (0.5,-1)--(0.5,-2) node[midway,right] {split on $a_n$};
	\draw[->] (0.5,-3.6)--(0.5,-4.9) node[midway,right] {simplify};
	\end{tikzpicture}
	\caption{\label{fig:lbcubic} The constraint graph $G_3(n)$ and how it evolves when branching on vertex $a_n$.}
\end{figure}
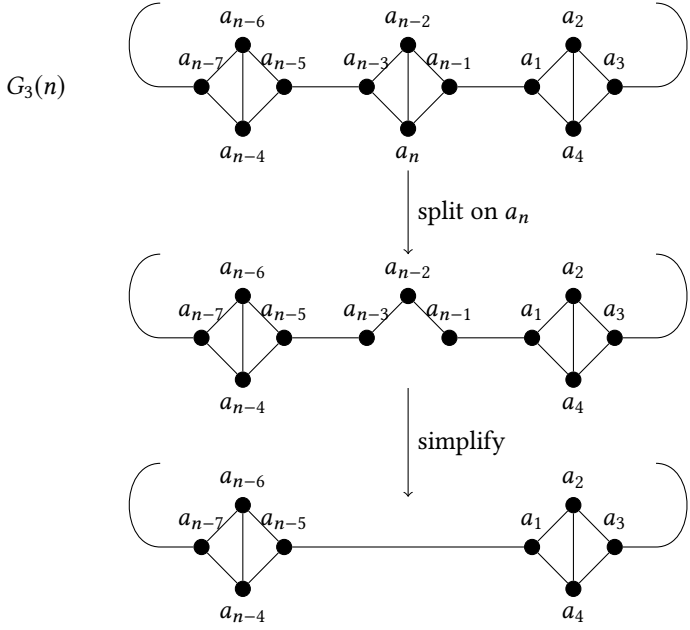

\begin{lemma}\label{lem:lbthree}
 On connected subcubic instances, the Scott-Sorkin algorithm has worst-case running time $\Omega(r^{m/6})$.
\end{lemma}
\begin{proof}
 The lemma is proven by exhibiting an infinite family of connected 3-regular instances such that the algorithm may perform $r^{n/4}$ steps when given an instance on $n$ vertices from this family.

 Let $n$ be an integer that is divisible by $4$. Consider the execution of the Scott-Sorkin algorithm on
 a \mcsp instance whose constraint graph is the graph $G_3(n)$ from Figure \ref{fig:lbcubic}. It is obtained from
 a cycle $(a_1,a_2,a_3,a_5,a_6,a_7,\ldots,a_{n-3},a_{n-2},a_{n-1})$ on $3n/4$ vertices by adding $n/4$ new vertices $a_4,a_8,\ldots,a_{n}$ and the edges $a_{4i} a_{4i-3},a_{4i} a_{4i-2},a_{4i} a_{4i-1}, 1\le i\le n/4$.

 The algorithm selects an arbitrary vertex of degree $3$ and splits on it (i.e., it performs \Red{3} on it).
 Suppose the algorithm selects $a_{n}$.

 Let us first show that the constraint graph that is obtained by performing \Red{3} on $a_{n}$ and simplifying the instance is $G_3(n-4)$.
 If $n=4$, observe that $G_3(4)$ is a complete graph on $4$ vertices.
 Branching on one vertex leaves a complete graph on $3$ vertices which vanishes by applications of \Red{2}, \Red{1} and \Red{0}.
 If $n>4$, the algorithm splits on variable $a_{n}$, which creates $r$ instances where
 $a_{n}$ is removed. Afterwards, \Red{2} is performed on the variables $a_{n-3}, a_{n-2},$ and $a_{n-1}$, and the constraint graph
 of the resulting instances is $G_3(n-4)$.

 Thus, \Red{3} is executed $n/4 = m/6$ times recursively by the algorithm, resulting in a running time of $\Omega(r^{m/6})$.
\end{proof}

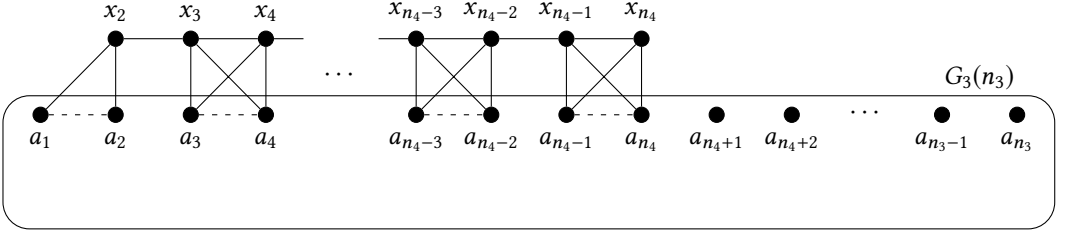
\begin{figure}[tbp]
	\begin{tikzpicture}[scale=1]
	\draw[rounded corners=10pt] (-0.5,-1.5) rectangle (13.5,0.25);
	\node at (12.5,0.5) {$G_3(n_3)$};
	
	\draw (0,0) node[vertex,label=below:$a_{1}$] (a1) {};
	\draw (1,0) node[vertex,label=below:$a_{2}$] (a2) {};
	\draw (2,0) node[vertex,label=below:$a_{3}$] (a3) {};
	\draw (3,0) node[vertex,label=below:$a_{4}$] (a4) {};
	\draw (5,0) node[vertex,label=below:$a_{n_4-3}$] (am3) {};
	\draw (6,0) node[vertex,label=below:$a_{n_4-2}$] (am2) {};
	\draw (7,0) node[vertex,label=below:$a_{n_4-1}$] (am1) {};
	\draw (8,0) node[vertex,label=below:$a_{n_4}$] (am0) {};
	\draw (9,0) node[vertex,label=below:$a_{n_4+1}$] (ap1) {};
	\draw (10,0) node[vertex,label=below:$a_{n_4+2}$] (ap2) {};
	\draw (12,0) node[vertex,label=below:$a_{n_3-1}$] (ap3) {};
	\draw (13,0) node[vertex,label=below:$a_{n_3}$] (ap4) {};
	\draw (1,1) node[vertex,label=above:$x_{2}$] (x2) {};
	\draw (2,1) node[vertex,label=above:$x_{3}$] (x3) {};
	\draw (3,1) node[vertex,label=above:$x_{4}$] (x4) {};
	\draw (5,1) node[vertex,label=above:$x_{n_4-3}$] (xm3) {};
	\draw (6,1) node[vertex,label=above:$x_{n_4-2}$] (xm2) {};
	\draw (7,1) node[vertex,label=above:$x_{n_4-1}$] (xm1) {};
	\draw (8,1) node[vertex,label=above:$x_{n_4}$] (xm0) {};
	
	\draw (a1)--(x2)--(a2) (x2)--(x3)--(a3)--(x4)--(x3)--(a4)--(x4);
	\draw 
	(x4)--(3.5,1);
	\node at (4,0.5) {$\cdots$};
	\node at (11,0) {$\cdots$};
	\draw 
	(4.5,1)--(xm3);
	\draw (xm3)--(am3)--(xm2)--(xm3)--(am2)--(xm2)--(xm1)--(am1)--(xm0)--(am0)--(xm1)--(xm0);
	\draw[mydashed] (a1)--(a2) (a3)--(a4) (am3)--(am2) (am1)--(am0);
	\end{tikzpicture}
	\caption{\label{fig:lbdegfour} The graph $G_4(n_3,n_4)$, with $n_3$ divisible by $4$ and $n_4\le n_3$ divisible by $2$,
		is obtained from $G_3(n_3)$ by removing the matching $\set{a_1a_2,\ldots,a_{n_4-1}a_{n_4}}$ (dashed edges), and adding $n_4-1$ new vertices
		$x_2,x_3,\ldots, x_{n_4}$ and the depicted edges.}
\end{figure}

\begin{lemma}\label{lem:lbfour}
 On connected instances with maximum degree 4, the Scott-Sorkin algorithm has worst-case running time $\Omega(r^{3m/16})$.
\end{lemma}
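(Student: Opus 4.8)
The plan is to mirror the proof of Lemma~\ref{lem:lbthree}, replacing the cubic family $G_3(n)$ by an infinite family $G_4(n)$ of \emph{connected} graphs of maximum degree~4 on which the Scott-Sorkin algorithm performs $\Omega(r^{3m/16})$ Reduction~III branchings. As in the cubic case, I would build $G_4(n)$ as a cyclic chain of identical gadgets joined at a few low-degree ``ports,'' so that the graph is connected and the only departures from regularity occur at the ports, whose contribution is negligible as $n\to\infty$. The entire argument then reduces to one local claim: branching on a single gadget, together with the simplifications it triggers, turns $G_4(n)$ into $G_4(n-k)$ for a fixed constant $k$, so that the algorithm recurses on the same family and the branchings compound.

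First I would read off the target gadget from the dual of the linear program underpinning the $r^{3m/16}$ analysis in \cite{ScottS07}. That dual identifies which Reduction~III branchings are \emph{tight}, i.e.\ make the least progress per edge; for maximum degree~4 these are, in essence, branchings on degree-4 vertices whose neighbors also have degree~4, with no Reduction~0, I, or II triggered (the neighbors drop only from degree~3 to degree~2 when they must, never sooner). I would therefore design the gadget so that, \emph{following the algorithm's own pivot rule}, it is consumed by exactly $c$ such tight branchings while accounting for exactly $b$ edges, with $c/b=3/16$ (for instance $c=3$ tight branchings per gadget carrying $b=16$ edges). Summing over the $\Theta(m/b)$ gadgets yields $(3/16-o(1))\,m$ branchings; and since every Reduction~III produces $r$ subinstances sharing a common constraint graph, every root-to-leaf path carries the same number of branchings, so the search tree has $r^{(3/16-o(1))m}$ leaves, giving the claimed $\Omega(r^{3m/16})$ bound.

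Second, I would establish the local claim by tracing the algorithm step by step on one gadget, exactly as the cubic proof verifies that splitting on $a_n$ and then applying Reduction~II to $a_{n-3},a_{n-2},a_{n-1}$ returns $G_3(n-4)$. Concretely this means checking that, after the intended sequence of degree-4 branchings, the ports and their surviving neighbors are left in a path-like degree-2 configuration that the simplification rules contract back into a clean copy of $G_4(n-k)$, and confirming the base case (the smallest member, analogous to $G_3(4)=K_4$, collapsing in constant time under Reductions~II, I, and~0).

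The main obstacle is making the gadget \emph{robust against shortcuts}: the lower bound must hold for the actual Scott-Sorkin algorithm, whose reduction repertoire is richer than Reductions~0--III, and a gadget that is too dense or too symmetric can be dismantled far more cheaply than $3/16$ per edge. Indeed, a naive dense gadget can be made to \emph{overshoot} $3m/16$ along a non-canonical ``branch on all max-degree vertices, then finish cubically'' execution; since the proven upper bound forbids this, the overshoot is itself a warning that the real algorithm would instead short-circuit such a gadget, so that it fails to give a tight lower bound. The design must therefore force the algorithm's genuine pivot choice onto the tight degree-4 configuration and block every cheaper reduction, so that the \emph{only} available execution realizes the LP's worst-case proportion. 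This is the same difficulty flagged for Lemma~\ref{lem:lbfive}, though milder here because only degree-4 branchings must be forced; once the gadget is pinned down, connectivity, the port bookkeeping, and the leaf count are routine.
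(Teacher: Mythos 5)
Your meta-strategy (an LP-dual-guided gadget family consumed recursively) is in the right spirit, but the construction you sketch misidentifies the tight configuration and consequently cannot realize the $3/16$ ratio. The paper's proof does not use a single-phase, purely degree-4 self-similar family. Instead it builds $G_4(n_3,n_4)$ by grafting a degree-4 ``strip'' of new vertices $x_2,\dots,x_{n_4}$ onto the cubic graph $G_3(n_3)$ of Lemma~\ref{lem:lbthree} (deleting a matching of $G_3(n_3)$ so that the affected $a$-vertices rise to degree 4). Each degree-4 branching, on the pivot $x_{n_4-1}$, \emph{does} trigger a Reduction II (on $x_{n_4}$) and accounts for $5$ edges while returning exactly $G_4(n_3,n_4-2)$; once the strip is exhausted the residue is $G_3(n_3)$, and Lemma~\ref{lem:lbthree} supplies $n_3/4$ further degree-3 branchings at $6$ edges each. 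With $n_3=n_4$ the branching-per-edge rate is the weighted average $\frac{2\cdot 1+1\cdot 1}{2\cdot 5+1\cdot 6}=\frac{3}{16}$. Your proposed tight case --- a degree-4 pivot with all neighbors of degree 4 and no Reduction 0, I, or II --- is not the LP-tight configuration, and after the first branching it is not even available to the algorithm, which prefers degree-4 pivots \emph{with} degree-3 neighbors. Nor does your arithmetic close: three such branchings account for $12$ edges, and three branchings each triggering one Reduction II account for $15$, neither of which is the required $16$.

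The deeper gap is that $3m/16$ is achievable only as a \emph{mixture} of degree-4 branchings (rate $1/5$ per edge) and degree-3 branchings (rate $1/6$ per edge) in proportion $2:1$, and the algorithm's pivot rule (always branch on a maximum-degree vertex) forces all degree-4 branchings to precede all degree-3 branchings in a connected instance. A self-similar family $G_4(n)\to G_4(n-k)$ whose gadgets are consumed entirely by tight degree-4 branchings therefore cannot exist at rate $3/16$: a sustained pure degree-4 phase runs at $1/5>3/16$ per edge but necessarily leaves behind a growing cubic residue, and it is precisely the cost of finishing that residue (via Lemma~\ref{lem:lbthree}) that pulls the amortized rate down to $3/16$. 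Any correct construction must therefore be two-phase and lean on the cubic lower bound; that reduction to Lemma~\ref{lem:lbthree} is the step your plan omits.
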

\begin{proof}
 The lemma is proven by exhibiting an infinite family of connected instances with maximum degree 4, where only a constant number of vertices have degree less than $4$, such that the algorithm may perform $\Omega(r^{3n/8})$ steps when given an instance on $n$ vertices from this family.

 The graph family will use the following construction. The graph $G_4(n_3,n_4)$, with $n_3$ divisible by $4$ and $n_4 \le n_3$ divisible by 2, is obtained from $G_3(n_3)$ by
 the following modifications. Let $M = \{a_1a_2,a_3a_4,\cdots,a_{n_4-1}a_{n_4}\}$ and observe that $M$ is a matching in $G_3(n_3)$.
 Remove the edges in $M$ from the graph.
 Add a path $(x_2,x_3,\ldots,x_{n_4})$ on $n_4-1$ new vertices to the graph, and add the edges $x_i a_i, 2\le i\le n_4$, the edge $x_2a_1$, and the edges
 $x_i a_{i-1}, x_{i-1}a_i$, for all even $i=4,6,\ldots,n_4$. See Figure \ref{fig:lbdegfour}.

 On graphs with maximum degree $4$, the Scott-Sorkin algorithm performs \Red{3} on vertices of degree $4$,
 with a preference for those vertices of degree $4$ that have neighbors of degree $3$.
 Assume that $n_4\ge 4$ and that the algorithm splits on $x_{n_4-1}$.
 We claim that this creates two instances, both with constraint graph $G_4(n_3,n_4-2)$.
 Indeed, branching on $x_{n_4-1}$ creates a constraint graph where $x_{n_4-1}$ is removed, which triggers \Red{2} on $x_{n_4}$, resulting in $G_4(n_3,n_4-2)$.
 If $n_4=2$, then \Red{2} applies to $x_2$, creating $G_4(n_3,0) = G_3(n_3)$.

 We conclude that \Red{3} is executed $n_4/2-2$ times recursively by the algorithm before reaching instances with constraint graph $G_3(n_3)$, for which
 the running time is characterized by Lemma \ref{lem:lbthree}.
 The algorithm may therefore execute
 \begin{align}
  r^{n_4/2-2} \cdot r^{n_3/4} \label{eq:lb43}
 \end{align}
 steps.
 Setting $n_3=n_4$, we obtain a running time of $\Omega(r^{(n/2-1)/2-2}\cdot r^{n/8}) = \Omega(r^{3n/8}) = \Omega(r^{3m/16})$.
\end{proof}

\begin{figure}[tbp]
	\begin{tikzpicture}[scale=0.9]
	\draw[rounded corners=10pt] (-0.9,-1.7) rectangle (2,4.7);
	\draw[rounded corners=10pt] (3.5,-1.7) rectangle (6.4,4.7);
	
	\draw (1.5,-1) node[vertex,label=left:$a_{n_4/2+n_3}$] (ul) {};
	\draw (1.5,0) node[vertex,label=left:$a_{n_4/2+n_3-1}$] (usl) {};
	\draw (1.5,2) node[vertex,label=left:$a_{n_4/2+3}$] (u3) {};
	\draw (1.5,3) node[vertex,label=left:$a_{n_4/2+2}$] (u2) {};
	\draw (1.5,4) node[vertex,label=left:$a_{n_4/2+1}$] (u1) {};
	\draw (3,-1) node[vertex,label=above:$y_{n_5}$] (xl) {};
	\draw (3,0) node[vertex,label=above:$y_{n_5-1}$] (xsl) {};
	\draw (3,2) node[vertex,label=above:$y_{3}$] (x3) {};
	\draw (3,3) node[vertex,label=above:$y_{2}$] (x2) {};
	\draw (3,4) node[vertex,label=above:$y_{1}$] (x1) {};
	\draw (4,-1.3) node[vertex,label=right:$a_{n_4/2}$] (vl1) {};
	\draw (4,-1) node[vertex,label=right:$x_{n_4/2}$] (vl2) {};
	\draw (4,-0.7) node[vertex,label=right:$a_{n_4/2-1}$] (vl3) {};
	\draw (4,-0.3) node[vertex,label=right:$x_{n_4/2-1}$] (vsl1) {};
	\draw (4,0) node[vertex,label=right:$a_{n_4/2-2}$] (vsl2) {};
	\draw (4,0.3) node[vertex,label=right:$x_{n_4/2-2}$] (vsl3) {};
	\draw (4,1.7) node[vertex,label=right:$x_{5}$] (v31) {};
	\draw (4,2) node[vertex,label=right:$a_{4}$] (v32) {};
	\draw (4,2.3) node[vertex,label=right:$x_{4}$] (v33) {};
	\draw (4,2.7) node[vertex,label=right:$a_{3}$] (v21) {};
	\draw (4,3) node[vertex,label=right:$x_{3}$] (v22) {};
	\draw (4,3.5) node[vertex,label=right:$a_{2}$] (v11) {};
	\draw (4,3.9) node[vertex,label=right:$a_{1}$] (v12) {};
	\draw (4,4.3) node[vertex,label=right:$x_{2}$] (v13) {};
	
	\draw (ul)--(xl)--(u1)--(x1)--(u2)--(x2)--(u3)--(x3) (usl)--(xsl)--(ul);
	\draw (x1)--(v11) (x1)--(v12) (x1)--(v13) (x2)--(v21) (x2)--(v22) (x2)--(v11) (x3)--(v31) (x3)--(v32) (x3)--(v33) (xsl)--(vsl1) (xsl)--(vsl2) (xsl)--(vsl3) (xl)--(vl1) (xl)--(vl2) (xl)--(vl3);
	\begin{scope}
	\clip (1.5,0) rectangle (3,0.5);
	\draw 
	(usl)--(3,1);
	\end{scope}
	\begin{scope}
	\clip (3,2) rectangle (1.5,1.5);
	\draw 
	(x3)--(1.5,1);
	\end{scope}
	\draw (usl) .. controls +(-2,1) and +(-1,-1) .. (-2,5) .. controls +(1,1) and +(1,2) .. (v12);
	\draw (u1) .. controls +(-2,-1) and +(-1,1) .. (-2,-2) .. controls +(1,-1.5) and +(1.5,-3) .. (vl3);
	
	\node at (0.7,-2.3) {$\approx G_3(n/5)$};
	\node at (4.8,-2.3) {$\approx G_4(3n/5)$};
	\end{tikzpicture}
	\caption{\label{fig:lbdegfive} The graph $G_5(n)$ is obtained from $G_4(n/2,3n/10)$ by adding an independent set $\set{y_1,\ldots,y_{n/5}}$ and the depicted edges.}
\end{figure}
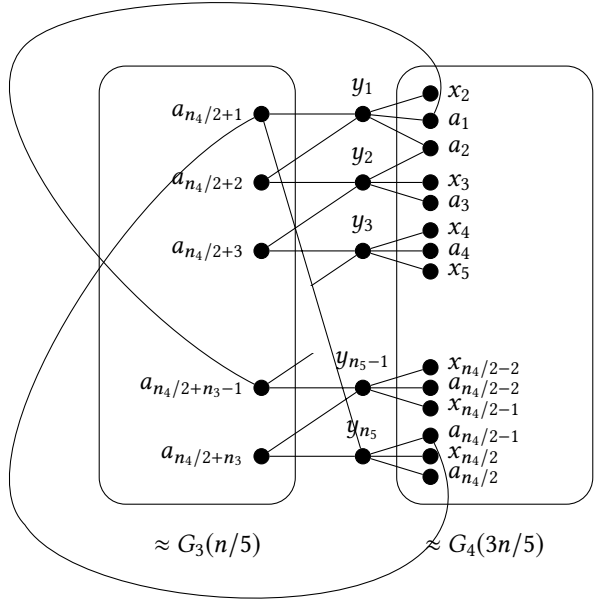

\begin{lemma}\label{lem:lbfive}
 On connected instances with maximum degree 5, the Scott-Sorkin algorithm has worst-case running time $\Omega(r^{19m/100})$.
\end{lemma}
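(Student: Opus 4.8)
The plan is to mirror the proofs of Lemmas~\ref{lem:lbthree} and~\ref{lem:lbfour}: exhibit an explicit infinite family of connected instances of maximum degree~5 on which the Scott--Sorkin algorithm is forced to branch in the worst possible way. The text preceding the lemma already pins down what ``worst possible'' means. The LP dual of the analysis in \cite{ScottS07} certifies that the $r^{19m/100}$ bound is tight only when Reduction~III is applied to vertices of degrees $5$, $4$, and $3$ in the proportion $8:6:5$, and when no Reductions~0, I, or~II fire except those forced as a side effect of a branching. So the entire task is to build a family that realizes this proportion exactly, with no spurious simplifications, while staying connected.

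First I would layer the construction, exactly as $G_4(n_3,n_4)$ was obtained from $G_3(n_3)$. Starting from a member of the degree-4 family of Lemma~\ref{lem:lbfour}, I would attach a further gadget --- a path-like structure whose vertices, together with a few extra edges into the underlying $G_4$ skeleton, raise a prescribed set of vertices to degree~5 --- yielding a family $G_5(n_3,n_4,n_5)$. The design goal for the gadget is self-similarity: since the algorithm always branches on a vertex of largest degree (and, as in Lemma~\ref{lem:lbfour}, breaks ties toward pivots adjacent to lower-degree vertices), branching on the intended degree-5 vertex, followed by the Reductions~II it forces, must return a smaller member of the same family, $G_5(n_3,n_4,n_5-c)$ for a fixed constant~$c$. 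When the degree-5 layer is exhausted the instance collapses to $G_4(n_3,n_4)$, and the recursion continues as in Lemma~\ref{lem:lbfour} and then Lemma~\ref{lem:lbthree}.

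With such a family in hand the counting is routine. Writing $b_5$, $b_4$, and $b_3$ for the number of Reduction~III steps performed recursively in the degree-5, degree-4, and degree-3 phases, I would fix the three size parameters so that $(b_5,b_4,b_3)=(8t,6t,5t)$, matching the dual's proportion. The total number of branchings performed recursively is then $19t$, and since every Reduction~III creates $r$ subinstances with a common (smaller) constraint graph, the algorithm performs $\Omega(r^{19t})$ steps. It remains to check that the parameters can be fixed so that the number of edges of $G_5(n_3,n_4,n_5)$ is $m=100t$, exactly as the degree-3 and degree-4 families used $6$ and $16/3$ edges per branching; this is a direct edge count over one period of the recursion and gives $19t=19m/100$, hence worst-case running time $\Omega(r^{19m/100})$.

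The main obstacle is the construction itself. The LP dual only certifies that the proportion $8:6:5$ is extremal; it gives no gadget and no proof that an instance realizing it exists, which is precisely what this lemma must supply. Concretely, the gadget has to (i) force the algorithm's largest-degree-first and tie-breaking rules to pick the intended degree-5 pivot at every step, (ii) be arranged so that the dropped degrees of the pivot's neighbors are absorbed cleanly and the branching cascades self-similarly through all three degree levels in the right multiplicities, (iii) trigger \emph{no} Reductions~0, I, or~II beyond those demanded by the branchings (any stray degree-$\le 1$ vertex or collapsing parallel edge would shave edges ``for free'' and break the $m=100t$ bookkeeping), and (iv) remain connected. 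Reverse-engineering a local gadget meeting all four requirements simultaneously, and then verifying step-by-step that the algorithm's reductions evolve $G_5(\cdot)$ as claimed, is the delicate part; the global edge count and the final $\Omega(r^{19m/100})$ bound follow mechanically once the gadget is correct.
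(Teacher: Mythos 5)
Your strategy is the paper's strategy: layer a degree-5 structure onto the degree-4 family of Lemma~\ref{lem:lbfour}, force the algorithm to branch on the new degree-5 vertices first, collapse to the degree-4 family, and verify the $8:6:5$ proportion of degree-$5$, $4$, $3$ branchings against the edge count. The counting step at the end is indeed mechanical, and your target bookkeeping ($19t$ branchings against $m=100t$ edges) is the right one. But there is a genuine gap: you never exhibit the gadget, and as the paper itself stresses just before the lemma, the LP dual ``does not indicate how to go about constructing instances for which this occurs, nor proves their existence'' --- the explicit construction \emph{is} the content of the lemma. A plan that ends with ``reverse-engineering a local gadget meeting all four requirements simultaneously \dots\ is the delicate part'' has deferred precisely the step that needs to be done.

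For comparison, the paper's construction is also somewhat different in mechanism from what you anticipate. You expect the degree-5 pivot to trigger forced Reductions~II that restore self-similarity within a family $G_5(n_3,n_4,n_5-c)$. Instead, the paper takes $G_4(n/2,3n/10)$, adds $n/5$ new vertices $y_1,\dots,y_{n/5}$ forming an \emph{independent set}, threads them into a cycle alternating with the $n/5$ degree-3 vertices $a_{n_4/2+1},\dots,a_{n_4/2+n_3}$ (raising those to degree 5), and gives each $y_i$ three further edges into the $a$/$x$ vertices so that all degrees reach exactly 5. Branching on $y_1,y_2,\dots$ in order then removes one $y_i$ per step with \emph{no} intervening Reductions 0, I, or II --- every neighbor's degree stays at least 3 throughout --- and after $n/5$ branchings the graph is exactly $G_4(n/2,3n/10)$, at which point the count from Lemma~\ref{lem:lbfour} takes over, giving $r^{n/5}\cdot r^{3n/20-2}\cdot r^{n/8}=\Omega(r^{19n/40})=\Omega(r^{19m/100})$. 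This ``independent set of pivots'' design is what makes requirement (iii) of your list easy to verify; a path-like degree-5 layer of the kind you sketch would have to be checked much more carefully against stray simplifications.
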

\begin{proof}
 The lemma is proven by exhibiting an infinite family of connected instances with maximum degree 5, where only a constant number of vertices have degree less than $5$, such that the algorithm may
 perform $r^{19n/40}$ steps when given an instance on $n$ vertices from this family.

 Let $n$ be an integer that is divisible by 40.
 Let $n_3=n/5, n_4=3n/5,$ and $n_5=n/5$.
 To construct the constraint graph $G_5(n)$, we start with $G_4(n_3+n_4/2,n_4/2)$ (the construction is given in the proof of Lemma \ref{lem:lbfour}).
 Add $n_5$ new vertices $y_1,y_2, \ldots, y_{n_5}$ and edges to form a cycle $(a_{n_4/2+1}, y_1, a_{n_4/2+2}, y_2, \dots,\allowbreak a_{n_4/2+n_3}, y_{n_5}, a_{n_4/2+1})$ (observe that $a_{n_4/2+1}, \cdots, a_{n_4/2+n_3}$ all had degree 3 before adding this cycle).
 Then, for each vertex $y_i, 1\le i\le n_5$, add three incident edges, connecting $y_i$ to 3 vertices from $\set{a_1,\cdots,a_{n_4}} \cup \set{x_2,\cdots,x_{n_4}}$ in such a way that no vertex has degree greater than $5$ and $y_1$ is incident to a vertex of degree $4$.
 See Figure \ref{fig:lbdegfive}.

 On graphs with maximum degree $5$, the Scott-Sorkin algorithm performs \Red{3} on vertices of degree $5$, with a preference for those vertices of degree $5$ that have neighbors of degree $3$ or $4$.
 When there are several choices, we always assume that the algorithm splits on a $y$ vertex with minimum index.
 Observe that the algorithm splits on $y_1, y_2, \ldots, y_{n_5}$, which leaves the graph $G_4(n_3+n_4/2,n_4/2)$.
 Thus, by \eqref{eq:lb43}, the overall running time is
 \begin{align*}
  \Omega(r^{n_5} \cdot r^{(n_4/2)/2-2} \cdot r^{(n_3+n_4/2)/4}) = \Omega(r^{n/5+3n/20-2+n/20+3n/40}) = \Omega(r^{19n/40}) = \Omega(r^{19m/100})\enspace.
 \end{align*}
 This concludes the proof of the lemma.
\end{proof}

}

It should be noted that tight running time bounds are extremely rare for competitive branching algorithms.
Typically, lower bounds proved for branching algorithms are much simpler, using only one branching rule and making sure that this branching rule is applied until the instance has constant size.
In the lower bound of this section, however, we managed to make the algorithm branch according to
the mixture of branching rules leading to the upper bound ---
the mixture given by the LP's dual solution.
Our message here is twofold.
First, when designing lower bounds, it is an advantage to exploit several worst-case branchings of a \mc analysis.
And second, it could well be that many more existing \mc analyses are tight.
} 
}

\section{The \smc technique}
\label{sec:smc}

Our \mrcsp algorithm illustrates that one can exploit separator-based branching to design a more efficient exponential-time algorithm.
However, {the \mrcsp problem and its algorithms} have certain features that make the analysis simpler than for other problems. In this section, we outline what additional complications could arise in the analysis and how to handle them. In the next section, we will illustrate this more general method.
First, the only branching rule of the \mrcsp algorithm, \Red{3}, produces {$r$ }instances with {exactly the same} constraint graph, and therefore the same measure.
As a consequence, the constraints needed to satisfy \eqref{eq:magain} all become linear.
{But, typically, the instances produced by a branching rule have different measures, which leads to convex constraints. }%
Second, the measure $\mu_r(R)$ depends only on the number of degree-3 vertices in
$R$.
This implies a discretized change in the measure for $R$ whenever $L$ and $R$ are swapped.
{If the measure attaches incommensurable weights to vertices of different
types (different degrees, perhaps), then} the change in measure
{resulting from}
swapping $L$ and $R$ could take values dense within a continuous domain.
Our {simple }measure for the \mcsp algorithm also implies that the initial
separator only needs to balance the \emph{number} of vertices in $L$ and $R$
instead of the \emph{measure} of $L$ and $R$,
which is what is needed more generally.
Finally, a general method is needed to combine the separator-based branching, which would typically be done for sparse instances, with the general case, where vertex degrees are arbitrary.

{In this section we propose a general method to exploit separator-based branchings in the analysis of an algorithm. The section is tailored to readers familiar with Measure and Conquer and would like to use the present method to design and analyze new algorithms. The will take into account all the
complications mentioned in the previous paragraph, and we will illustrate its use
in the next section to obtain two faster
algorithms for counting dominating sets.}
{The technique will apply} to recursive algorithms that label vertices of a graph, and where an instance can be decomposed into two independent subinstances when all the vertices of a separator have been labeled in a certain way.
{
}
Let $G=(V,E)$ be a graph and $\ell : V \rightarrow \cL$ be a labeling of its vertices by labels in the finite set $\cL$.
{(Partial labelings are handled by including a label whose interpretation is
``unlabeled''.) }%
For a subset of vertices $W\subseteq V$, denote by $\mu_r(W)$ and $\mu_s(W)$ two measures for the vertices in $W$ in the graph $G$ labeled by $\ell$.
The measure $\mu_r$ is used for the vertices on the right hand side of the separator and $\mu_s$ for the vertices in the separator.
Let $(L,S,R)$ be a separation of $G$.
Initially, we use the separation $(L,S,R) = (\emptyset, \emptyset, V)$.
We define the measure
\begin{align}
 \mu(L,S,R) &= \mu_s(S) + \mu_r(R) + \max \left( 0, B-\frac{\mu_r(R)-\mu_r(L)}{2} \right) \notag\\
 &\quad + (1+B) \cdot \log_{1+\epsilon} (\mu_r(R)+\mu_s(S)), \label{eq:measure}
\end{align}
where $\epsilon>0$ is a constant{ greater than $0$} that will be chosen small
enough to satisfy constraint \eqref{eq:sep} below, and $B$ is an arbitrary
constant greater than the maximum change {(increase or decrease) }in imbalance
in each transformation in the analysis, except the Separation transformation.
The \emph{imbalance} of an instance is $\mu_r(R) - \mu_r(L)$, and we assume, as previously, that
\begin{align}
 \mu_r(R) \ge \mu_r(L).
\end{align}
{It is important that $B$ be an absolute constant: although the imbalance of an instance may change arbitrarily in an execution of an algorithm, our value of $B$ is only constrained to be greater than the maximum change in imbalance that the analysis takes into account.\footnote{Typically, if the imbalance decreases by a very large number for a given branching, it is sufficient to replace this number by a large absolute constant without compromising the quality of the analysis.}}
{We need two more assumptions about the measure so that it will be possible to compute a balanced separator. Namely,}
we will assume that adding a vertex to $R$ {(and by symmetry, removing a vertex from $R$) }changes {the measure }$\mu_r(R)$ by at most {a constant.
The value of this constant is not crucial for the analysis; so we assume for
simplicity that it is at most }$B$ (adjusting {the value of }$B$ if necessary):
\begin{align}
 | \mu_r(R\cup\set{v}) - \mu_r(R) |  &\le B \qquad \text{for each } R\subseteq V \text{ and } v\in V. \label{eq:find-sep}
\end{align}
We also assume that $\mu_r(R)$ can be computed in time polynomial in $|V|$ for each $R\subseteq V$.

Let us now look more closely at the measure \eqref{eq:measure}.
The terms $\mu_s(S)$ and $\mu_r(R)$ naturally define measures for the vertices in $S$ and $R${, respectively}.
No term of the measure directly accounts for the vertices in $L$; we merely enforce that $\mu_r(R) \ge \mu_r(L)$.
The term $\max \left( 0, B-\frac{\mu_r(R)-\mu_r(L)}{2} \right)$ is a penalty term based on how balanced the instance is: the more balanced the instance, the larger the penalty term.
The penalty term has become continuous, varying from $0$ to $B$.
{The exact definition of the penalty term will become clearer when we discuss the change in measure when branching. }%
The final term{, $(1+B) \cdot \log_{1+\epsilon} (\mu_r(R)+\mu_s(S))$,} amortizes the increase in measure of at most $B$ due to the balance terms each time the instance is separated.

Let us now formulate some generic constraints that the measure should obey.
{The first one concerns the separation reduction.}

\redsec{Separation.}
We assume that an instance with a separation $(L,S,R)$ can be separated into two independent subinstances $(L,S,\emptyset)$ and $(\emptyset,S,R)$ when the labeling of $S$
allows it;
specifically, when all vertices in $S$ have been labeled by a subset $\cL_s
\subseteq \cL${ of so-called \emph{separation}
labels.
The labels are very algorithm-specific.}
{%
In \mcsp, it sufficed that no label in $S$ was ``unlabeled''.
A Dominating Set algorithm might have $\cL_s = \{i,o_d,o_R,o_L\}$, separating the instance when all vertices in the separator have been restricted to be either in the dominating set ($i$), not in the dominating set and already dominated ($o_d$), not in the dominating set and needing to be dominated by a vertex in $R$ ($o_R$), or not in the dominating set and needing to be dominated by a vertex in $L$ ($o_L$). }%
{The separation reduction applies when $\ell(s)\in \cL_s$ for each $s\in S$, which} arises in two cases.
The first is at the beginning of the algorithm when the graph has not been separated, which is represented by the trivial separation $(\emptyset, \emptyset, V)$.
The second is when our reductions have produced a separable instance{, typically with $L$ and $R$ (and possibly $S$) nonempty}.

Let $(L,S,R)$ be such that $\ell(s)\in \cL_s$ for each $s\in S$. The algorithm recursively solves the subinstances $(L,S,\emptyset)$ and $(\emptyset,S,R)$.
Let us focus on the instance $(\emptyset,S,R)$; the treatment of the other instance is symmetric.
After a cleanup phase, {where the algorithm applies some simplification rules}{ (for example, the vertices labelled $o_L$ and needing to be dominated by a vertex in $L$ can be removed in this subinstance)}, the next step is to compute a new separator of $S\cup R$.
This can be done in various ways, depending on the graph class. For example, polynomial-time computable balanced separators can be derived from upper bounds on the pathwidth of graphs with bounded maximum or average degree \cite{EdwardsM15,FominGSS09,Gaspers10}.
{
Let us assume that we are dealing with graphs of maximum degree at most 6.

\begin{lemma}\label{lem:sep}
Let $G=(V,E)$ be a graph of maximum degree at most $\Delta, 3\le \Delta \le 6$,
and $\mu$ be a measure as in~\eqref{eq:measure} satisfying \eqref{eq:find-sep} such that $\mu_r(R)$ can be computed in time polynomial in $|V|$ for each $R\subseteq V$.
A separation $(L,S,R)$ of $G$ can be computed in polynomial time such that
$|\mu_r(L)-\mu_r(R)| \le B$ and $|S|\le \alpha_{\Delta} |V| + o(|V|)$, where
$\alpha_3 = 1/6$, $\alpha_4 = 1/3$, $\alpha_5 = 13/30$, and $\alpha_6 = 23/45$.
\end{lemma}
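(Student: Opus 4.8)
The plan is to obtain the separation from a \emph{path decomposition} of $G$ rather than from a single balanced cut, because a path decomposition supplies a whole sliding family of separators through which we can sweep to balance $\mu_r$. First I would invoke the known bounds on the pathwidth of bounded-degree graphs: for $3 \le \Delta \le 6$ a graph of maximum degree at most $\Delta$ admits a path decomposition of width $\alpha_\Delta |V| + o(|V|)$, computable in polynomial time, with $\alpha_3 = 1/6$ (consistent with Lemma~\ref{lem:sep-cubic}), $\alpha_4 = 1/3$, $\alpha_5 = 13/30$, and $\alpha_6 = 23/45$. I would then convert this into a \emph{nice} path decomposition with bags $X_1, \ldots, X_t$ in which consecutive bags differ by the introduction or the forgetting of a single vertex; this can be arranged without increasing the width.

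Every bag of such a decomposition is already a separator. Concretely, for each index $i$ let $L_i$ be the set of vertices forgotten strictly before $X_i$, let $R_i$ be the set of vertices not yet introduced at $X_i$, and set $S = X_i$. The path-decomposition property (each vertex occupies a contiguous interval of bags, and the endpoints of every edge share a common bag) guarantees that no edge joins $L_i$ and $R_i$, so $(L_i, X_i, R_i)$ is a valid separation with $|X_i| \le \alpha_\Delta|V| + o(|V|)$. This settles the size requirement for \emph{every} $i$, and it remains only to choose the index $i$ that balances the measure.

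To balance, I would sweep $i$ from $1$ to $t$ and track the imbalance $\mu_r(R_i) - \mu_r(L_i)$. At the first bag $L_1 = \emptyset$, so the imbalance is $\mu_r(R_1) \ge 0$; at the last bag $R_t = \emptyset$, so it is $-\mu_r(L_t) \le 0$. Each elementary step of a nice decomposition moves exactly one vertex, either out of $R$ into the current bag (decreasing $\mu_r(R)$) or out of the bag into $L$ (increasing $\mu_r(L)$); by assumption~\eqref{eq:find-sep} each such single-vertex change alters $\mu_r$ by at most $B$, hence changes the imbalance by at most $B$. Since the imbalance changes by at most $B$ per step and passes from nonnegative at $X_1$ to nonpositive at $X_t$, there is an index $i$ at which it lies in $[-B,B]$, i.e.\ $|\mu_r(L_i) - \mu_r(R_i)| \le B$; relabelling $L$ and $R$ if necessary we may assume $\mu_r(R_i) \ge \mu_r(L_i)$. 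Because $\mu_r$ is computable in polynomial time, the whole sweep runs in polynomial time, and $(L_i, X_i, R_i)$ is the desired separation.

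The main obstacle is really the first step, securing the pathwidth bounds $\alpha_\Delta$ for $\Delta \in \{4,5,6\}$ together with polynomial-time decomposition algorithms; for these I would lean on the existing literature on the pathwidth of sparse graphs, with the cubic case matching Lemma~\ref{lem:sep-cubic}. By contrast, the balancing argument is essentially forced: condition~\eqref{eq:find-sep} was tailored precisely so that a one-vertex change never moves $\mu_r$ by more than $B$, which is exactly what prevents the swept imbalance from overshooting the target window $[-B,B]$ and lets us hit it at some bag without ever enlarging the separator beyond a single bag of the decomposition.
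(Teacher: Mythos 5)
Your proposal is correct and follows essentially the same route as the paper's own proof: both take the pathwidth bound $\alpha_\Delta|V|+o(|V|)$ for bounded-degree graphs from the literature, pass to a nice path decomposition whose consecutive bags differ by one vertex, observe that every bag is a separator of the required size, and sweep the bags using condition~\eqref{eq:find-sep} to find one with imbalance at most $B$. No substantive differences.
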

\begin{proof}
 By \cite{FominGSS09}, the pathwidth of $G$ is at most $\alpha_{\Delta} |V| + o(|V|)$ and a path decomposition of that width can be computed in polynomial time.
 We view a path decomposition as a sequence of bags $(B_1, \dots, B_b)$ which are subsets of vertices such that for each edge of $G$, there is a bag containing both endpoints, and for each vertex of $G$, the bags containing this vertex form a non-empty consecutive subsequence. The width of a path decomposition is the maximum bag size minus one.
 We may assume that every two consecutive bags $B_i$, $B_{i+1}$ differ by exactly one vertex, otherwise we insert between $B_i$ and $B_{i+1}$ a sequence of bags where the vertices from $B_i \setminus B_{i+1}$ are removed one by one followed by a sequence of bags where the vertices of $B_{i+1} \setminus B_i$ are added one by one; this is the standard way to transform a path decomposition into a \emph{nice} path decomposition of the same width where the number of bags is polynomial in the number of vertices \cite{BodlaenderK96}.
 Note that each bag is a separator and a bag $B_i$ defines the separation $(L_i, B_i, R_i)$ with $L_i = (\bigcup_{j=1}^{i-1} B_j)\setminus B_i$ and $R_i = V \setminus (L_i\cup B_i)$.
 Since the first of these separations has $L_1=\emptyset$ and the last one has $R_b=\emptyset$, at least one of these separations has $|\mu_r(L_i)-\mu_r(R_i)| \le B$ by \eqref{eq:find-sep}.
 Finding such a bag can clearly be done in polynomial time.
\end{proof}
}
\noindent
After a balanced separator $(L',S',R')$ has been computed for $S\cup R$, the instance is solved recursively, and so is the instance $L\cup S$, separated into $(L'',S'',R'')$. Both solutions are then combined into a solution for the instance $L\cup S\cup R$.
Without loss of generality, assume $\mu(L',S',R') \ge \mu(L'',S'',R'')$.
Assuming that the separation and combination are done in polynomial time, the imposed constraint on the measure is
\begin{align*}
 2\cdot 2^{\mu_r(R') + \mu_s(S') + B + (1+B) \cdot \log_{1+\epsilon} (\mu_r(R')+\mu_s(S'))}
 \le
 2^{\mu_r(R) + \mu_s(S) + (1+B) \cdot \log_{1+\epsilon} (\mu_r(R) + \mu_s(S))} .
\end{align*}
To satisfy the constraint, it suffices to {satisfy the two constraints:
\begin{align*}
 \mu_r(R') + \mu_s(S') &\le \mu_r(R) + \mu_s(S)\\
 (1+B) + (1+B) \cdot \log_{1+\epsilon} (\mu_r(R')+\mu_s(S')) &\le (1+B) \cdot \log_{1+\epsilon} (\mu_r(R) + \mu_s(S))
\end{align*}
To satisfy both constraints, it is sufficient to }constrain that
\begin{align}
\mu_r(R) + \mu_s(S) \ge (1+\epsilon) (\mu_r(R') + \mu_s(S')). \label{eq:sep}
\end{align}
This is the only constraint involving the size of a separator.
It constrains that separating $(\emptyset,S,R)$ to $(L',S',R')$ should reduce
$\mu_r(R)+\mu_s(S)$ by a constant factor, namely $1+\epsilon$.

\redsec{Branching.}
{%
 The branching rules will apply to two kinds of instances: balanced and imbalanced instances. We say that an instance is \emph{balanced} if $0 \le \mu_r(R) - \mu_r(L) \le 2B$ and \emph{imbalanced} if $\mu_r(R) - \mu_r(L) > 2B$. For a balanced instance, the measure \eqref{eq:measure} is
 \begin{align*}
  \mu(L,S,R) &= \mu_s(S) + \frac{\mu_r(R)}{2} + \frac{\mu_r(L)}{2} + B + (1+B) \cdot \log_{1+\epsilon} (\mu_r(R)+\mu_s(S))
 \intertext{and for an imbalanced instance, it is}
  \mu(L,S,R) &= \mu_s(S) + \mu_r(R) + (1+B) \cdot \log_{1+\epsilon} (\mu_r(R)+\mu_s(S)).
 \end{align*}
}
A standard Measure and Conquer analysis would probably assign a weight to every vertex and use a measure of the form $\mu_r(L)+\mu_s(S)+\mu_r(R)$, while our measure does not assign weights to vertices in $L$.
Let us recall the optimistic assumptions in Section \ref{sec:intuition} where we outlined an analysis for \mtcsp under ideal conditions: after exhausting the separator $S$, we assumed half of the other vertices that were removed by the branching were in $L$ and the other half in $R$. Thus the instance decomposed into two equally-small connected components.
We will now present a condition under which $\mu_r(R)$ decreases at least by half as much as $\mu_r(L)+\mu_r(R)$.

Suppose a transformation taking $(L,S,R,\ell)$ to $(L',S',R',\ell')$ decreases $\mu_r(R)+\mu_r(L)$ by $d$.
Since the measure includes roughly (and at least) half of $\mu_r(R)+\mu_r(L)$, in an ideal case the measure $\mu_r(R) + \max \left( 0, B-\frac{\mu_r(R)-\mu_r(L)}{2} \right)$ decreases by $d/2$.
{We will now show that this} is indeed the case for our measure if the {transformation guarantees }the following condition:
\begin{align}
\text{If } \mu_r(R) - \mu_r(L) > B \text{, then } \mu_r(R) - \mu_r(R') \ge \mu_r(L) - \mu_r(L')\enspace.\label{eq:balance-condition}
\end{align}
{Recall that $B$ was defined to be greater than the change in imbalance in the analysis of each transformation, so $\mu_r(R) - \mu_r(L) \le B$ implies the instance remains balanced after a transformation.}
Condition \eqref{eq:balance-condition} is very natural, expressing that, if
the instance is imbalanced or risks becoming imbalanced we would like to make more progress on the large side.
{
	
 In the balanced case, $0 \le \mu_r(R) - \mu_r(L) \le 2B$.
 If $\mu_r(R) - \mu_r(L) > B$, then by \eqref{eq:balance-condition}, $\mu_r(R')-\mu_r(L') \le \mu_r(R) - \mu_r(L) \le 2B$,
 while if $\mu_r(R) - \mu_r(L) \le B$, then by the definition of $B$, $\mu_r(R') - \mu_r(L') \le 2B$.
 It follows that the resulting instance $(L',S',R',\ell')$ is balanced as well, and $\mu_r(R) + \max \left( 0, B-\frac{\mu_r(R)-\mu_r(L)}{2} \right)$ decreases by $(\mu_r(R)+\mu_r(L) - \mu_r(R')-\mu_r(L'))/2 = d/2$.

 In the imbalanced case, $\mu_r(R) - \mu_r(L) > 2B$. By condition \eqref{eq:balance-condition}, $\mu_r(R) + \max \left( 0, B-\frac{\mu_r(R)-\mu_r(L)}{2} \right)$ decreases by
 \begin{align*}
 \mu_r(R)-\mu_r(R') &\ge (\mu_r(R) - \mu_r(R') + \mu_r(L) - \mu_r(L'))/2\\ & 
 = d/2
 \intertext{if $(L',S',R',\ell')$ is also imbalanced, or by}
 \mu_r(R)- \left( \frac{\mu_r(R')}{2} + \frac{\mu_r(L')}{2} + B \right) &= \frac{\mu_r(R)}{2}- \left( B - \frac{\mu_r(R)}{2} \right) -\frac{\mu_r(R')}{2}-\frac{\mu_r(L')}{2}\\
 &\ge (\mu_r(R) + \mu_r(L) - \mu_r(R') - \mu_r(L'))/2\\ & 
 = d/2
 \end{align*}
 if $(L',S',R',\ell')$ is balanced.

}
Thus, if Condition~\eqref{eq:balance-condition} holds, {i.e., we can guarantee more progress on the large side if the instance is imbalanced or risks to become imbalanced, }then the analysis is at least as good as a non-separator based analysis, but with the additional improvement due to the separator branching.

\redsec{Integration into a standard \mc analysis.}
The \smc analysis will typically be used when the instance has become sufficiently sparse that one can guarantee that a small separator exists.
We can view the part
played by the \smc analysis as a subroutine and integrate it into any other \mc analysi. We only need guarantee that the measure of an instance does
not increase when transitioning to the subroutine.
{Formally, this is done with the following lemma that can be proved with a simple induction.
\begin{lemma}[\cite{Gaspers10,hybrid}] \label{lem:combinemeasureanalysis}
Let $A$ be an algorithm for a problem $P$,
$B$ be an algorithm for (special instances of) $P$,
$c \ge 0$ and $r>1$ be constants,
and $\mu(\cdot), \mu'(\cdot), \eta(\cdot)$ be measures
for the instances of $P$,
such that
for any input instance $I$, $\mu'(I) \le \mu(I)$, and
for any input instance $I$,
$A$ either solves $P$ on $I$ by invoking $B$ with running time $O(\eta(I)^{c+1}) r^{\mu'(I)}$,
or reduces $I$ to instances $I_1,\ldots,I_k$,
solves these recursively, and combines their solutions to solve~$I$,
using time $O(\eta(I)^{c})$ for the reduction and combination steps
(but not the recursive solves),
\begin{align}
(\forall i) \quad \eta(I_i) & \leq \eta(I)-1 \text{, and}
  \\
\sum_{i=1}^k r^{\mu(I_i)} & \leq r^{\mu(I)} .
\end{align}
Then $A$ solves any instance $I$
in time $O(\eta(I)^{c+1}) r^{\mu(I)}$.
\end{lemma}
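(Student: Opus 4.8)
The plan is to treat this as a direct extension of Lemma~\ref{lem:measureanalysis}: view the recursion of $A$ on $I$ as a rooted search tree whose \emph{internal} nodes are the reduction steps (where $A$ reduces an instance $J$ to $J_1,\dots,J_k$) and whose \emph{leaves} are the instances on which $A$ invokes $B$, together with any trivial base cases solved in constant time. The only genuinely new feature compared with Lemma~\ref{lem:measureanalysis} is that a leaf now costs $O(\eta(\cdot)^{c+1})\,r^{\mu'(\cdot)}$ rather than $O(1)$, so the whole argument hinges on absorbing this leaf cost into a $\mu$-based accounting via the hypothesis $\mu'(\cdot)\le\mu(\cdot)$.

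First I would establish the leaf-weight inequality. By induction on the subtree rooted at a node $v$, using the branching hypothesis $\sum_{i=1}^k r^{\mu(J_i)}\le r^{\mu(J)}$ (the analog of \eqref{eq:magain}) at each internal node, one obtains
\[
\sum_{\ell \text{ a leaf below } v} r^{\mu(\ell)}\ \le\ r^{\mu(v)},
\]
and in particular $\sum_{\ell} r^{\mu(\ell)}\le r^{\mu(I)}$ over all leaves $\ell$. Since measures are non-negative we have $r^{\mu(\ell)}\ge 1$, so this same estimate simultaneously bounds the \emph{number} of leaves by $r^{\mu(I)}$.

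Next I would control the shape of the tree through $\eta$. The analog of Condition~\eqref{eq:masize}, namely $\eta(J_i)\le\eta(J)-1$, together with $\eta\ge 0$, forces the depth to be at most $\eta(I)$ and $\eta(\cdot)\le\eta(I)$ everywhere. Because every node is an ancestor of some leaf and each leaf has at most $\eta(I)+1$ ancestors (counting itself), the total number of nodes is at most $(\eta(I)+1)$ times the number of leaves, hence at most $(\eta(I)+1)\,r^{\mu(I)}$. Now I would sum up the work. The internal nodes contribute at most (number of nodes)$\,\times O(\eta(I)^{c})=O(\eta(I)^{c+1})\,r^{\mu(I)}$, while the leaves contribute
\[
\sum_{\ell} O(\eta(\ell)^{c+1})\,r^{\mu'(\ell)}\ \le\ O(\eta(I)^{c+1})\sum_{\ell} r^{\mu(\ell)}\ \le\ O(\eta(I)^{c+1})\,r^{\mu(I)},
\]
using $\eta(\ell)\le\eta(I)$, $\mu'(\ell)\le\mu(\ell)$, and $r>1$. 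Adding the two contributions yields the claimed bound $O(\eta(I)^{c+1})\,r^{\mu(I)}$.

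The main obstacle, and really the only point requiring care, is exactly this folding of the $B$-leaf cost into the $\mu$-accounting: the leaf weights used both in the node count and in the leaf-work estimate must be $r^{\mu}$, so one must invoke $\mu'\le\mu$ precisely to replace the $r^{\mu'}$ appearing in $B$'s running time by $r^{\mu}$. The non-negativity of the measures is the other load-bearing hypothesis, since it is what converts the weighted leaf-sum $\sum_\ell r^{\mu(\ell)}\le r^{\mu(I)}$ into a bound on the number of leaves, and hence of nodes; without it the polynomial factor $\eta^{c+1}$, which arises from multiplying the per-step $O(\eta^c)$ cost by the polynomially many ancestors of each leaf, could not be controlled.
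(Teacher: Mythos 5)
Your proof is correct. Note that the paper itself gives no proof of this lemma --- it is imported verbatim from the cited reference \cite{Gaspers10} --- but your argument is exactly the standard one that underlies both it and Lemma~\ref{lem:measureanalysis}: a search-tree accounting in which the branching condition $\sum_i r^{\mu(I_i)}\le r^{\mu(I)}$ propagates to $\sum_\ell r^{\mu(\ell)}\le r^{\mu(I)}$ over the leaves, non-negativity of $\mu$ turns this into a bound on the number of leaves and hence (via the $\eta$-depth bound) on the number of nodes, and the hypothesis $\mu'\le\mu$ together with $r>1$ folds the cost of the $B$-invocations at the leaves into the same weighted sum. You correctly identify $\mu'\le\mu$ and $\mu\ge 0$ as the two load-bearing hypotheses, and the bookkeeping is sound.
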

}
{%
This completes the description of the \smc method.
We now illustrate its use to obtain improved algorithms for \textsc{\#Dominating Set}.
}

\section{Counting dominating sets}
\label{sec:cds}

{\textsc{Dominating Set} (\DS) and its variants are of central importance in exponential-time algorithms.
Let us denote the number of vertices of the input graph by $n$.
\DS can be solved in time $O(1.4864^n)$ using polynomial space \cite{Iwata11} and in time $O(1.4689^n)$ using exponential space \cite{Iwata11}.
All minimal dominating sets can be enumerated in time $O(1.7159^n)$ and polynomial space \cite{FominGPS08}.
Minimum Weighted Dominating Set can be solved in time $O(1.5535^n)$ and exponential space \cite{FominGSS09}.
Partial Dominating Set can be solved in time $O(1.5673^n)$ using polynomial space \cite{NederlofR10} and in time $O(1.5012^n)$ using exponential space \cite{NederlofR10}.
Other variants have been considered (see, e.g., \cite{Liedloff07,Rooij11}) and the problems have been studied on many graph classes (see, e.g., \cite{CouturierHHK13,CouturierLL15,GaspersKLT09,GolovachHK16,Krzywkowski13,Liedloff08,NederlofRD14}). }%
The \NDS problem is to compute, for a given graph $G$, the function
{$d: \set{0,\dots,n} \rightarrow \mathbb{N}$}
such that {for each $k\in \set{0,\dots,n}$, }$d(k)$ is the number of dominating sets of $G$ of size $k$.

{The} current fastest polynomial-space algorithm {for \NDS is a} $O(1.5673^n)${ time algorithm by van Rooij} \cite{Rooij10}.\footnote{Using exponential space, the problem can be solved in time $O(1.5002^n)$ \cite{NederlofRD14}.}
{%
 Like many \DS algorithms, it uses a construction due to Grandoni \cite{Grandoni06} to transform the instance into an instance for \textsc{Set Cover}.
 Given a multiset $\mathcal{S}$ of subsets of a universe $\mathcal{U}$, a collection of subsets $C \subseteq \mathcal{S}$ is a \emph{set cover} if $\bigcup_{c\in C} c = \mathcal{U}$.
 The transformation from \DS to \textsc{Set Cover} sets $\mathcal{U}:=V$ and adds the set $N_G[v]$ to $\mathcal{S}$ for each vertex $v$. Here, $N_G[v]$ denotes the closed neighborhood of $v$, that is, the set of vertices containing $v$ and all its neighbors.
}%
While many algorithms for {\DS and variants} rely on {this} transformation to \textsc{Set Cover}, the current fastest polynomial-space algorithm for subcubic graphs
{is {a simple} $\Ostar{2^{n/2}}$ time algorithm \cite{KneisMRR05} that works directly on the input graph.
}
{(It still outperforms all more recent algorithms on general graphs \cite{FominGK09,Iwata11,RooijB11} when their analysis is restricted to the subcubic case.)}%

In this section, we apply the \smc method
to design and analyze faster algorithms for \NDS for subcubic graphs and,
separately, for general graphs.
{%
For subcubic graphs {we design an algorithm working} directly on the input graph, where we can essentially reuse the Max $(3,2)$-CSP analysis of Section~\ref{sec:csp}.
For general graphs, our algorithm is based on the \textsc{Set Cover} translation and essentially just adds separation to \cite{Rooij10}.
}%

{
\subsection{Subcubic graphs}
\label{subsec:3cds}

\newcommand{\cubstr}{\Gamma}

We assume the input graph $G$ has maximum degree at most 3, and we denote by $\cubstr(G)$ the \emph{cubic structure} of $G$, the unique 3-regular graph obtained from $G$ by exhaustively contracting edges incident to vertices of degree at most $2$ and removing isolated vertices.
Our algorithm will branch on the vertices of $\cubstr(G)$ in the same way as the $\mrcsp$ algorithm and label the vertices of $G$ so as to remember which vertices still need to be dominated, which ones should not be dominated, etc. Once the number of vertices in $\cubstr(G)$ is upper bounded by a constant, the problem can be solved in polynomial time by tree-decomposition methods.
We will first describe the labeling, then the base case where the number of vertices of $\cubstr(G)$ is bounded by a constant, and then the branching strategy.

To solve the \NDS problem, the algorithm applies labels $U$, $N$, 
and $C$ to vertices.
We denote by $\ell(v) \in \set{U,N,C}$ the label of vertex $v$.
The algorithm counts the number of sets $D$ for each size $k$, $0\le k\le |V|$, with the following restrictions according to the label of each vertex $v$:
\begin{itemize}
	\item ``unlabeled'' $U$: $v$ needs to be dominated by $D$;
	\item ``not in'' $N$: $v$ is not in $D$, but needs to be dominated by $D$;
	\item ``covered'' $C$: there is no restriction on $v$, i.e., $v$ may or may not be in $D$ and $v$ does not need to be dominated by $D$.
\end{itemize}
Initially, all vertices are labeled $U$.
There is no label for vertices that do not need to be dominated and do not belong to a dominating set, since they are simply deleted from the graph.
Vertices that are added to the dominating set are also removed from the graph and their neighbors are updated to reflect that they do not need to be dominated any more: neighbors labeled $U$ or $C$ get label $C$, and neighbors labeled $N$ are deleted.
When we speak of a dominating set of a labeled graph $G$, we mean a vertex set containing vertices labeled $U$ and $C$ that dominates all vertices labeled $U$ and $N$.

If $|V(\cubstr(G))|\le A$ for a large enough constant $A\ge 2$, the instance is solved in polynomial time as follows.
First, a tree decomposition of $G$ of width at most $A$ is computed.
To do this, we start from a trivial tree decomposition of $\cubstr(A)$ that has just one bag with all the vertices of $\cubstr(A)$.
In polynomial time, one can augment that tree decomposition with the vertices from $V(G)\setminus V(\cubstr(A))$ and make sure that its width does not exceed $A$~\cite{Bodlaender98}.
Intuitively, each degree-0 vertex can get its own private bag and if a vertex $v$ of degree at most 2 was contracted away, choose a bag containing its neighbor(s) and add a neighboring bag containing $v$ and its neighbor(s).
Now, use a tree decomposition based algorithm, such as \cite{RooijBR09}, to solve the instance in polynomial time.
The latter treats unlabeled graphs but needs only minor adaptations for labeled graphs.

We will now describe the algorithm for the case when $|V(\cubstr(G))|> A$.
\newcommand{\cds}{\textrm{\#ds}}
The algorithm will satisfy the invariant that degree-3 vertices in $G$ are labeled $U$.
It will compute an $(n+1)$-size vector $\cds_G$ such that the labeled graph $G$ has $\cds_G[i]$ dominating sets of size $i$, $0\le i\le n$.

Analogous to the algorithm in Section~\ref{sec:csp}, we now define a \Red{3}.
(Analogs of \Red{0--2} are not needed here.)

\begin{description}
\item[\Red{3}] Let $x$ be a vertex of degree 3. Thus, $\ell(x)=U$. There are three subinstances, $G_{\text{in}}, G_{\text{opt}}$, and $G_{\text{forb}}$ (thought of as ``in'', ``optional'', and ``forbidden''). They are obtained from $G$ as follows.
\begin{itemize}
 \item In $G_{\text{in}}$, we add $x$ to the dominating set. The vertex $x$ is deleted from the graph, its neighbors labeled $U$ are relabeled $C$, and its neighbors labeled $N$ are deleted.
 \item In $G_{\text{opt}}$, we prevent $x$ from being in the dominating sets and we remove the requirement that it needs to be dominated. The vertex $x$ is removed from the graph.
 \item In $G_{\text{forb}}$, we forbid that $x$ is dominated, that is we are interested in the number of dominating sets of $G-x$ that do not contain any vertex from $N_G(x)$. Delete the vertices in $N_G(x)$ labeled $C$, assign label $N$ to the remaining vertices of $N_G(x)$, and delete $x$.
\end{itemize}
The algorithm computes the number of dominating sets of size $i$ of $G$ by setting
\begin{align*}
\cds(G)[i] = \cds(G_{\text{in}})[i-1] + \cds(G_{\text{opt}})[i] - \cds(G_{\text{forb}})[i].
\end{align*}
The algorithm returns the vector $\cds$.
\end{description}

To see that \Red{3} correctly computes the number of dominating sets of size $i$ of $G$, observe that
these sets that contain the vertex $x$ are counted in $\cds(G_{\text{in}})[i-1]$.
The dominating sets of size $i$ of $G$ that do not contain $x$ are the dominating sets of size $i$ of $G-x$, counted in $\cds(G_{\text{opt}})[i]$, except for those that do not dominate $x$ (counted in $-\, \cds(G_{\text{forb}})[i]$).

We observe that \Red{3} has at least the same effect on $\cubstr(G)$ as \Red{3} in \mcsp algorithm of Section~\ref{sec:csp} has on the constraint graph: the vertex $x$ is deleted, and all vertices with degree at most $2$ in $\cubstr(G-x)$ are contracted or, even better, deleted.
\Reds{0-2} are replaced by operations doing nothing. This has the same effect on $\cubstr(G)$ as \Reds{0-2} in \mcsp algorithm of Section~\ref{sec:csp} has on the constraint graph, which either becomes cubic again or vanishes.
Thus, the algorithm will select the pivot vertex for branching in exactly the same way as the algorithm in Section~\ref{sec:csp}.
We note that the running time analysis of Section~\ref{sec:csp} was based only on the measure of the constraint graph.
Since degree-3 vertices are handled by a 3-way branching, the running time
analysis for Max $(3,2)$-CSP applies for \NDS as well, giving a running time of $3^{n/5+o(n)} = O(1.2458^n)$ for subcubic graphs.
This improves on the previously fastest polynomial-space algorithm with running time $\Ostar{2^{n/2}} = O(1.4143^n)$ \cite{KneisMRR05}.

\begin{theorem}
	The described algorithm solves \NDS
		in $3^{n/5+o(n)}$ time and polynomial space on subcubic graphs.
\end{theorem}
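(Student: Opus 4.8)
The plan is to reduce the entire analysis to that of Section~\ref{sec:csp}, reusing its measure $\mu(L,S,R)$, its weights, and Lemma~\ref{lem:measureanalysis}, but now with $r=3$. Concretely, I would maintain throughout the execution a separation $(L,S,R)$ of the cubic structure $\cubstr(G)$, define the measure on $\cubstr(G)$ exactly as in Section~\ref{sec:csp} (so $\sl,\ss,\sr$ count degree-$3$ vertices of $\cubstr(G)$ in each part, and $\sst$ its degree-$2$ separator vertices), and always pivot on a vertex $x$ of $G$ corresponding to a separator vertex of $\cubstr(G)$, chosen exactly as the \mcsp algorithm chooses its pivot. The depth measure $\eta$ from Section~\ref{sec:csp} carries over unchanged and bounds the recursion depth by a polynomial; together with the fact that each search-tree node stores only the $(n+1)$-entry vector $\cds$ and performs polynomial work, this yields polynomial space. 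It then remains to establish correctness and to verify the branching recurrence at $r=3$.

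For correctness I would first check the inclusion--exclusion identity in Reduction III: a dominating set either contains $x$ (counted, with a size shift of one, by $\cds(G_{\text{in}})$) or omits $x$, and among the latter those in which $x$ is dominated are exactly $\cds(G_{\text{opt}})-\cds(G_{\text{forb}})$, since $G_{\text{forb}}$ counts precisely the $x$-omitting sets in which no vertex of $N_G[x]$ is selected. This yields $\cds(G)[i]=\cds(G_{\text{in}})[i-1]+\cds(G_{\text{opt}})[i]-\cds(G_{\text{forb}})[i]$. The base case, reached when $|V(\cubstr(G))|\le A$, is solved in polynomial time by lifting a trivial tree decomposition of $\cubstr(G)$ to one of width $O(A)$ for $G$ and running the tree-decomposition dynamic program described before the theorem. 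At a separation step the two sides of $\cubstr(G)$ induce subgraphs of $G$ with no edge between them, so their count vectors combine by a single convolution $\cds(G)[k]=\sum_{i+j=k}\cds(G_L)[i]\,\cds(G_R)[j]$ in polynomial time; and the invariant that degree-$3$ vertices of $G$ are labeled $U$ is preserved because Reduction III only relabels neighbors of $x$, all of which drop below degree $3$.

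For the running time I would argue that the recurrence \eqref{trans} holds with $r=3$. The key fact, recorded before the theorem, is that in each of the three branches $G_{\text{in}},G_{\text{opt}},G_{\text{forb}}$ the effect on $\cubstr(G)$ is at least that of the single \mcsp branch: $x$ is deleted and every vertex of degree at most $2$ in $\cubstr(G-x)$ is contracted or deleted, with some branches removing strictly more. Hence each subinstance has measure $\mu_i\le\mu'$, where $\mu'$ is the measure of the (common) \mcsp subinstance, and since every branching is exactly $3$-way we get $\sum_{i=1}^{3}3^{\mu_i}\le 3\cdot 3^{\mu'}\le 3^{\mu}$, the last inequality being the \mcsp constraint evaluated at $r=3$. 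All of Section~\ref{sec:csp}'s constraints are in fact independent of $r$---the leading $1$ in each of \eqref{r5} and \eqref{red2L0}--\eqref{red2R2} is $\log_r r$---so the same weights $\wr=0.2+\varepsilon$, $\ws=0.7$, $\wst=0.6$, $\wb=0.2$, $\wc=0.1$, and $\wdd=\wb+1$ remain feasible, giving $\mu\le(1/5+o(1))n$ and, via Lemma~\ref{lem:measureanalysis}, the claimed time $3^{n/5+o(n)}$. Separator recomputation at a separation step is handled exactly as in Section~\ref{sec:csp} via Lemma~\ref{lem:sep-cubic} applied to the cubic graph $\cubstr(G)$.

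The step I expect to be the main obstacle is verifying that $\mu_i\le\mu'$ genuinely holds in \emph{all three} branches, even though they delete different vertex sets and can therefore perturb the balance of $(L,S,R)$ differently. The danger is that a branch removing extra degree-$3$ vertices could tip $\sr$ from $\sl+1$ to $\sl$ and so raise the balance penalty from $\wc$ to $\wb$. I would dispatch this by appealing to the half-edge and degree-reduction constraints \eqref{degredL}--\eqref{degredR2}: any extra elimination of a degree-$3$ vertex either increases the imbalance, contributing $\le 0$ by \eqref{degredL} together with $\wc\ge 0$, or contributes $-\wr$, which by \eqref{degredR1}--\eqref{degredR2} dominates any attendant increase of at most $\wb-\wc$ in the balance penalty. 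Thus every additional reduction can only decrease the measure, so each branch stays within the \mcsp bound and the recurrence closes.
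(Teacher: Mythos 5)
Your proposal is correct and follows essentially the same route as the paper: the paper's argument is precisely that each of the three branches has at least the effect on $\cubstr(G)$ of the single \mcsp branch, so each subinstance's measure is at most the common \mcsp value $\mu'$, and the 3-way branching then satisfies the Section~\ref{sec:csp} constraints (which are independent of $r$) at $r=3$, giving $3^{n/5+o(n)}$. Your extra care about the ``even better, deleted'' branches perturbing the balance is resolved exactly as you suggest --- the half-edge-deletion and vertex-removal constraints of the \mcsp measure already guarantee that any additional reduction cannot increase the measure --- so nothing is missing.
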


There are two novel ideas that make this improved running time possible, and we will briefly discuss how much
each idea contributed to the improvement.
The first one is branching on vertices in a small separator,
and the second one is our new 3-way branching.
Our 3-way branching is inspired by the inclusion/exclusion branching of \cite{RooijND09}, but
to the best of our knowledge, it has not been used in this form before.
The algorithm of \cite{KneisMRR05} used a 4-way branching that gave a running time of $\Ostar{4^{n/4}} = \Ostar{2^{n/2}}=O(1.4143^n)$.
If we use their analysis, or the analysis of \cite{ScottS07}, together with our 3-way branching, one obtains a running time bound of $\Ostar{3^{n/4}} = O(1.3161^n)$.
The further improvement to $3^{n/5+o(n)}=O(1.2458^n)$ is due solely to the
separator branching.
}%
\subsection{General graphs}
\label{subsec:cds}

As another application of our \smc technique, we show that by changing the preference for the pivot vertex in the subcubic subproblem, taking advantage of a separator, the running time of van Rooij's algorithm \cite{Rooij10} improves from $O(1.5673^n)$ to $O(1.5183^n)$.

\begin{algorithm}[tbp]
	\caption{\label{algo:vanRooij}\#SC-vR($I,A$) -- van Rooij's algorithm for \textsc{\#Set Cover}.}
	\KwIn{The incidence graph $I=(\mathcal{S} \cup \mathcal{U}, E)$ of $(\mathcal{U},\mathcal{S})$ and a set of annotated vertices $A$}
	\KwOut{A list containing the number of set covers of $\mathcal{S}$ of each cardinality $\kappa$}
	\SetKwComment{LabRule}{}{}
	\BlankLine
	\If{there exists a vertex $v\in V(I)\setminus A$ of degree at most one in $I-A$} {
		\Return \#SC-vR($I,A \cup \set{v}$)
	}
	\ElseIf{there exist two vertices $v_1, v_2 \in V(I)\setminus A$ both of degree two in $I-A$ that have the same two neighbors} {
		\Return \#SC-vR($I,A \cup \set{v_1}$)
	}
	\Else {
		Let $X\in \mathcal{S}\setminus A$ be a set vertex of maximum degree in $I-A$\\
		Let $e\in \mathcal{U}\setminus A$ be an element vertex of maximum degree in $I-A$\\
		\If{$d_{I-A}(X)\le 2$ \textbf{and} $d_{I-A}(e)\le 2$} {
			\Return \#SC-DP($I$)\\ \nllabel{ln:before}
		}
		\ElseIf{$d_{I-A}(X) > d_{I-A}(e)$} { \nllabel{ln:after}
			Let $L_{\textrm{take}} = $ \#SC-vR($I-N_I[X], A \setminus N_I(X)$) and increase all cardinalities by one\\
			Let $L_{\textrm{discard}} = $ \#SC-vR($I-X, A$)\\
			\Return $L_{\textrm{take}} + L_{\textrm{discard}}$
		}
		\Else {
			Let $L_{\textrm{optional}} = $ \#SC-vR($I-e, A$)\\
			Let $L_{\textrm{forbidden}} = $ \#SC-vR($I-N_I[e], A\setminus N_I(e)$)\\
			\Return $L_{\textrm{optional}} - L_{\textrm{forbidden}}$
		}
	}
\end{algorithm}

Let us recall van Rooij's algorithm \#SC-vR \cite{Rooij10} (see
Algorithm~\ref{algo:vanRooij}). It first reduces the \NDS problem to the \textsc{\#Set Cover} problem, where each vertex corresponds to an element of the universe $\mathcal{U}$ and the closed neighborhood of each vertex corresponds to a set $X\in \mathcal{S}$.
Then, there is a cardinality-preserving bijection between dominating sets of the graph and set covers of the instance $(\mathcal{U},\mathcal{S})$.
The \emph{incidence graph} of $(\mathcal{U},\mathcal{S})$ is the bipartite graph $(\mathcal{S} \cup \mathcal{U},E)$ which has an edge $(X,e)\in E$ between a set $X\in \mathcal{S}$ and an element $e\in \mathcal{U}$ if and only if $e\in X$.
In the \textsc{\#Set Cover} problem, the input is an incidence graph of an instance $(\mathcal{U},\mathcal{S})$, and the task is to compute for each size $\kappa$, $0\le \kappa \le |\mathcal{S}|$, the number of set covers of size $\kappa$.
As an additional input, the algorithm has a set $A$ of annotated vertices that is initially empty.
When the algorithm finds a vertex of degree at most 1 or a vertex of degree 2 that has the same neighborhood as some other vertex, it annotates this vertex, which means the vertex is effectively removed from the instance, and is left to a polynomial-time dynamic programming algorithm, \#SC-DP, that handles instances of maximum degree at most 2 plus annotated vertices.

\begin{algorithm}[tbp]
	\caption{\label{algo:cds}\#SC($I,A,(L,S,R)$) -- separator-based \textsc{\#Set Cover} algorithm.
		(Lines shown in grey are essentially unchanged from Algorithm~\ref{algo:vanRooij}.)
	}
	\KwIn{The incidence graph $I=(\mathcal{S} \cup \mathcal{U}, E)$ of
		$(\mathcal{U},\mathcal{S})$, a set of annotated vertices $A$, and a separation $(L,S,R)$ of $I-A$}
	\KwOut{A list containing the number of set covers of $\mathcal{S}$ of each cardinality $\kappa$}
	\SetKwComment{LabRule}{}{}
	\BlankLine
	\If{$I$ has a connected component $X\subsetneq V(I)$} {
		Let $L_{X} = $ \#SC($I[X], A\cap X$, $(\emptyset,\emptyset,X\setminus A)$)\\
		Let $L_{\overline{X}} = $ \#SC($I - X, A\setminus X, (\emptyset,\emptyset,V(I)\setminus (X\cup A)$)\\
		\Return $L$ where $L[i] = \sum_{j=0}^i L_{X}[j] \cdot L_{\overline{X}}[i-j]$
	}
	\color{black!60}
	\ElseIf{there exists a vertex $v\in V(I)\setminus A$ of degree at most one in $I-A$} {
		\Return \#SC($I,A \cup \set{v}, (L\setminus\set{v}, S\setminus\set{v},R\setminus\set{v})$) \nllabel{nl:deg1}
	}
	\ElseIf{there exist two vertices $v_1, v_2 \in V(I)\setminus A$ both of degree two in $I-A$ that have the same two neighbors} {
		\Return \#SC($I,A \cup \set{v_1}, (L\setminus\set{v_1}, S\setminus\set{v_1},R\setminus\set{v_1})$)
	}
	\Else {
		Let $X\in \mathcal{S}\setminus A$ and $e\in \mathcal{U}\setminus A$ be a set vertex and an element vertex, each of maximum degree in $I-A$\\
		\If{$d_{I-A}(X)\le 2$ \textbf{and} $d_{I-A}(e)\le 2$} {
			\Return \#SC-DP($I$)
		}
		\color{black}
		\ElseIf{$d_{I-A}(X)\le 3$ \textbf{and} $d_{I-A}(e)\le 3$} {
			\Return \#3SC($I,A,(L,S,R)$)
		}
		\color{black!60}
		\ElseIf{$d_{I-A}(X) > d_{I-A}(e)$} {
			\Return Branch($I,A,(L,S,R),X$)
		}
		\Else {
			\Return Branch($I,A,(L,S,R),e$)
		}
		\color{black}
	}
\end{algorithm}

Van Rooij showed that this algorithm has running time $O(1.5673^n)$.
We will now modify it to obtain Algorithm \#SC (see Algorithm~\ref{algo:cds}), which takes advantage of small separators in subcubic graphs.
Then, we will show that this modification improves the running time to $O(1.5183^n)$.
Algorithmically, the change is simple. The parts that remain essentially
unchanged (except that the separation is passed along in the recursive calls)
are written in {\color{black!60}gray} in Algorithm~\ref{algo:cds}.
For convenience, we also define as a function what it means to branch on a vertex (see Function \ref{algo:branch})
since we use it several times.
\begin{function}[tbp]
	\caption{Branch($I,A,{\left( L,S,R\right)}, v$)} \label{algo:branch}
	\KwIn{The incidence graph $I=(\mathcal{S} \cup \mathcal{U}, E)$ of
		$(\mathcal{U},\mathcal{S})$, a set of annotated vertices $A$, and a separation $(L,S,R)$ of $I-A$, and a vertex $v\in (\mathcal{S} \cup \mathcal{U})\setminus A$ to branch on}
	\KwOut{A list containing the number of set covers of $\mathcal{S}$ of each cardinality $\kappa$}
	\SetKwComment{LabRule}{}{}
	\BlankLine
	\If{$v\in \mathcal{S}$} {
		Let $L_{\textrm{discard}} = $ \#SC($I-v, A,(L\setminus\set{v},S\setminus\set{v},R\setminus\set{v})$)\\
		Let $L_{\textrm{take}} = $ \#SC($I-N_I[v], A \setminus N_I[v], (L\setminus N[v],S\setminus N[v], R\setminus N[v])$) and increase all cardinalities by one\\
		\Return $L_{\textrm{take}} + L_{\textrm{discard}}$ \nllabel{nl:branch2e}
	}
	\Else{
		Let $L_{\textrm{optional}} = $ \#SC($I-v, A,(L\setminus\set{v},S\setminus \set{v},R\setminus \set{v})$)\\
		Let $L_{\textrm{forbidden}} = $ \#SC($I-N_I[v], A\setminus N_I[v], (L\setminus N[v],S\setminus N[v],R\setminus N[v])$)\\
		\Return $L_{\textrm{optional}} - L_{\textrm{forbidden}}$ \nllabel{nl:branch1e}
	}
\end{function}
When the algorithm reaches an instance with maximum degree at most $3$, it computes a separation of the incidence graph, minus the annotated vertices, and prefers to branch on vertices in the separator.
To do this, we add to the input a separation $(L,S,R)$ of $I-A$.
Initially, this separation is $(\emptyset,\emptyset,V(I)-A)$; it is really only used when the maximum degree of $I-A$ decreases to 3.
We note that a separation $(L,S,R)$ for $I-A$ can easily be transformed into a separation $(L',S',R')$ for $I$.
For this, we assign each vertex from $A$ to one of its neighbors in $V(I)\setminus A$ when it is annotated (or to an arbitrary vertex if it has degree 0).
Now, start with $(L',S',R')=(L,S,R)$.
A vertex from $A$ is added to $L'$, $S'$, or $R'$ if it is assigned to a vertex that is in $L'$, $S'$, or $R'$, respectively.
The vertices in $A$ will not affect the measure, defined momentarily.
Our algorithm also explicitly handles connected components when the graph is not connected, and it calls a subroutine, \#3SC (Algorithm~\ref{algo:3cds}), when the maximum degree is 3.

\begin{algorithm}[tbp]
\caption{\label{algo:3cds}\#3SC($I,A,(L,S,R)$) -- separator-based \textsc{\#Set Cover} algorithm for subcubic instances.}
\KwIn{The incidence graph $I=(\mathcal{S} \cup \mathcal{U}, E)$ of
$(\mathcal{U},\mathcal{S})$, a set of annotated vertices $A$, and a separation $(L,S,R)$ of $I-A$}
\KwOut{A list containing the number of set covers of $\mathcal{S}$ of each cardinality $\kappa$}
\SetKwComment{LabRule}{}{}
\BlankLine
     \If{$S=\emptyset$} {
       Compute a balanced separation $(L,S,R)$ of $I-A$ with respect to the measure $\mu_r$ using Lemma~\ref{lem:sep} \nllabel{nl:sep}
     }
     \If{$\mu_r(L) > \mu_r(R)$} {
       Swap $L$ and $R$
     }
     \If{there exists a vertex $s\in S$ with no neighbor in $L$} { \nllabel{nl:no-nb-L-cond}
      \Return \#SC($I,A, (L, S\setminus\set{s},R\cup\set{s})$) \nllabel{nl:no-nb-L}
     }
     \ElseIf{there exists a vertex $s\in S$ with no neighbor in $R$} { \nllabel{nl:no-nb-R-cond}
      \Return \#SC($I,A, (L\cup\set{s}, S\setminus\set{s},R)$) \nllabel{nl:no-nb-R}
     }
     \ElseIf{there exists a vertex $s\in S$ with $d_{I-A}(s)=2$} { \nllabel{nl:deg2-S}
       \If{$s$ has a degree-2 neighbor in $I-A$}{
       	Let $r$ be the first degree-3 vertex or vertex from $S$ encountered when moving from $s$ to the right along a path of degree-2 vertices in $I-A$\\
       	\Return Branch($I,A,(L,S,R),r$)
       	\nllabel{nl:deg2-S-path}
	   }
       \ElseIf{$(L,S,R)$ is balanced} {
        \Return \#SC($I,A, (L\setminus \set{l}, (S\setminus\set{s}) \cup\set{l}, R\cup \set{s})$) where $\set{l} = L\cap N_{I-A}(s)$ \nllabel{nl:deg2-S-bal}
       }
       \Else{
       	\Return \#SC($I,A, (L\cup \set{s}, (S\setminus\set{s}) \cup\set{r}, R\setminus \set{r})$) where $\set{r} = R\cap N_{I-A}(s)$ \nllabel{nl:deg2-S-imbal}
       }
     }
     \ElseIf{$\mu_r(R)-\mu_r(L)>B$ and there exists a vertex $s\in S$ with two neighbors in $L$ and one neighbor $r$ in $R$ that has degree $3$ in $I-A$} { \nllabel{nl:imbal-2L-cond}
      \Return \#SC($I,A, (L\cup \set{s}, (S\cup \set{r})\setminus \set{s}, R\setminus \set{r})$) \nllabel{nl:imbal-2L}
     }
     \ElseIf{$\mu_r(R)-\mu_r(L)>B$ and there exists a vertex $s\in S$ with two neighbors in $L$ and one neighbor $r$ in $R$ with $N_{I-A}(r)=\set{s,s'}$ for some $s'\in S$} { \nllabel{nl:imbal-2LS-cond}
      \Return \#SC($I,A, (L\cup \set{s,r}, S\setminus \set{s}, R\setminus \set{r})$) \nllabel{nl:imbal-2LS}
     }
     \ElseIf{there exists an element $s\in \mathcal{U} \cap S$} { \nllabel{nl:branch1s}
      \Return Branch($I,A,(L,S,R),s$)
     }
     \Else {
      Let $s\in \mathcal{S} \cap S$ be a set in $S$\\ \nllabel{nl:branch2s}
      \Return Branch($I,A,(L,S,R),s$) \nllabel{nl:branch2e}
     }
\end{algorithm}

\newcommand{\welt}{w_{\textrm{elt}}}
\newcommand{\wset}{w_{\textrm{set}}}
\newcommand{\wright}{w_{\textrm{right}}}
\newcommand{\wsep}{w_{\textrm{sep}}}

We are now ready to upper bound the running time of Algorithm \#SC (Algorithm~\ref{algo:cds}).
For instances with maximum degree at least $4$, we use the same measure as van Rooij,
\begin{align*}
 \mu_4 = \sum_{e\in \mathcal{U}\setminus A} \welt(d_{I-A}(e)) + \sum_{X\in \mathcal{S}\setminus A} \wset(d_{I-A}(X))\enspace.
\end{align*}
Here, $\welt(\cdot)$ and $\wset(\cdot)$ are non-negative real functions of the vertex degrees, which will be determined later.
For instances with maximum degree $3$, we use the following measure,
\begin{align*}
 \mu_3 &= \mu_s(S) + \mu_r(R) + \max \left( 0, B-\frac{\mu_r(R)-\mu_r(L)}{2} \right) + (1+B) \cdot \log_{1+\epsilon} (\mu_r(R)+\mu_s(S)),\\
\intertext{where}
 \mu_s(S) &= \sum_{s\in S} \wsep(d_{I-A}(s)),\\
 \mu_r(R) &= \sum_{r\in R} \wright(d_{I-A}(r)), \text{and}\\
 B &= 6 \cdot \wright(3).
\end{align*}
Again, $\wsep(\cdot)$ and $\wright(\cdot)$ are non-negative real functions of the vertex degrees, which will be determined later. We commonly refer to $\welt(i)$, $\wset(i)$, $\wsep(i)$, $\wright(i)$ as the weight of an element vertex, a set vertex, a separator vertex and a right vertex, respectively, of degree $i$.

To combine the analysis for subcubic instances and instances with maximum degree at least $4$ using Lemma~\ref{lem:combinemeasureanalysis}, we impose the following constraints, guaranteeing that $\mu_4\ge \mu_3$ when $\mu_4 = \omega(1)$. (When $\mu_4 = O(1)$, the algorithm will take constant time.)
\begin{align}
 \welt(3) &> \wright(3)                & \wset(3) &> \wright(3)\\
 \welt(i) &\ge \wright(i), \text{ and} & \wset(i) &\ge \wright(i), & i\in\set{0,1,2}.
\end{align}
For instances with degree at least $4$, we obtain exactly the same constraints on the measure as in \cite{Rooij10}.
Denoting $\Delta \welt(i) = \welt(i) - \welt(i-1)$ and $\Delta \wset(i) = \wset(i) - \wset(i-1)$, these constraints are
\begin{align}
 \welt(0) &= \welt(1) = 0,                 & \wset(0) &= \wset(1) = 0, \label{eq:w01}\\
 \Delta \welt(i) &\ge 0,                   & \Delta \wset(i) &\ge 0 &\text{for all } i\ge 2,\\
 \Delta \welt(i) &\ge \Delta \welt(i+1),   & \Delta \wset(i) &\ge \Delta \wset(i+1) &\text{for all } i\ge 2,\\
 2\cdot \Delta \welt(3) &\le \welt(2), \text{ and} & 2 \cdot \Delta \wset(4) &\le \wset(2). \label{eq:deg-dec-N2}
\end{align}
For branching on a vertex with degree $d\ge 4$ with $r_i$ neighbors of degree $i$ in $I-A$, where $\sum_{i=2}^{d} r_i = d$, we have the following constraints,
\begin{align}
 2^{-\wset(d) - \sum_{i=2}^{\infty} r_i \cdot \welt(i) - \Delta \wset(d) \cdot \sum_{i=2}^{\infty} r_i \cdot (i-1)}
 + 2^{-\wset(d) - \sum_{i=2}^{\infty} r_i \cdot \Delta \welt(i)}
 &\le 1 \label{eq:ds4set} 
\intertext{if the vertex is a set vertex, and}
 2^{-\welt(d) - \sum_{i=2}^{\infty} r_i \cdot \wset(i) - \Delta \welt(d) \cdot \sum_{i=2}^{\infty} r_i \cdot (i-1)}
 + 2^{-\welt(d) - \sum_{i=2}^{\infty} r_i \cdot \Delta \wset(i)}
 &\le 1 \label{eq:ds4elt} 
\end{align}
if it is an element vertex. We note that the constraints \eqref{eq:ds4set} have $r_d=0$ since the algorithm prefers to branch on elements when there is both an element and a set of maximum degree at least 4.
Constraints \eqref{eq:w01}--\eqref{eq:ds4elt} are directly taken from \cite{Rooij10}.

\newcommand{\degdec}{\delta_{\textrm{deg-dec}}}
For instances where $I-A$ has maximum degree $3$, let us recall (from Section \ref{sec:smc}) that
an instance is \emph{balanced} if $0 \le \mu_r(R) - \mu_r(L) \le 2B$ and \emph{imbalanced} if $\mu_r(R) - \mu_r(L) > 2B$.
For convenience, we define the minimum decrease in measure due to a decrease in the degree in $I-A$ of a vertex from $S\cup R$ of degree $2$ or $3$ in a balanced or imbalanced instance:
\begin{align}
 \degdec = \min_{i\in\set{2,3}} \left\{ \wsep(i)-\wsep(i-1) , \frac{\wright(i)-\wright(i-1)}{2} \right\}. \label{eq:def-deltadec}
\end{align}
To make sure that annotating vertices does not increase the measure, we require that the decrease of the degree of a vertex (half-edge deletion) does not increase the measure:
\begin{align}
 \degdec \ge 0. \label{eq:deltadec-pos}
\end{align}
For computing a new separator in line \ref{nl:sep}, constraint \eqref{eq:sep} becomes
\begin{align}
 \wsep(3)/6 + 5/12\cdot \wright(3) &< \wright(3), \text{ or} \notag\\
 \wsep(3) &< 7/2\cdot \wright(3).
\end{align}
For dragging a separator vertex into $R$ if it has no neighbor in $L$ (line \ref{nl:no-nb-L}), imbalanced instances are most constraining:
\begin{align}
 -\wsep(d) + \wright(d) &\le 0, & 2\le d\le 3. \label{eq:no-nb-L}
\end{align}
For dragging a separator vertex into $L$ if it has no neighbor in $R$ (line \ref{nl:no-nb-R}), balanced instances are most constraining, imposing the constraints
\begin{align*}
 -\wsep(d) + 1/2\cdot \wright(d) &\le 0, & 2\le d\le 3,
\end{align*}
which are no more constraining than \eqref{eq:no-nb-L}.

If there is a vertex $s\in S$ with $d_{I-A}(s)=2$ (line~\ref{nl:deg2-S}), it has one neighbor in $L\setminus A$ and one neighbor in $R\setminus A$, otherwise a previous case would have applied.
The algorithm distinguishes three cases.
In the first case, $s$ has a neighbor $x\in L\cup R$ that has degree 2 in $I-A$. The algorithm branches on $r$, which is the first vertex we encounter when moving from $s$ to the right in $I-A$, that is not a degree-2 vertex in $R$. So, $r$ is either in $S$ or it is a degree-3 vertex in $R$.
If $r\in S$, then branching on $r$ removes the vertex $r$ from the instance, and the vertex $s$ is also removed from $I-A$, either directly, or by triggering the degree-1 rule in line \ref{nl:deg1} of Algorithm \#SC.
Thus, the measure decreases by at least $2\cdot\wsep(2)$ in both branches. Thus, we will constrain that
\begin{align}
 2\cdot 2^{-2\cdot \wsep(2)} &\le 1 \notag\\
 \intertext{or, equivalently, that}
 2\cdot \wsep(2) &\ge 1.
\end{align}
We could weaken the constraint further by analyzing the effect on the measure of the other neighbors of $s$ and $r$, but the constraint will not turn out to be tight.
If $r\notin S$, then $r\in R$ and $r$ has degree 3 in $I-A$. After branching on $r$, we have a measure decrease of at least $\wright(3)/2$ because $r$ is removed ($r$ contributes $\wright(3)$ to the measure in the imbalanced case but only $\wright(3)/2$ in the balanced case). In addition, we have a measure decrease of $\wsep(2)$ because $s$ is removed, a decrease of $\wright(2)/2$ because $x$ is removed, and a decrease of at least $2\cdot \degdec$ because two more neighbors of $r$ have their degree reduced or are deleted. We note that $x$ could be in $L$ and its removal decreases $\mu_r(L)$, but we account for a decrease of $\mu_r(R)$ that is at least as large since $\wright(3)\ge \wright(2)$ due to constraints \eqref{eq:def-deltadec} and \eqref{eq:deltadec-pos}. Since Condition \eqref{eq:balance-condition} holds, the decrease of $\mu$ due to vertices in $R\cup L$ is therefore at least half the decrease of $\mu_r(L)+\mu_r(R)$.
For this subcase, we constrain that
\begin{align}
2\cdot 2^{-\wsep(2)-2\cdot \degdec -(\wright(2)+\wright(3))/2} &\le 1 \notag\\
\intertext{or, equivalently, that}
\wsep(2)+2\cdot \degdec +(\wright(2)+\wright(3))/2 &\ge 1.
\end{align}
In the second case, $s$ has two neighbors with degree 3 and the instance is balanced.
The algorithm ``drags'' $s$ to the right, that is, $s$ is moved from $S$ to $R$ and its neighbor in $L$ is moved to $S$.
We obtain the constraint
\begin{align}
 -\wsep(2) + \wsep(3) + 1/2\cdot (\wright(2)-\wright(3)) \le 0.
\end{align}
In the third case, $s$ has two neighbors with degree 3 and the instance is imbalanced.
The algorithm drags $s$ to the left, which leads to the weaker constraint
\begin{align*}
-\wsep(2) + \wsep(3) -\wright(3) \le 0.
\end{align*}

For the operation in line \ref{nl:imbal-2L}, where $\mu_r(R)-\mu_r(L)>B$ and some vertex $s\in S$ has two neighbors in $L$ and one degree-3 neighbor in $R$, the vertex $s$ is dragged to the left.
In the worst case, the resulting instance is balanced, imposing the following constraints on the measure,
\begin{align*}
 -\wsep(3)+\wsep(3) +1/2\cdot(\wright(3)-\wright(3)) &\le 0,
\end{align*}
which always holds.

For the operation in line \ref{nl:imbal-2LS}, where $\mu_r(R)-\mu_r(L)>B$ and some vertex $s\in S$ has two neighbors in $L$ and one degree-2 neighbor $r$ in $R$ that has another neighbor $s'$ in $S$, the vertices $s$ and $r$ are dragged to the left.
In the worst case, the resulting instance is balanced, imposing the following constraints on the measure,
\begin{align*}
 -\wsep(3) +1/2\cdot\wright(3) &\le 0,
\end{align*}
which holds due to \eqref{eq:no-nb-L}.

For the constraints of the branching steps, we will take into account the decrease of vertex degrees in the second neighborhood of a set that we add to the set cover and of an element that we forbid to cover.
The analysis will be simplified by imposing the following constraints,
\begin{align}
 2 \cdot \Delta \wsep(3) &\le \wsep(2) & 2\cdot \Delta \wright(3) &\le \wright(2).
\end{align}
They will enable us to account for a decrease in measure of $\degdec$ for each edge that leaves the second neighborhood, even when two of these edges are incident to one vertex of degree 2.
We cannot relax these equations for sets as in \eqref{eq:deg-dec-N2}: When $I-A$ has maximum degree 3 and it contains an element of degree 3, then van Rooij's algorithm prefers to branch on a degree-3 element, whereas the new algorithm cannot guarantee to find a degree-3 element in $S$.

For the two branching rules (lines \ref{nl:branch1s}--\ref{nl:branch1e} and \ref{nl:branch2s}--\ref{nl:branch2e}), the algorithm selects a separator vertex $s\in S$, which has degree 3 in $I-A$ due to line \ref{nl:deg2-S}.
In the first branch, $s$ is deleted from the graph, and in the second branch, $N_{I}[s]$ is deleted.

First, consider the case where $s$ has a neighbor $s'$ in $S$.
By lines \ref{nl:no-nb-L-cond} and \ref{nl:no-nb-R-cond}, $N_{I-A}(s) = \set{l,s',r}$ and $N_{I-A}(s') = \set{l',s,r'}$ with $\set{l,l'}\subseteq L$ and $\set{r,r'}\subseteq R$.
Since $I$ is bipartite, we have that $l\ne l'$ and $r\ne r'$.
In the balanced case, the branch where only $s$ is deleted has a decrease in measure of
$\wsep(3)$ for $s$, $\Delta \wsep(3)$ for $s'$, $1/2\cdot \Delta\wright(d_{I-A}(r))$ for $r$, and $1/2\cdot \Delta\wright(d_{I-A}(l))$ for $l$, and the branch where $N_{I-A}[s]$ is deleted has a decrease in measure of
$2\wsep(3)$ for $s$ and $s'$, $1/2\cdot \wright(d_{I-A}(r))$ for $r$, $1/2\cdot \wright(d_{I-A}(l))$ for $l$, and
$(d_{I-A}(r)+d_{I-A}(l))\cdot \degdec$ for the vertices in $N_{I-A}(N_{I-A}[s])$.
Thus, we get the following constraints for the balanced case,
\begin{align}
 &2^{-\wsep(3)-\Delta\wsep(3)-1/2\cdot(\Delta\wright(d_r)+\Delta\wright(d_l))} \notag\\
+&2^{-2\wsep(3)-1/2\cdot(\wright(d_r)+\wright(d_l))-(d_r+d_l)\cdot \degdec} \notag\\
\le& 1, & 2\le d_l,d_r\le 3.
\end{align}
In the imbalanced case, we obtain the following set of constraints, which are no more constraining than the previous set,
\begin{align*}
 &2^{-\wsep(3)-\Delta\wsep(3)-\Delta\wright(d_r)} \notag\\
+&2^{-2\wsep(3)-\wright(d_r)-d_r\cdot \degdec} \notag\\
\le& 1, & 2\le d_r\le 3.
\end{align*}
Here, the number of vertices in $N_{I-A}(N_{I-A}[s]) \cap (S\cup R)$ is $d_r$, accounting for the vertex $r'$ and the neighbors of $r$ besides $s$.

Now, consider the case where $s$ has two neighbors in $L$ or in $R$.
Since the graph is bipartite, these two neighbors are not adjacent.
If $\mu_r(R)-\mu_r(L)\le B$, then the two created subinstances are balanced, which gives us the following set of constraints:
\begin{align}
 &2^{-\wsep(3)-1/2\cdot(\Delta\wright(d_1)+\Delta\wright(d_2)+\Delta\wright(d_3))} \notag\\
+&2^{-\wsep(3)-1/2\cdot(\wright(d_1)+\wright(d_2)+\wright(d_3))-(d_1+d_2+d_3-3)\cdot \degdec} \notag\\
\le& 1, & 2\le d_1,d_2,d_3\le 3. \label{eq:branch2}
\end{align}
Otherwise, $\mu_r(R)-\mu_r(L) > B$.
If $s$ has two neighbors in $R$, then the worst case is the balanced one, which is covered by \eqref{eq:branch2}.
Finally, if $s$ has two neighbors in $L$, then its neighbor $r\in R$ has $d_{I-A}(r)=2$ due to line \ref{nl:imbal-2L-cond}.
The imbalanced case is not covered by \eqref{eq:branch2} and incurs the following constraint on the measure.
\begin{align}
 &2\cdot2^{-\wsep(3)-\wright(2)-\Delta\wright(d)} \notag\\
\le& 1, & 2\le d\le 3.
\end{align}
Note that the deletion of $s$ triggers that $r$ is removed by line \ref{nl:deg1} in Algorithm \ref{algo:cds}, which decreases the degree of its other neighbor, which is in $R$ due to line \ref{nl:imbal-2LS-cond}.

\medskip
This gives us all the constraints on $\mu_3$ for the analysis.
To obtain values for the various weights, we set
\begin{align}
 \welt(i) &= \welt(i+1), & \wset(i) &= \wset(i+1), & i\ge 6,
\end{align}
and we minimize $\wset(6)+\welt(6)$.
Since the separation has not added any non-convex constraints, the resulting mathematical program is convex, as usual \cite{hybrid}.
Solving this convex program gives the following values for the weights:
\begin{align*}
 \wright(2) &= 0.15282 & \welt(2) &= 0.15384 & \wset(2) &= 0.16408\\
 \wright(3) &= 0.22669 & \welt(3) &= 0.22732 & \wset(3) &= 0.24592\\
 \wsep(2)   &= 0.75630 & \welt(4) &= 0.26684 & \wset(4) &= 0.29320\\
 \wsep(3)   &= 0.78943 & \welt(5) &= 0.29023 & \wset(5) &= 0.30224\\
 &&\welt(6) &= 0.30019 & \wset(6) &= 0.30224\\
\end{align*}
For subcubic instances, the depth of the search tree is bounded by a polynomial since each recursive call decreases the measure $(|S_2|+|S|)\cdot \frac{\mu_r(V(I))}{\wright(2)} + \left| \frac{\mu_r(R)-\mu_r(L)}{\wright(2)} \right|$ by at least 1, where $S_2$ is the set of vertices in $S$ that have degree 2 in $I-A$.
By Lemma \ref{lem:measureanalysis}, we conclude that subcubic instances are solved in time $O^*(2^{\mu_3})$.
Now, using Lemma \ref{lem:combinemeasureanalysis} with $\mu = \mu_4$, $\mu' = \mu_3$ and $\eta = |V(I)\setminus A|$, we conclude that the running time of the algorithm is upper bounded by $\Ostar{2^{\mu_4}}$.
Thus, the algorithm solves \textsc{\#Dominating Set} in time $\Ostar{2^{(\welt(6)+\wset(6))\cdot n}} = \Ostar{1.5183^n}$, where $n$ is the number of vertices.

\begin{theorem}
 Algorithm \#SC solves \textsc{\#Set Cover} in
 time $\Ostar{2^{0.30019\cdot |\mathcal{U}|+0.30224 \cdot |\mathcal{S}|}}$ and
 \textsc{\#Dominating Set} in time $O(1.5183^n)$,
 using polynomial space.
\end{theorem}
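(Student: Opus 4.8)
The plan is to split the argument into a correctness part and a running-time part, and to obtain the running time by composing the subcubic \smc analysis with van Rooij's high-degree analysis through Lemma~\ref{lem:combinemeasureanalysis}.

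For correctness I would observe that every line of Algorithm~\ref{algo:cds} and Algorithm~\ref{algo:3cds} falls into one of three categories. First, the grey lines reproduce van Rooij's reductions and branchings, whose correctness is inherited from \cite{Rooij10}; in particular the two genuine branchings (lines \ref{nl:branch1s}--\ref{nl:branch1e} and \ref{nl:branch2s}--\ref{nl:branch2e}) are exactly his inclusion/exclusion rules applied to a chosen separator vertex, and return $L_{\textrm{optional}}-L_{\textrm{forbidden}}$ or $L_{\textrm{take}}+L_{\textrm{discard}}$ as before. Second, the component split returns the convolution $L[i]=\sum_{j} L_X[j]\,L_{\overline{X}}[i-j]$ of the two per-component cardinality vectors, which is correct because a set cover of a disjoint union is an independent choice on each component. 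Third, the separator-maintenance steps of \#3SC (dragging a vertex from $S$ into $L$ or $R$, or re-forming the separator in line \ref{nl:sep}) only relabel vertices among $L,S,R$ and do not touch the underlying instance $(\mathcal{U},\mathcal{S})$ at all, so they preserve the counted quantity. Hence the vector returned at the root is the true number of set covers of each cardinality, and the cardinality-preserving bijection between dominating sets of $G$ and set covers of the derived instance gives the \textsc{\#Dominating Set} statement.

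For the running time I would first gather all constraints already derived in this subsection: the high-degree constraints \eqref{eq:w01}--\eqref{eq:ds4elt} carried over verbatim from \cite{Rooij10}, together with the subcubic constraints arising from the separation step \eqref{eq:sep}, the dragging steps \eqref{eq:no-nb-L}, the monotonicity $\degdec\ge 0$ under half-edge deletions, and the two subcubic branchings (culminating in \eqref{eq:branch2} and its imbalanced variants). I would then verify that the tabulated weights $\wright,\welt,\wset,\wsep$ form a feasible point of the convex program minimizing $\wset(6)+\welt(6)$. Given feasibility, Lemma~\ref{lem:measureanalysis} applied to the subcubic phase with $\mu_3$ and the stated polynomial depth measure $(|S_2|+|S|)\cdot \frac{\mu_r(V(I))}{\wright(2)} + \left| \frac{\mu_r(R)-\mu_r(L)}{\wright(2)} \right|$ yields an $\Ostar{2^{\mu_3}}$ bound there; Lemma~\ref{lem:combinemeasureanalysis} with $\mu=\mu_4$, $\mu'=\mu_3$, $\eta=|V(I)\setminus A|$ then lifts this to $\Ostar{2^{\mu_4}}$ overall, the hypothesis $\mu_3\le\mu_4$ being secured by the transition constraints $\welt(3)>\wright(3)$, $\wset(3)>\wright(3)$ and $\welt(i),\wset(i)\ge\wright(i)$. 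Finally the saturation $\welt(i)=\welt(i+1)$, $\wset(i)=\wset(i+1)$ for $i\ge 6$ bounds every element's weight by $\welt(6)$ and every set's by $\wset(6)$, so $\mu_4\le \welt(6)\,|\mathcal{U}|+\wset(6)\,|\mathcal{S}|$, giving the \textsc{\#Set Cover} bound; for \textsc{\#Dominating Set} we have $|\mathcal{U}|=|\mathcal{S}|=n$, whence $\Ostar{2^{(\welt(6)+\wset(6))n}}=\Ostar{2^{0.60243\,n}}=\Ostar{1.5183^n}$. Polynomial space follows from the polynomial recursion depth and the fact that each node stores only a length-$(n+1)$ vector.

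I expect the main obstacle to be matching each algorithmic line to the correct constraint while confirming that the subcubic branchings respect the balance condition \eqref{eq:balance-condition} of the \smc framework: one must check, case by case, that whenever $\mu_r(R)-\mu_r(L)>B$ the reduction makes at least as much progress on the right as on the left --- which is precisely why lines \ref{nl:imbal-2L-cond} and \ref{nl:imbal-2LS-cond} single out the imbalanced ``two neighbors in $L$'' configuration and drag the offending vertices leftward --- and that in every other case the balanced-case constraint dominates its imbalanced counterpart. A further delicate point is that the set-side relaxation used at high degree cannot be reused at degree $3$ (the algorithm cannot guarantee a degree-$3$ element inside $S$), so one is forced to impose $2\cdot\Delta\wright(3)\le\wright(2)$ rather than a weaker inequality; verifying that the listed weights remain feasible under this stricter requirement, and hence genuinely certify the $1.5183^n$ bound, is where the real work lies. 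The closing arithmetic $\welt(6)+\wset(6)=0.60243$ and $2^{0.60243}<1.5183$ is then routine.
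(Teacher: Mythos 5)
Your proposal follows essentially the same route as the paper: the same measures $\mu_3$ and $\mu_4$, the same constraint system (separation, dragging, half-edge deletion, the two subcubic branchings, and van Rooij's high-degree constraints), the same application of Lemma~\ref{lem:measureanalysis} to the subcubic phase and Lemma~\ref{lem:combinemeasureanalysis} to lift to $\Ostar{2^{\mu_4}}$, and the same final bound $\mu_4\le \welt(6)\,|\mathcal{U}|+\wset(6)\,|\mathcal{S}|$. Your explicit three-way correctness classification and your identification of the balance condition and the non-relaxable $2\cdot\Delta\wright(3)\le\wright(2)$ constraint as the delicate points match the paper's own emphasis, so this is the intended argument.
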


\subsection{Comments on the \#SC algorithm and its analysis}

Let us make a few final comments on the algorithm and its analysis.
First, while the algorithm of van Rooij always prefers to branch on an element when the graph contains both an element and a set of maximum degree, this is not always possible when the algorithm needs to branch on vertices in the separator.
However, this disadvantage is overwhelmed by the gain due to the separator branching.

Second, a certain amount of care needs to be taken to make sure that the algorithm terminates.
For example, if we had omitted ``that has degree $3$ in $I-A$'' in line \ref{nl:imbal-2L-cond} of Algorithm \ref{algo:3cds}, then lines \ref{nl:deg2-S-bal} and \ref{nl:imbal-2L} could have eternally dragged vertices alternately to the left and to the right.

Third, the initial \textsc{Set Cover} transformation is not absolutely necessary, since we may equivalently label the vertices as in Subsection~\ref{subsec:3cds}.
We can then assign a weight to each vertex which depends on its label and the number of neighbors with each label.
However, our attempt resulted in an unmanageable number of constraints.\footnote{With 8Gb of memory, we were only able to handle around 1.25 million constraints using a 64-bit version of AMPL and the solver NPSOL, and such instances were solved in around 10 minutes. This was not even sufficient to analyze running times for degree-4 instances.}
We tried several compromises to simplify the analysis (merge labels, simplify the degree-spectrum), and we found that, of these, the simplification corresponding exactly to the \textsc{Set Cover} translation performed best.
Here, a graph vertex $u$ corresponds to an element $e$, which encodes that the vertex needs to be dominated, and to a set $X$, which encodes that it can be in the dominating set.
Thus, in the measure, the weight of $u$ is the weight of $e$ plus the weight of $X$.
Therefore, the weight of $u$ depends on the degree of $e$ and the degree of $X$, but not on the label of $u$ and the degree of $u$.
It would be interesting to know whether this choice of weights compromises the optimality of the analysis.

Finally, we remark that it did not help the analysis to allow different weights for sets and elements in the subcubic analysis. To further improve the running time, it will not be sufficient to improve the subcubic case, since none of the degree-3 constraints turn out tight.

\section{Limitations of the method}
\label{sec:limitations}

In this section, we discuss when we can and cannot expect to achieve improved running times using the \smc method.
The most important feature an algorithm needs to have is that it eventually encounters instances where small balanced separators can be computed efficiently.
This is the case for algorithms that branch on graphs of bounded degree in their final stages, but other means of obtaining graphs with relatively small separators are conceivable. This might include cases where the treewidth of the graph is bounded by a fraction of the number of vertices, where the instance has a small backdoor set \cite{GaspersS12} to bounded treewidth instances (as in \cite{GaspersS13} for incidence graphs of SAT formulae), and graphs with small treewidth modulators \cite{KimLPRRSS13,FominLMS12}.

The first limitation of our method occurs when a non-separator based algorithm is already so fast that branching on separators does not add an advantage.
For example, the current fastest algorithm for \textsc{Maximum Independent Set} on subcubic graphs has worst-case running time
$O(1.0836^n)$, where $n$ is the number of vertices of the input graph \cite{XiaoN13}.
Merely branching on all the vertices of the separator would take time $2^{n/6+o(n)}$ and $2^{1/6} \approx 1.1225 > 1.0836$.
This full time will be required, e.g., if the computed separator happens to be an independent set, so that
a decision for one vertex of the separator (put it in the independent set or not) does not have any direct implications for the other vertices in the separator.
While this is a limitation of our method, it also motivates the study of balanced separators with additional properties.
A small but relatively dense separator could be useful for \textsc{Maximum Independent Set} branching algorithms.

The second limitation is that our \smc subroutines can often be replaced by treewidth-based subroutines \cite{FominGSS09}, leading to smaller worst-case running times but exponential space usage.
In fact, replacing the branching on graphs with small maximum degree by a treewidth-based dynamic programming algorithm, such as \cite{RooijBR09}, is a well-known method \cite{FominGSS09} to improve the worst-case running time of an algorithm at the expense of exponential-space usage.
For instance, the currently fastest exponential-space algorithm for \textsc{\#Dominating Set} \cite{NederlofRD14} uses a treewidth-based subroutine on bounded degree instances to improve the running time.
It solves \textsc{\#Dominating Set} in $O(1.5002^n)$ time and exponential space.
Using only polynomial space is however recognized as a significant advantage,
especially for algorithms that may run much faster on real-world instances
than their worst-case running time bounds,
and the field devotes attention to both categories \cite{FominK10}.
An exponential space usage would quickly become a bottleneck in the execution of the algorithm.

\section{Conclusion}
\label{sec:concl}

We have presented a new method to analyze separator based branching algorithms within the \mc framework.
It uses a novel kind of measure that is able to take advantage of a global
structure in the instance and amortize a sudden large gain,
due to the instance decomposing into several independent subinstances,
over a linear number of previous branchings.

The method provides opportunities and challenges in the design
and analysis of other exponential-time branching algorithms: opportunities
because we believe it is widely applicable,
and challenges because it complicates the analysis and leads to more choices in the
design of algorithms.
As usual, finding a set of reductions leading to a small running time upper bound in a \mc analysis remains an art.
The method has also been used to obtain faster polynomial-space algorithms for counting independent sets and graph coloring \cite{GaspersL17}.

Finally, our analysis might open the way for other measures relying on global structures of instances.

\begin{acks}
	
The research was supported in part by the
DIMACS 2006--2010 Special Focus on Discrete Random Systems,
NSF grant DMS-0602942.
Serge Gaspers is the recipient of an Australian Research Council (ARC) Discovery Early Career Researcher Award (DE120101761) and a Future Fellowship (FT140100048), and acknowledges support under the ARC's Discovery Projects funding scheme (DP150101134).
The research was done in part at Dagstuhl Seminars 10441
(Exact Complexity of NP-hard problems, 2010)
and 13331
(Exponential Algorithms: Algorithms and Complexity Beyond Polynomial Time, 2013),
and was presented at the latter.
\end{acks}

\bibliographystyle{ACM-Reference-Format}
\bibliography{separate-long,separate}


\begin{thebibliography}{00}


\ifx \showCODEN    \undefined \def \showCODEN     #1{\unskip}     \fi
\ifx \showDOI      \undefined \def \showDOI       #1{#1}\fi
\ifx \showISBNx    \undefined \def \showISBNx     #1{\unskip}     \fi
\ifx \showISBNxiii \undefined \def \showISBNxiii  #1{\unskip}     \fi
\ifx \showISSN     \undefined \def \showISSN      #1{\unskip}     \fi
\ifx \showLCCN     \undefined \def \showLCCN      #1{\unskip}     \fi
\ifx \shownote     \undefined \def \shownote      #1{#1}          \fi
\ifx \showarticletitle \undefined \def \showarticletitle #1{#1}   \fi
\ifx \showURL      \undefined \def \showURL       {\relax}        \fi
\providecommand\bibfield[2]{#2}
\providecommand\bibinfo[2]{#2}
\providecommand\natexlab[1]{#1}
\providecommand\showeprint[2][]{arXiv:#2}

\bibitem[\protect\citeauthoryear{Achlioptas and Sorkin}{Achlioptas and
  Sorkin}{2000}]%
        {AchSo00}
\bibfield{author}{\bibinfo{person}{Dimitris Achlioptas} {and}
  \bibinfo{person}{Gregory~B. Sorkin}.} \bibinfo{year}{2000}\natexlab{}.
\newblock \showarticletitle{Optimal myopic algorithms for random {3-SAT}}. In
  \bibinfo{booktitle}{{\em Proceedings of the 41st Annual Symposium on
  Foundations of Computer Science (FOCS 2000)}}. \bibinfo{publisher}{IEEE
  Comput. Soc. Press, Los Alamitos, CA}, \bibinfo{pages}{590--600}.
\newblock


\bibitem[\protect\citeauthoryear{Alekhnovich, Hirsch, and Itsykson}{Alekhnovich
  et~al\mbox{.}}{2005}]%
        {AlekhnovichHI05}
\bibfield{author}{\bibinfo{person}{Michael Alekhnovich},
  \bibinfo{person}{Edward~A. Hirsch}, {and} \bibinfo{person}{Dmitry Itsykson}.}
  \bibinfo{year}{2005}\natexlab{}.
\newblock \showarticletitle{Exponential Lower Bounds for the Running Time of
  {DPLL} Algorithms on Satisfiable Formulas}.
\newblock \bibinfo{journal}{{\em Journal of Automated Reasoning\/}}
  \bibinfo{volume}{35}, \bibinfo{number}{1-3} (\bibinfo{year}{2005}),
  \bibinfo{pages}{51--72}.
\newblock


\bibitem[\protect\citeauthoryear{Biere and Sinz}{Biere and Sinz}{2006}]%
        {BiereS06}
\bibfield{author}{\bibinfo{person}{Armin Biere} {and} \bibinfo{person}{Carsten
  Sinz}.} \bibinfo{year}{2006}\natexlab{}.
\newblock \showarticletitle{Decomposing {SAT} Problems into Connected
  Components}.
\newblock \bibinfo{journal}{{\em JSAT, Journal on Satisfiability, Boolean
  Modeling and Computation\/}} \bibinfo{volume}{2}, \bibinfo{number}{1-4}
  (\bibinfo{year}{2006}), \bibinfo{pages}{201--208}.
\newblock


\bibitem[\protect\citeauthoryear{Binkele-Raible}{Binkele-Raible}{2010}]%
        {BinkeleRaible10}
\bibfield{author}{\bibinfo{person}{Daniel Binkele-Raible}.}
  \bibinfo{year}{2010}\natexlab{}.
\newblock {\em \bibinfo{title}{Amortized Analysis of Exponential Time and
  Parameterized Algorithms: Measure \& Conquer and Reference Search Trees}}.
\newblock \bibinfo{thesistype}{Ph.D. Dissertation}. \bibinfo{school}{University
  of Trier}.
\newblock


\bibitem[\protect\citeauthoryear{Bj{\"o}rklund, Husfeldt, and
  Koivisto}{Bj{\"o}rklund et~al\mbox{.}}{2009}]%
        {BjorklundHK09}
\bibfield{author}{\bibinfo{person}{Andreas Bj{\"o}rklund},
  \bibinfo{person}{Thore Husfeldt}, {and} \bibinfo{person}{Mikko Koivisto}.}
  \bibinfo{year}{2009}\natexlab{}.
\newblock \showarticletitle{Set partitioning via inclusion-exclusion}.
\newblock \bibinfo{journal}{{\it {SIAM} J. Comput.}} \bibinfo{volume}{39},
  \bibinfo{number}{2} (\bibinfo{year}{2009}), \bibinfo{pages}{546--563}.
\newblock


\bibitem[\protect\citeauthoryear{Bodlaender}{Bodlaender}{1998}]%
        {Bodlaender98}
\bibfield{author}{\bibinfo{person}{Hans~L. Bodlaender}.}
  \bibinfo{year}{1998}\natexlab{}.
\newblock \showarticletitle{A partial {$k$}-arboretum of graphs with bounded
  treewidth}.
\newblock \bibinfo{journal}{{\em Theoretical Computer Science\/}}
  \bibinfo{volume}{209}, \bibinfo{number}{1-2} (\bibinfo{year}{1998}),
  \bibinfo{pages}{1--45}.
\newblock


\bibitem[\protect\citeauthoryear{Bodlaender and Kloks}{Bodlaender and
  Kloks}{1996}]%
        {BodlaenderK96}
\bibfield{author}{\bibinfo{person}{Hans~L. Bodlaender} {and}
  \bibinfo{person}{Ton Kloks}.} \bibinfo{year}{1996}\natexlab{}.
\newblock \showarticletitle{Efficient and Constructive Algorithms for the
  Pathwidth and Treewidth of Graphs}.
\newblock \bibinfo{journal}{{\em Journal of Algorithms\/}}
  \bibinfo{volume}{21}, \bibinfo{number}{2} (\bibinfo{year}{1996}),
  \bibinfo{pages}{358--402}.
\newblock


\bibitem[\protect\citeauthoryear{Cochefert}{Cochefert}{2014}]%
        {Cochefert14}
\bibfield{author}{\bibinfo{person}{Manfred Cochefert}.}
  \bibinfo{year}{2014}\natexlab{}.
\newblock {\em \bibinfo{title}{Algorithmes exacts et exponentiels pour les
  probl{\`e}mes {NP}-difficiles sur les graphes et hypergraphes}}.
\newblock \bibinfo{thesistype}{Ph.D. Dissertation}.
  \bibinfo{school}{Universit{\'e} de Lorraine, Metz, France}.
\newblock
\newblock
\shownote{In French.}


\bibitem[\protect\citeauthoryear{Couturier}{Couturier}{2012}]%
        {Couturier12}
\bibfield{author}{\bibinfo{person}{Jean{-}Fran{\c{c}}ois Couturier}.}
  \bibinfo{year}{2012}\natexlab{}.
\newblock {\em \bibinfo{title}{Algorithmes exacts et exponentiels sur les
  graphes : {\'e}num{\'e}ration, comptage et optimisation}}.
\newblock \bibinfo{thesistype}{Ph.D. Dissertation}.
  \bibinfo{school}{Universit{\'e} de Lorraine, Metz, France}.
\newblock
\newblock
\shownote{In French.}


\bibitem[\protect\citeauthoryear{Couturier, Heggernes, van~'t Hof, and
  Kratsch}{Couturier et~al\mbox{.}}{2013}]%
        {CouturierHHK13}
\bibfield{author}{\bibinfo{person}{Jean{-}Fran{\c{c}}ois Couturier},
  \bibinfo{person}{Pinar Heggernes}, \bibinfo{person}{Pim van~'t Hof}, {and}
  \bibinfo{person}{Dieter Kratsch}.} \bibinfo{year}{2013}\natexlab{}.
\newblock \showarticletitle{Minimal dominating sets in graph classes:
  Combinatorial bounds and enumeration}.
\newblock \bibinfo{journal}{{\em Theoretical Computer Science\/}}
  \bibinfo{volume}{487} (\bibinfo{year}{2013}), \bibinfo{pages}{82--94}.
\newblock
\showDOI{%
\url{https://doi.org/10.1016/j.tcs.2013.03.026}}


\bibitem[\protect\citeauthoryear{Couturier, Letourneur, and Liedloff}{Couturier
  et~al\mbox{.}}{2015}]%
        {CouturierLL15}
\bibfield{author}{\bibinfo{person}{Jean{-}Fran{\c{c}}ois Couturier},
  \bibinfo{person}{Romain Letourneur}, {and} \bibinfo{person}{Mathieu
  Liedloff}.} \bibinfo{year}{2015}\natexlab{}.
\newblock \showarticletitle{On the number of minimal dominating sets on some
  graph classes}.
\newblock \bibinfo{journal}{{\em Theoretical Computer Science\/}}
  \bibinfo{volume}{562} (\bibinfo{year}{2015}), \bibinfo{pages}{634--642}.
\newblock
\showDOI{%
\url{https://doi.org/10.1016/j.tcs.2014.11.006}}


\bibitem[\protect\citeauthoryear{Dechter and Mateescu}{Dechter and
  Mateescu}{2007}]%
        {DechterM07}
\bibfield{author}{\bibinfo{person}{Rina Dechter} {and} \bibinfo{person}{Robert
  Mateescu}.} \bibinfo{year}{2007}\natexlab{}.
\newblock \showarticletitle{AND/OR search spaces for graphical models}.
\newblock \bibinfo{journal}{{\em Artificial Intelligence\/}}
  \bibinfo{volume}{171}, \bibinfo{number}{2-3} (\bibinfo{year}{2007}),
  \bibinfo{pages}{73--106}.
\newblock


\bibitem[\protect\citeauthoryear{Downey and Fellows}{Downey and
  Fellows}{2013}]%
        {DowneyF13}
\bibfield{author}{\bibinfo{person}{Rodney~G. Downey} {and}
  \bibinfo{person}{Michael~R. Fellows}.} \bibinfo{year}{2013}\natexlab{}.
\newblock \bibinfo{booktitle}{{\em Fundamentals of Parameterized Complexity}}.
\newblock \bibinfo{publisher}{Springer}.
\newblock


\bibitem[\protect\citeauthoryear{Edwards and McDermid}{Edwards and
  McDermid}{2015}]%
        {EdwardsM15}
\bibfield{author}{\bibinfo{person}{Keith Edwards} {and} \bibinfo{person}{Eric
  McDermid}.} \bibinfo{year}{2015}\natexlab{}.
\newblock \showarticletitle{A General Reduction Theorem with Applications to
  Pathwidth and the Complexity of {M}AX 2-{CSP}}.
\newblock \bibinfo{journal}{{\em Algorithmica\/}} \bibinfo{volume}{72},
  \bibinfo{number}{4} (\bibinfo{year}{2015}), \bibinfo{pages}{940--968}.
\newblock
\showDOI{%
\url{https://doi.org/10.1007/s00453-014-9883-7}}


\bibitem[\protect\citeauthoryear{Edwards}{Edwards}{2016}]%
        {Edwards16}
\bibfield{author}{\bibinfo{person}{Keith~J. Edwards}.}
  \bibinfo{year}{2016}\natexlab{}.
\newblock \showarticletitle{A faster polynomial-space algorithm for {M}ax
  2-{CSP}}.
\newblock \bibinfo{journal}{{\it J. Comput. System Sci.}} \bibinfo{volume}{82},
  \bibinfo{number}{3} (\bibinfo{year}{2016}), \bibinfo{pages}{536--550}.
\newblock


\bibitem[\protect\citeauthoryear{Eppstein}{Eppstein}{2006}]%
        {Eppstein06}
\bibfield{author}{\bibinfo{person}{David Eppstein}.}
  \bibinfo{year}{2006}\natexlab{}.
\newblock \showarticletitle{Quasiconvex analysis of multivariate recurrence
  equations for backtracking algorithms}.
\newblock \bibinfo{journal}{{\em ACM Transactions on Algorithms\/}}
  \bibinfo{volume}{2}, \bibinfo{number}{4} (\bibinfo{year}{2006}),
  \bibinfo{pages}{492--509}.
\newblock


\bibitem[\protect\citeauthoryear{Fomin, Gaspers, Saurabh, and Stepanov}{Fomin
  et~al\mbox{.}}{2009a}]%
        {FominGSS09}
\bibfield{author}{\bibinfo{person}{Fedor~V. Fomin}, \bibinfo{person}{Serge
  Gaspers}, \bibinfo{person}{Saket Saurabh}, {and} \bibinfo{person}{Alexey~A.
  Stepanov}.} \bibinfo{year}{2009}\natexlab{a}.
\newblock \showarticletitle{On Two Techniques of Combining Branching and
  Treewidth}.
\newblock \bibinfo{journal}{{\em Algorithmica\/}} \bibinfo{volume}{54},
  \bibinfo{number}{2} (\bibinfo{year}{2009}), \bibinfo{pages}{181--207}.
\newblock


\bibitem[\protect\citeauthoryear{Fomin, Grandoni, and Kratsch}{Fomin
  et~al\mbox{.}}{2009b}]%
        {FominGK09}
\bibfield{author}{\bibinfo{person}{Fedor~V. Fomin}, \bibinfo{person}{Fabrizio
  Grandoni}, {and} \bibinfo{person}{Dieter Kratsch}.}
  \bibinfo{year}{2009}\natexlab{b}.
\newblock \showarticletitle{A measure {\&} conquer approach for the analysis of
  exact algorithms}.
\newblock \bibinfo{journal}{{\it J. ACM}} \bibinfo{volume}{56},
  \bibinfo{number}{5} (\bibinfo{year}{2009}), \bibinfo{pages}{25:1--25:32}.
\newblock


\bibitem[\protect\citeauthoryear{Fomin, Grandoni, Kratsch, Lokshtanov, and
  Saurabh}{Fomin et~al\mbox{.}}{2013}]%
        {FominGKLS13}
\bibfield{author}{\bibinfo{person}{Fedor~V. Fomin}, \bibinfo{person}{Fabrizio
  Grandoni}, \bibinfo{person}{Dieter Kratsch}, \bibinfo{person}{Daniel
  Lokshtanov}, {and} \bibinfo{person}{Saket Saurabh}.}
  \bibinfo{year}{2013}\natexlab{}.
\newblock \showarticletitle{Computing Optimal Steiner Trees in Polynomial
  Space}.
\newblock \bibinfo{journal}{{\em Algorithmica\/}} \bibinfo{volume}{65},
  \bibinfo{number}{3} (\bibinfo{year}{2013}), \bibinfo{pages}{584--604}.
\newblock
\showDOI{%
\url{https://doi.org/10.1007/s00453-012-9612-z}}


\bibitem[\protect\citeauthoryear{Fomin, Grandoni, Lokshtanov, and
  Saurabh}{Fomin et~al\mbox{.}}{2012a}]%
        {FominGLS12}
\bibfield{author}{\bibinfo{person}{Fedor~V. Fomin}, \bibinfo{person}{Fabrizio
  Grandoni}, \bibinfo{person}{Daniel Lokshtanov}, {and} \bibinfo{person}{Saket
  Saurabh}.} \bibinfo{year}{2012}\natexlab{a}.
\newblock \showarticletitle{Sharp Separation and Applications to Exact and
  Parameterized Algorithms}.
\newblock \bibinfo{journal}{{\em Algorithmica\/}} \bibinfo{volume}{63},
  \bibinfo{number}{3} (\bibinfo{year}{2012}), \bibinfo{pages}{692--706}.
\newblock
\showDOI{%
\url{https://doi.org/10.1007/s00453-011-9555-9}}


\bibitem[\protect\citeauthoryear{Fomin, Grandoni, Pyatkin, and Stepanov}{Fomin
  et~al\mbox{.}}{2008}]%
        {FominGPS08}
\bibfield{author}{\bibinfo{person}{Fedor~V. Fomin}, \bibinfo{person}{Fabrizio
  Grandoni}, \bibinfo{person}{Artem~V. Pyatkin}, {and}
  \bibinfo{person}{Alexey~A. Stepanov}.} \bibinfo{year}{2008}\natexlab{}.
\newblock \showarticletitle{Combinatorial bounds via measure and conquer:
  Bounding minimal dominating sets and applications}.
\newblock \bibinfo{journal}{{\em {ACM} Transactions on Algorithms\/}}
  \bibinfo{volume}{5}, \bibinfo{number}{1} (\bibinfo{year}{2008}),
  \bibinfo{pages}{9:1--9:17}.
\newblock
\showDOI{%
\url{https://doi.org/10.1145/1435375.1435384}}


\bibitem[\protect\citeauthoryear{Fomin and H{\o}ie}{Fomin and H{\o}ie}{2006}]%
        {FominH06}
\bibfield{author}{\bibinfo{person}{Fedor~V. Fomin} {and}
  \bibinfo{person}{Kjartan H{\o}ie}.} \bibinfo{year}{2006}\natexlab{}.
\newblock \showarticletitle{Pathwidth of cubic graphs and exact algorithms}.
\newblock \bibinfo{journal}{{\it Inform. Process. Lett.}} \bibinfo{volume}{97},
  \bibinfo{number}{5} (\bibinfo{year}{2006}), \bibinfo{pages}{191--196}.
\newblock


\bibitem[\protect\citeauthoryear{Fomin and Kratsch}{Fomin and Kratsch}{2010}]%
        {FominK10}
\bibfield{author}{\bibinfo{person}{Fedor~V. Fomin} {and}
  \bibinfo{person}{Dieter Kratsch}.} \bibinfo{year}{2010}\natexlab{}.
\newblock \bibinfo{booktitle}{{\em Exact Exponential Algorithms}}.
\newblock \bibinfo{publisher}{Springer}.
\newblock


\bibitem[\protect\citeauthoryear{Fomin, Lokshtanov, Misra, and Saurabh}{Fomin
  et~al\mbox{.}}{2012b}]%
        {FominLMS12}
\bibfield{author}{\bibinfo{person}{Fedor~V. Fomin}, \bibinfo{person}{Daniel
  Lokshtanov}, \bibinfo{person}{Neeldhara Misra}, {and} \bibinfo{person}{Saket
  Saurabh}.} \bibinfo{year}{2012}\natexlab{b}.
\newblock \showarticletitle{Planar {F}-Deletion: Approximation, Kernelization
  and Optimal {FPT} Algorithms}. In \bibinfo{booktitle}{{\em Proceedings of the
  53rd Annual IEEE Symposium on Foundations of Computer Science (FOCS 2012)}}.
  \bibinfo{publisher}{IEEE Computer Society}, \bibinfo{pages}{470--479}.
\newblock


\bibitem[\protect\citeauthoryear{Freuder and Quinn}{Freuder and Quinn}{1985}]%
        {FreuderQ85}
\bibfield{author}{\bibinfo{person}{Eugene~C. Freuder} {and}
  \bibinfo{person}{Michael~J. Quinn}.} \bibinfo{year}{1985}\natexlab{}.
\newblock \showarticletitle{Taking Advantage of Stable Sets of Variables in
  Constraint Satisfaction Problems}.
\newblock In \bibinfo{booktitle}{{\em Proceedings of the 9th International
  Joint Conference on Artificial Intelligence (IJCAI 1985)}}.
  \bibinfo{publisher}{Morgan Kaufmann}, \bibinfo{pages}{1076--1078}.
\newblock


\bibitem[\protect\citeauthoryear{Gall}{Gall}{2014}]%
        {Gall14a}
\bibfield{author}{\bibinfo{person}{Fran{\c{c}}ois~Le Gall}.}
  \bibinfo{year}{2014}\natexlab{}.
\newblock \showarticletitle{Powers of tensors and fast matrix multiplication}.
  In \bibinfo{booktitle}{{\em Proceedings of the 39th International Symposium
  on Symbolic and Algebraic Computation ({ISSAC} 2014)}}.
  \bibinfo{publisher}{{ACM}}, \bibinfo{pages}{296--303}.
\newblock
\showDOI{%
\url{https://doi.org/10.1145/2608628.2608664}}


\bibitem[\protect\citeauthoryear{Gaspers}{Gaspers}{2010}]%
        {Gaspers10}
\bibfield{author}{\bibinfo{person}{Serge Gaspers}.}
  \bibinfo{year}{2010}\natexlab{}.
\newblock \bibinfo{booktitle}{{\em Exponential Time Algorithms - Structures,
  Measures, and Bounds}}.
\newblock \bibinfo{publisher}{VDM}. 206 pages.
\newblock
\showISBNx{978-3-639-21825-1}


\bibitem[\protect\citeauthoryear{Gaspers, Kratsch, Liedloff, and
  Todinca}{Gaspers et~al\mbox{.}}{2009}]%
        {GaspersKLT09}
\bibfield{author}{\bibinfo{person}{Serge Gaspers}, \bibinfo{person}{Dieter
  Kratsch}, \bibinfo{person}{Mathieu Liedloff}, {and} \bibinfo{person}{Ioan
  Todinca}.} \bibinfo{year}{2009}\natexlab{}.
\newblock \showarticletitle{Exponential time algorithms for the minimum
  dominating set problem on some graph classes}.
\newblock \bibinfo{journal}{{\em ACM Transactions on Algorithms\/}}
  \bibinfo{volume}{6}, \bibinfo{number}{1} (\bibinfo{year}{2009}),
  \bibinfo{pages}{9:1--9:21}.
\newblock
\showDOI{%
\url{https://doi.org/10.1145/1644015.1644024}}


\bibitem[\protect\citeauthoryear{Gaspers and Lee}{Gaspers and Lee}{2017}]%
        {GaspersL17}
\bibfield{author}{\bibinfo{person}{Serge Gaspers} {and}
  \bibinfo{person}{Edward~J. Lee}.} \bibinfo{year}{2017}\natexlab{}.
\newblock \showarticletitle{Faster Graph Coloring in Polynomial Space}. In
  \bibinfo{booktitle}{{\em Proceedings of the 23rd Annual International
  Computing and Combinatorics Conference (COCOON 2017)}} {\em
  (\bibinfo{series}{LNCS})}, Vol.~\bibinfo{volume}{10392}.
  \bibinfo{publisher}{Springer}.
\newblock


\bibitem[\protect\citeauthoryear{Gaspers and Sorkin}{Gaspers and
  Sorkin}{2012}]%
        {hybrid}
\bibfield{author}{\bibinfo{person}{Serge Gaspers} {and}
  \bibinfo{person}{Gregory~B. Sorkin}.} \bibinfo{year}{2012}\natexlab{}.
\newblock \showarticletitle{A universally fastest algorithm for {M}ax 2-{S}at,
  {M}ax 2-{CSP}, and everything in between}.
\newblock \bibinfo{journal}{{\it J. Comput. System Sci.}} \bibinfo{volume}{78},
  \bibinfo{number}{1} (\bibinfo{year}{2012}), \bibinfo{pages}{305--335}.
\newblock
\showCODEN{JCSSBM}
\showISSN{0022-0000}
\showDOI{%
\url{https://doi.org/10.1016/j.jcss.2011.05.010}}


\bibitem[\protect\citeauthoryear{Gaspers and Sorkin}{Gaspers and
  Sorkin}{2015}]%
        {GaspersS15}
\bibfield{author}{\bibinfo{person}{Serge Gaspers} {and}
  \bibinfo{person}{Gregory~B. Sorkin}.} \bibinfo{year}{2015}\natexlab{}.
\newblock \showarticletitle{Separate, Measure and Conquer: Faster
  Polynomial-Space Algorithms for {M}ax 2-{CSP} and Counting Dominating Sets}.
  In \bibinfo{booktitle}{{\em Proceedings of the 42nd International Colloquium
  on Automata, Languages, and Programming ({ICALP} 2015), Part {I}}} {\em
  (\bibinfo{series}{Lecture Notes in Computer Science})},
  Vol.~\bibinfo{volume}{9134}. \bibinfo{publisher}{Springer},
  \bibinfo{pages}{567--579}.
\newblock
\showDOI{%
\url{https://doi.org/10.1007/978-3-662-47672-7_46}}


\bibitem[\protect\citeauthoryear{Gaspers and Szeider}{Gaspers and
  Szeider}{2012}]%
        {GaspersS12}
\bibfield{author}{\bibinfo{person}{Serge Gaspers} {and} \bibinfo{person}{Stefan
  Szeider}.} \bibinfo{year}{2012}\natexlab{}.
\newblock \showarticletitle{Backdoors to Satisfaction}. In
  \bibinfo{booktitle}{{\em The Multivariate Algorithmic Revolution and Beyond -
  Essays Dedicated to Michael R. Fellows on the Occasion of His 60th Birthday}}
  {\em (\bibinfo{series}{Lecture Notes in Computer Science})},
  Vol.~\bibinfo{volume}{7370}. \bibinfo{publisher}{Springer},
  \bibinfo{pages}{287--317}.
\newblock
\showDOI{%
\url{https://doi.org/10.1007/978-3-642-30891-8_15}}


\bibitem[\protect\citeauthoryear{Gaspers and Szeider}{Gaspers and
  Szeider}{2013}]%
        {GaspersS13}
\bibfield{author}{\bibinfo{person}{Serge Gaspers} {and} \bibinfo{person}{Stefan
  Szeider}.} \bibinfo{year}{2013}\natexlab{}.
\newblock \showarticletitle{Strong Backdoors to Bounded Treewidth {SAT}}. In
  \bibinfo{booktitle}{{\em Proceedings of the 54rd Annual IEEE Symposium on
  Foundations of Computer Science (FOCS 2013)}}. \bibinfo{publisher}{IEEE
  Computer Society}, \bibinfo{pages}{489--498}.
\newblock


\bibitem[\protect\citeauthoryear{Golovach, Heggernes, and Kratsch}{Golovach
  et~al\mbox{.}}{2016}]%
        {GolovachHK16}
\bibfield{author}{\bibinfo{person}{Petr~A. Golovach}, \bibinfo{person}{Pinar
  Heggernes}, {and} \bibinfo{person}{Dieter Kratsch}.}
  \bibinfo{year}{2016}\natexlab{}.
\newblock \showarticletitle{Enumerating minimal connected dominating sets in
  graphs of bounded chordality}.
\newblock \bibinfo{journal}{{\em Theoretical Computer Science\/}}
  \bibinfo{volume}{630} (\bibinfo{year}{2016}), \bibinfo{pages}{63--75}.
\newblock
\showDOI{%
\url{https://doi.org/10.1016/j.tcs.2016.03.026}}


\bibitem[\protect\citeauthoryear{Gottlob, Leone, and Scarcello}{Gottlob
  et~al\mbox{.}}{2000}]%
        {GottlobLS00}
\bibfield{author}{\bibinfo{person}{Georg Gottlob}, \bibinfo{person}{Nicola
  Leone}, {and} \bibinfo{person}{Francesco Scarcello}.}
  \bibinfo{year}{2000}\natexlab{}.
\newblock \showarticletitle{A comparison of structural {CSP} decomposition
  methods}.
\newblock \bibinfo{journal}{{\em Artificial Intelligence\/}}
  \bibinfo{volume}{124}, \bibinfo{number}{2} (\bibinfo{year}{2000}),
  \bibinfo{pages}{243--282}.
\newblock


\bibitem[\protect\citeauthoryear{Grandoni}{Grandoni}{2004}]%
        {Grandoni04}
\bibfield{author}{\bibinfo{person}{Fabrizio Grandoni}.}
  \bibinfo{year}{2004}\natexlab{}.
\newblock {\em \bibinfo{title}{Exact algorithms for hard graph problems}}.
\newblock \bibinfo{thesistype}{Ph.D. Dissertation}. \bibinfo{school}{University
  of Rome Tor Vergata, Italy}.
\newblock


\bibitem[\protect\citeauthoryear{Grandoni}{Grandoni}{2006}]%
        {Grandoni06}
\bibfield{author}{\bibinfo{person}{Fabrizio Grandoni}.}
  \bibinfo{year}{2006}\natexlab{}.
\newblock \showarticletitle{A note on the complexity of minimum dominating
  set}.
\newblock \bibinfo{journal}{{\em Journal of Discrete Algorithms\/}}
  \bibinfo{volume}{4}, \bibinfo{number}{2} (\bibinfo{year}{2006}),
  \bibinfo{pages}{209--214}.
\newblock
\showDOI{%
\url{https://doi.org/10.1016/j.jda.2005.03.002}}


\bibitem[\protect\citeauthoryear{Iwata}{Iwata}{2012}]%
        {Iwata11}
\bibfield{author}{\bibinfo{person}{Yoichi Iwata}.}
  \bibinfo{year}{2012}\natexlab{}.
\newblock \showarticletitle{A Faster Algorithm for Dominating Set Analyzed by
  the Potential Method}. In \bibinfo{booktitle}{{\em Proceedings of the 6th
  International Symposium on Parameterized and Exact Computation (IPEC 2011)}}
  {\em (\bibinfo{series}{Lecture Notes in Computer Science})},
  Vol.~\bibinfo{volume}{7112}. \bibinfo{publisher}{Springer},
  \bibinfo{pages}{41--54}.
\newblock


\bibitem[\protect\citeauthoryear{Karp}{Karp}{1982}]%
        {Karp82}
\bibfield{author}{\bibinfo{person}{Richard~M. Karp}.}
  \bibinfo{year}{1982}\natexlab{}.
\newblock \showarticletitle{Dynamic programming meets the principle of
  inclusion and exclusion}.
\newblock \bibinfo{journal}{{\em Operations Research Letters\/}}
  \bibinfo{volume}{1}, \bibinfo{number}{2} (\bibinfo{year}{1982}),
  \bibinfo{pages}{49--51}.
\newblock


\bibitem[\protect\citeauthoryear{Kim, Langer, Paul, Reidl, Rossmanith, Sau, and
  Sikdar}{Kim et~al\mbox{.}}{2013}]%
        {KimLPRRSS13}
\bibfield{author}{\bibinfo{person}{Eun~Jung Kim}, \bibinfo{person}{Alexander
  Langer}, \bibinfo{person}{Christophe Paul}, \bibinfo{person}{Felix Reidl},
  \bibinfo{person}{Peter Rossmanith}, \bibinfo{person}{Ignasi Sau}, {and}
  \bibinfo{person}{Somnath Sikdar}.} \bibinfo{year}{2013}\natexlab{}.
\newblock \showarticletitle{Linear Kernels and Single-Exponential Algorithms
  via Protrusion Decompositions}. In \bibinfo{booktitle}{{\em Proceedings of
  the 40th International Colloquium on Automata, Languages, and Programming
  (ICALP 2013)}} {\em (\bibinfo{series}{Lecture Notes in Computer Science})},
  Vol.~\bibinfo{volume}{7965}. \bibinfo{publisher}{Springer},
  \bibinfo{pages}{613--624}.
\newblock


\bibitem[\protect\citeauthoryear{Kneis, M{\"o}lle, Richter, and
  Rossmanith}{Kneis et~al\mbox{.}}{2005}]%
        {KneisMRR05}
\bibfield{author}{\bibinfo{person}{Joachim Kneis}, \bibinfo{person}{Daniel
  M{\"o}lle}, \bibinfo{person}{Stefan Richter}, {and} \bibinfo{person}{Peter
  Rossmanith}.} \bibinfo{year}{2005}\natexlab{}.
\newblock \showarticletitle{Algorithms Based on the Treewidth of Sparse
  Graphs}.
\newblock In \bibinfo{booktitle}{{\em Proceedings of the 31st International
  Workshop on Graph-Theoretic Concepts in Computer Science (WG 2005)}}.
  \bibinfo{series}{Lecture Notes in Computer Science},
  Vol.~\bibinfo{volume}{3787}. \bibinfo{publisher}{Springer},
  \bibinfo{address}{Berlin}, \bibinfo{pages}{385--396}.
\newblock


\bibitem[\protect\citeauthoryear{Kostochka and Melnikov}{Kostochka and
  Melnikov}{1993}]%
        {KostochkaM93}
\bibfield{author}{\bibinfo{person}{A.~V. Kostochka} {and}
  \bibinfo{person}{L.~S. Melnikov}.} \bibinfo{year}{1993}\natexlab{}.
\newblock \showarticletitle{On a Lower Bound for the Isoperimetric Number of
  Cubic Graphs}. In \bibinfo{booktitle}{{\em Probablilistic Methods in Discrete
  Mathematics: Proceedings of the Third Petrozavodsk Conference}}.
  \bibinfo{publisher}{TVP/VSP}, \bibinfo{pages}{251--265}.
\newblock


\bibitem[\protect\citeauthoryear{Krzywkowski}{Krzywkowski}{2013}]%
        {Krzywkowski13}
\bibfield{author}{\bibinfo{person}{Marcin Krzywkowski}.}
  \bibinfo{year}{2013}\natexlab{}.
\newblock \showarticletitle{Trees having many minimal dominating sets}.
\newblock \bibinfo{journal}{{\it Inform. Process. Lett.}}
  \bibinfo{volume}{113}, \bibinfo{number}{8} (\bibinfo{year}{2013}),
  \bibinfo{pages}{276--279}.
\newblock
\showDOI{%
\url{https://doi.org/10.1016/j.ipl.2013.01.020}}


\bibitem[\protect\citeauthoryear{Li and van Beek}{Li and van Beek}{2004}]%
        {LiB04}
\bibfield{author}{\bibinfo{person}{Wei Li} {and} \bibinfo{person}{Peter van
  Beek}.} \bibinfo{year}{2004}\natexlab{}.
\newblock \showarticletitle{Guiding Real-World {SAT} Solving with Dynamic
  Hypergraph Separator Decomposition}.
\newblock In \bibinfo{booktitle}{{\em Proceedings of the 16th IEEE
  International Conference on Tools with Artificial Intelligence (ICTAI
  2004)}}. \bibinfo{publisher}{IEEE Computer Society},
  \bibinfo{pages}{542--548}.
\newblock


\bibitem[\protect\citeauthoryear{Liedloff}{Liedloff}{2007}]%
        {Liedloff07}
\bibfield{author}{\bibinfo{person}{Mathieu Liedloff}.}
  \bibinfo{year}{2007}\natexlab{}.
\newblock {\em \bibinfo{title}{Algorithmes exacts et exponentiels pour les
  probl\`emes {NP}-difficiles : domination, variantes et g\'en\'eralisations}}.
\newblock \bibinfo{thesistype}{Ph.D. Dissertation}. \bibinfo{school}{University
  of Metz, France}.
\newblock
\newblock
\shownote{in French.}


\bibitem[\protect\citeauthoryear{Liedloff}{Liedloff}{2008}]%
        {Liedloff08}
\bibfield{author}{\bibinfo{person}{Mathieu Liedloff}.}
  \bibinfo{year}{2008}\natexlab{}.
\newblock \showarticletitle{Finding a dominating set on bipartite graphs}.
\newblock \bibinfo{journal}{{\it Inform. Process. Lett.}}
  \bibinfo{volume}{107} (\bibinfo{year}{2008}), \bibinfo{pages}{154--157}.
\newblock


\bibitem[\protect\citeauthoryear{Lipton and Tarjan}{Lipton and Tarjan}{1977}]%
        {LiptonT77}
\bibfield{author}{\bibinfo{person}{Richard~J. Lipton} {and}
  \bibinfo{person}{Robert~Endre Tarjan}.} \bibinfo{year}{1977}\natexlab{}.
\newblock \showarticletitle{Application of a Planar Separator Theorem}. In
  \bibinfo{booktitle}{{\em Proceedings of the 18th Annual Symposium on
  Foundations of Computer Science (FOCS 1977)}}. \bibinfo{publisher}{IEEE
  Computer Society}, \bibinfo{pages}{162--170}.
\newblock


\bibitem[\protect\citeauthoryear{Lokshtanov and Nederlof}{Lokshtanov and
  Nederlof}{2010}]%
        {LokshtanovN10}
\bibfield{author}{\bibinfo{person}{Daniel Lokshtanov} {and}
  \bibinfo{person}{Jesper Nederlof}.} \bibinfo{year}{2010}\natexlab{}.
\newblock \showarticletitle{Saving space by algebraization}. In
  \bibinfo{booktitle}{{\em Proceedings of the 42nd {ACM} Symposium on Theory of
  Computing ({STOC} 2010)}}. \bibinfo{publisher}{{ACM}},
  \bibinfo{pages}{321--330}.
\newblock
\showDOI{%
\url{https://doi.org/10.1145/1806689.1806735}}


\bibitem[\protect\citeauthoryear{Monien and Preis}{Monien and Preis}{2001}]%
        {MonienP01}
\bibfield{author}{\bibinfo{person}{Burkhard Monien} {and}
  \bibinfo{person}{Robert Preis}.} \bibinfo{year}{2001}\natexlab{}.
\newblock \showarticletitle{Upper Bounds on the Bisection Width of 3- and
  4-Regular Graphs}. In \bibinfo{booktitle}{{\em Proceedings of the 26th
  International Symposium on Mathematical Foundations of Computer Science (MFCS
  2001)}} {\em (\bibinfo{series}{Lecture Notes in Computer Science})},
  Vol.~\bibinfo{volume}{2136}. \bibinfo{publisher}{Springer},
  \bibinfo{pages}{524--536}.
\newblock


\bibitem[\protect\citeauthoryear{Monien and Preis}{Monien and Preis}{2006}]%
        {MonienP06}
\bibfield{author}{\bibinfo{person}{Burkhard Monien} {and}
  \bibinfo{person}{Robert Preis}.} \bibinfo{year}{2006}\natexlab{}.
\newblock \showarticletitle{Upper bounds on the bisection width of 3- and
  4-regular graphs}.
\newblock \bibinfo{journal}{{\em Journal of Discrete Algorithms\/}}
  \bibinfo{volume}{4}, \bibinfo{number}{3} (\bibinfo{year}{2006}),
  \bibinfo{pages}{475--498}.
\newblock
\showISSN{1570-8667}
\showDOI{%
\url{https://doi.org/10.1016/j.jda.2005.12.009}}


\bibitem[\protect\citeauthoryear{Nederlof}{Nederlof}{2013}]%
        {Nederlof13}
\bibfield{author}{\bibinfo{person}{Jesper Nederlof}.}
  \bibinfo{year}{2013}\natexlab{}.
\newblock \showarticletitle{Fast Polynomial-Space Algorithms Using
  Inclusion-Exclusion}.
\newblock \bibinfo{journal}{{\em Algorithmica\/}} \bibinfo{volume}{65},
  \bibinfo{number}{4} (\bibinfo{year}{2013}), \bibinfo{pages}{868--884}.
\newblock
\showDOI{%
\url{https://doi.org/10.1007/s00453-012-9630-x}}


\bibitem[\protect\citeauthoryear{Nederlof and van Rooij}{Nederlof and van
  Rooij}{2010}]%
        {NederlofR10}
\bibfield{author}{\bibinfo{person}{Jesper Nederlof} {and}
  \bibinfo{person}{Johan M.~M. van Rooij}.} \bibinfo{year}{2010}\natexlab{}.
\newblock \showarticletitle{Inclusion/Exclusion Branching for Partial
  Dominating Set and Set Splitting}. In \bibinfo{booktitle}{{\em Proceedings of
  the 5th International Symposium on Parameterized and Exact Computation (IPEC
  2010)}} {\em (\bibinfo{series}{Lecture Notes in Computer Science})},
  Vol.~\bibinfo{volume}{6478}. \bibinfo{publisher}{Springer},
  \bibinfo{pages}{204--215}.
\newblock
\showDOI{%
\url{https://doi.org/10.1007/978-3-642-17493-3_20}}


\bibitem[\protect\citeauthoryear{Nederlof, van Rooij, and van Dijk}{Nederlof
  et~al\mbox{.}}{2014}]%
        {NederlofRD14}
\bibfield{author}{\bibinfo{person}{Jesper Nederlof}, \bibinfo{person}{Johan
  M.~M. van Rooij}, {and} \bibinfo{person}{Thomas~C. van Dijk}.}
  \bibinfo{year}{2014}\natexlab{}.
\newblock \showarticletitle{Inclusion/Exclusion Meets Measure and Conquer}.
\newblock \bibinfo{journal}{{\em Algorithmica\/}} \bibinfo{volume}{69},
  \bibinfo{number}{3} (\bibinfo{year}{2014}), \bibinfo{pages}{685--740}.
\newblock
\showDOI{%
\url{https://doi.org/10.1007/s00453-013-9759-2}}


\bibitem[\protect\citeauthoryear{Robertson and Seymour}{Robertson and
  Seymour}{1986}]%
        {RobertsonS86}
\bibfield{author}{\bibinfo{person}{Neil Robertson} {and}
  \bibinfo{person}{Paul~D. Seymour}.} \bibinfo{year}{1986}\natexlab{}.
\newblock \showarticletitle{Graph minors. {II}. {A}lgorithmic aspects of
  tree-width}.
\newblock \bibinfo{journal}{{\em Journal of Algorithms\/}} \bibinfo{volume}{7},
  \bibinfo{number}{3} (\bibinfo{year}{1986}), \bibinfo{pages}{309--322}.
\newblock


\bibitem[\protect\citeauthoryear{Robson}{Robson}{1986}]%
        {Robson86}
\bibfield{author}{\bibinfo{person}{John~M. Robson}.}
  \bibinfo{year}{1986}\natexlab{}.
\newblock \showarticletitle{Algorithms for maximum independent sets}.
\newblock \bibinfo{journal}{{\em Journal of Algorithms\/}} \bibinfo{volume}{7},
  \bibinfo{number}{3} (\bibinfo{year}{1986}), \bibinfo{pages}{425--440}.
\newblock


\bibitem[\protect\citeauthoryear{Scott and Sorkin}{Scott and Sorkin}{2006}]%
        {linear}
\bibfield{author}{\bibinfo{person}{Alexander~D. Scott} {and}
  \bibinfo{person}{Gregory~B. Sorkin}.} \bibinfo{year}{2006}\natexlab{}.
\newblock \showarticletitle{Solving Sparse Random Instances of {Max Cut} and
  {Max 2-CSP} in Linear Expected Time}.
\newblock \bibinfo{journal}{{\em Combinatorics Probability and Computing\/}}
  \bibinfo{volume}{15}, \bibinfo{number}{1-2} (\bibinfo{year}{2006}),
  \bibinfo{pages}{281--315}.
\newblock
\showISSN{0963-5483}
\showDOI{%
\url{https://doi.org/10.1017/S096354830500725X}}


\bibitem[\protect\citeauthoryear{Scott and Sorkin}{Scott and Sorkin}{2007}]%
        {ScottS07}
\bibfield{author}{\bibinfo{person}{Alexander~D. Scott} {and}
  \bibinfo{person}{Gregory~B. Sorkin}.} \bibinfo{year}{2007}\natexlab{}.
\newblock \showarticletitle{Linear-Programming Design and Analysis of Fast
  Algorithms for {Max 2-CSP}}.
\newblock \bibinfo{journal}{{\em Discrete Optimization\/}} \bibinfo{volume}{4},
  \bibinfo{number}{3-4} (\bibinfo{year}{2007}), \bibinfo{pages}{260--287}.
\newblock
\showDOI{%
\url{https://doi.org/10.1016/j.disopt.2007.08.001}}


\bibitem[\protect\citeauthoryear{Scott and Sorkin}{Scott and Sorkin}{2009}]%
        {PCSP}
\bibfield{author}{\bibinfo{person}{Alexander~D. Scott} {and}
  \bibinfo{person}{Gregory~B. Sorkin}.} \bibinfo{year}{2009}\natexlab{}.
\newblock \showarticletitle{Polynomial constraint satisfaction problems, graph
  bisection, and the {I}sing partition function}.
\newblock \bibinfo{journal}{{\em ACM Transactions on Algorithms\/}}
  \bibinfo{volume}{5}, \bibinfo{number}{4} (\bibinfo{year}{2009}),
  \bibinfo{pages}{Art. 45, 27}.
\newblock
\showISSN{1549-6325}
\showDOI{%
\url{https://doi.org/10.1145/1597036.1597049}}


\bibitem[\protect\citeauthoryear{Stepanov}{Stepanov}{2008}]%
        {Stepanov08}
\bibfield{author}{\bibinfo{person}{Alexey~A. Stepanov}.}
  \bibinfo{year}{2008}\natexlab{}.
\newblock {\em \bibinfo{title}{Exact algorithms for hard listing, counting and
  decision problems}}.
\newblock \bibinfo{thesistype}{Ph.D. Dissertation}. \bibinfo{school}{University
  of Bergen}, \bibinfo{address}{Norway}.
\newblock


\bibitem[\protect\citeauthoryear{van Rooij}{van Rooij}{2010}]%
        {Rooij10}
\bibfield{author}{\bibinfo{person}{Johan M.~M. van Rooij}.}
  \bibinfo{year}{2010}\natexlab{}.
\newblock \showarticletitle{Polynomial Space Algorithms for Counting Dominating
  Sets and the Domatic Number}. In \bibinfo{booktitle}{{\em Proceedings of the
  7th International Conference on Algorithms and Complexity (CIAC 2010)}} {\em
  (\bibinfo{series}{Lecture Notes in Computer Science})},
  Vol.~\bibinfo{volume}{6078}. \bibinfo{publisher}{Springer},
  \bibinfo{pages}{73--84}.
\newblock


\bibitem[\protect\citeauthoryear{van Rooij}{van Rooij}{2011}]%
        {Rooij11}
\bibfield{author}{\bibinfo{person}{Johan M.~M. van Rooij}.}
  \bibinfo{year}{2011}\natexlab{}.
\newblock {\em \bibinfo{title}{Exact Exponential-Time Algorithms for Domination
  Problems in Graphs}}.
\newblock \bibinfo{thesistype}{Ph.D. Dissertation}.
  \bibinfo{school}{Universiteit Utrecht}.
\newblock


\bibitem[\protect\citeauthoryear{van Rooij and Bodlaender}{van Rooij and
  Bodlaender}{2011}]%
        {RooijB11}
\bibfield{author}{\bibinfo{person}{Johan M.~M. van Rooij} {and}
  \bibinfo{person}{Hans~L. Bodlaender}.} \bibinfo{year}{2011}\natexlab{}.
\newblock \showarticletitle{Exact algorithms for dominating set}.
\newblock \bibinfo{journal}{{\em Discrete Applied Mathematics\/}}
  \bibinfo{volume}{159}, \bibinfo{number}{17} (\bibinfo{year}{2011}),
  \bibinfo{pages}{2147--2164}.
\newblock


\bibitem[\protect\citeauthoryear{van Rooij, Bodlaender, and Rossmanith}{van
  Rooij et~al\mbox{.}}{2009a}]%
        {RooijBR09}
\bibfield{author}{\bibinfo{person}{Johan M.~M. van Rooij},
  \bibinfo{person}{Hans~L. Bodlaender}, {and} \bibinfo{person}{Peter
  Rossmanith}.} \bibinfo{year}{2009}\natexlab{a}.
\newblock \showarticletitle{Dynamic Programming on Tree Decompositions Using
  Generalised Fast Subset Convolution}. In \bibinfo{booktitle}{{\em Proceedings
  of the 17th Annual European Symposium on Algorithms (ESA 2009)}} {\em
  (\bibinfo{series}{Lecture Notes in Computer Science})},
  Vol.~\bibinfo{volume}{5757}. \bibinfo{publisher}{Springer},
  \bibinfo{pages}{566--577}.
\newblock


\bibitem[\protect\citeauthoryear{van Rooij, Nederlof, and van Dijk}{van Rooij
  et~al\mbox{.}}{2009b}]%
        {RooijND09}
\bibfield{author}{\bibinfo{person}{Johan M.~M. van Rooij},
  \bibinfo{person}{Jesper Nederlof}, {and} \bibinfo{person}{Thomas~C. van
  Dijk}.} \bibinfo{year}{2009}\natexlab{b}.
\newblock \showarticletitle{Inclusion/Exclusion Meets Measure and Conquer}. In
  \bibinfo{booktitle}{{\em Proceedings of the 17th Annual European Symposium on
  Algorithms (ESA 2009)}} {\em (\bibinfo{series}{Lecture Notes in Computer
  Science})}, Vol.~\bibinfo{volume}{5757}. \bibinfo{publisher}{Springer},
  \bibinfo{pages}{554--565}.
\newblock


\bibitem[\protect\citeauthoryear{Wahlstr{\"o}m}{Wahlstr{\"o}m}{2004}]%
        {Wahlstrom04}
\bibfield{author}{\bibinfo{person}{Magnus Wahlstr{\"o}m}.}
  \bibinfo{year}{2004}\natexlab{}.
\newblock \showarticletitle{Exact algorithms for finding minimum transversals
  in rank-3 hypergraphs}.
\newblock \bibinfo{journal}{{\em Journal of Algorithms\/}}
  \bibinfo{volume}{51}, \bibinfo{number}{2} (\bibinfo{year}{2004}),
  \bibinfo{pages}{107--121}.
\newblock


\bibitem[\protect\citeauthoryear{Wahlstr{\"o}m}{Wahlstr{\"o}m}{2007}]%
        {Wahlstrom07}
\bibfield{author}{\bibinfo{person}{Magnus Wahlstr{\"o}m}.}
  \bibinfo{year}{2007}\natexlab{}.
\newblock {\em \bibinfo{title}{Algorithms, measures and upper bounds for
  satisfiability and related problems}}.
\newblock \bibinfo{thesistype}{Ph.D. Dissertation}. \bibinfo{school}{Linköping
  University}, \bibinfo{address}{Sweden}.
\newblock


\bibitem[\protect\citeauthoryear{Williams}{Williams}{2005}]%
        {Williams05}
\bibfield{author}{\bibinfo{person}{Ryan Williams}.}
  \bibinfo{year}{2005}\natexlab{}.
\newblock \showarticletitle{A new algorithm for optimal 2-constraint
  satisfaction and its implications}.
\newblock \bibinfo{journal}{{\em Theoretical Computer Science\/}}
  \bibinfo{volume}{348}, \bibinfo{number}{2-3} (\bibinfo{year}{2005}),
  \bibinfo{pages}{357--365}.
\newblock


\bibitem[\protect\citeauthoryear{Woeginger}{Woeginger}{2004}]%
        {Woeginger04}
\bibfield{author}{\bibinfo{person}{Gerhard~J. Woeginger}.}
  \bibinfo{year}{2004}\natexlab{}.
\newblock \showarticletitle{Space and Time Complexity of Exact Algorithms: Some
  Open Problems}. In \bibinfo{booktitle}{{\em Proceedings of the 1st
  International Workshop on Parameterized and Exact Computation ({IWPEC}
  2004)}} {\em (\bibinfo{series}{Lecture Notes in Computer Science})},
  Vol.~\bibinfo{volume}{3162}. \bibinfo{publisher}{Springer},
  \bibinfo{pages}{281--290}.
\newblock
\showDOI{%
\url{https://doi.org/10.1007/978-3-540-28639-4_25}}


\bibitem[\protect\citeauthoryear{Xiao and Nagamochi}{Xiao and
  Nagamochi}{2013}]%
        {XiaoN13}
\bibfield{author}{\bibinfo{person}{Mingyu Xiao} {and} \bibinfo{person}{Hiroshi
  Nagamochi}.} \bibinfo{year}{2013}\natexlab{}.
\newblock \showarticletitle{Confining sets and avoiding bottleneck cases: A
  simple maximum independent set algorithm in degree-3 graphs}.
\newblock \bibinfo{journal}{{\em Theoretical Computer Science\/}}
  \bibinfo{volume}{469} (\bibinfo{year}{2013}), \bibinfo{pages}{92--104}.
\newblock


\end{thebibliography}

\end{document}